\definecolor{MyBlue}{rgb}{0.12, 0.12, 0.76}
\definecolor{MyBlue}{rgb}{0.12, 0.12, 0.76}
\let\oldReturn\Return
\renewcommand{\Return}{\State\oldReturn}
\DeclareMathOperator*{\argmax}{arg\,max}
\newenvironment{protocol}[1][htb]{%
    \renewcommand{\ALG@name}{Protocol}
   \begin{algorithm}[#1]%
  }{\end{algorithm}}
\newtheorem{condition}{Condition}[section]
\newtheorem{lemma}{Lemma}[section] 
\newtheorem{theorem}{Theorem}[section]
\newtheorem{definition}{Definition}[section]
\newcommand\protocolspacing{\vspace{.05 in}} 
\newcommand\valsize{v^{size}}
\newcommand{\T}{\mathcal{T}}
\newcommand\cals{\mathcal{S}}
\newcommand\bfc{\mathbf{c}}
\newcommand\properties{$\{\text{EF, Prop}\}$\xspace}
\newcommand\equ{\textsc{Equality}\xspace} 
\newcommand\disj{\textsc{Disjointness}\xspace}
\newcommand\fd{\textsc{Fair Division}\xspace}
\newcommand\overi{\overline{i}}
\newcommand\mss{M\backslash S}
\newcommand{\thickhline}{%
    \noalign {\ifnum 0=`}\fi \hrule height 1.4pt
    \futurelet \reserved@a \@xhline
}
\newcolumntype{"}{@{\hskip\tabcolsep\vrule width 1.4pt\hskip\tabcolsep}}
\begin{document}

\title{Communication Complexity of Discrete Fair Division}  
\author{Benjamin Plaut \and Tim Roughgarden}
\date{\{bplaut, tim\}@cs.stanford.edu\\ Stanford University}

\maketitle

\begin{abstract}
We initiate the study of the communication complexity of fair
division with indivisible goods. We focus on some of the most well-studied fairness notions (envy-freeness,
proportionality, and approximations thereof) and valuation classes
(submodular, subadditive and unrestricted). Within these parameters, our results completely resolve whether the communication complexity of
computing a fair allocation (or determining that none exist) is
polynomial or exponential (in the number of goods), for every combination of fairness notion, valuation class,
and number of players, for both deterministic and randomized protocols.
\end{abstract}



\section{Introduction}\label{sec:intro}

Fair division studies the problem of distributing resources among competing
players in a ``fair" way, where each player has equal claim to the
resources. There are many different notions of fairness, with the two
most prominent being envy-freeness and proportionality. An allocation
is \emph{envy-free} (EF) if each player's value for her own bundle is
at least as much as her value for any other player's bundle. An
allocation is \emph{proportional} (Prop) if each player's value for
her bundle is at least $1/n$ of her value for the entire set of items,
where $n$ is the number of players.

In \emph{discrete} fair division, the resources consist of indivisible
items: each item must go to a single player and cannot be split among
players. Unfortunately, neither envy-freeness nor proportionality can
be guaranteed in this setting. Consider two players and a single item:
one must receive the item while the other receives nothing, so the
allocation is neither envy-free nor proportional. We study the problem
of finding an envy-free (or proportional) allocation, or showing that
none exists.
 
We also consider approximate versions of these properties: for
$c \in [0,1]$, an allocation is $c$-EF if each player's value for her
own bundle is at least $c$ times her value for any other player's bundle,
and an allocation is $c$-Prop if each player's value for her bundle is at least $c/n$ of her value for the entire set of items. Thus 1-EF and 1-Prop are
standard envy-freeness and proportionality, respectively. The same
counterexample of two players and a single item shows that these
approximate properties also cannot be guaranteed for any $c > 0$.\footnote{We generally assume that $c > 0$, since every allocation is both 0-EF and 0-Prop.}

From a computational complexity viewpoint,
this problem is hard 
even when player valuations are \emph{additive}, meaning that a
player's value for a set of items is the sum of her values for the
individual items. For two players with identical additive valuations,
determining whether a 1-EF or 1-Prop exists is NP-hard, via a simple
reduction from the partition problem~\cite{Bouveret08:Efficiency}.

It is arguably even more natural to study fair division from a
communication complexity perspective, where there is no centralized
authority and each player initially knows only her own preferences.

When players have \emph{combinatorial valuations}, their values for a
bundle cannot just be decomposed into their values for the individual
items.\footnote{An increasing amount of
  research in fair division considers such combinatorial valuations
  (e.g.~\cite{Plaut18:Almost,Barman17:Approximation, Ghodsi17:Fair}).} In particular, for $m$ items, a combinatorial valuation may
contain $2^m$ different values.
The primary question is to determine whether players need to exchange
an exponential amount of information to compute a fair allocation, or
whether the problem can be solved using only polynomial
communication.  This question has not been studied previously, despite
the rich literature on communication complexity in combinatorial
auctions
(e.g.~\cite{Nisan06:Communication,Nisan02:Set,Dobzinski10:Approximate}).

Our paper can also be thought of as formally studying the difficulty of eliciting different classes of valuations from a fair division standpoint. Additive valuations are typically used in practice (for example on the non-profit website Spliddit~\cite{Goldman15:Spliddit}) because each player need only report one value for each item to specify the entire valuation. Richer combinatorial valuations allow for more expressiveness, but may be more difficult to elicit. Our work formally studies the tradeoffs between these factors.


\subsection{Our results}\label{sec:contribution}

We study the following question: ``Given $n$ players and $m$ items, a fairness property $P \in$ \properties, and a constant $c \in [0,1]$, how much communication is required to either find a $c$-$P$ allocation, or show that none exists?"\footnote{We only consider a single $c$-$P$ property at a time: we do not consider satisfying envy-freeness and proportionality simultaneously. For subadditive valuations, $c$-EF implies $c$-Prop, but $c$-EF and $c$-Prop are incomparable for general valuations.} We are primarily interested in whether this can be done with communication polynomial in $m$. The answer to this question will depend on $n$, $P$, and $c$. We also consider when player valuations are restricted to be submodular or subadditive, as well as deterministic vs. randomized protocols.

All in all, we give a full characterization of the communication complexity for every combination of the following five parameters:

\begin{enumerate}
\item Number of players $n$
\item Valuation class: submodular, subadditive, or general
\item Each $P \in$ \properties
\item Every constant $c \in [0,1]$
\item Deterministic or randomized communication complexity
\end{enumerate}

\subsubsection{The importance of the two-player setting}

One of our results (Section~\ref{sec:n>2}) shows that there is no hope
for a polynomial communication protocol for more than two players:
exponential communication is required for every $n>2$, for either
$P \in$ \properties, for any $c > 0$, even for submodular valuations,
and even for randomized protocols.  
The (very important) two-player case is surprisingly rich, however, with
multiple phenomena occurring across different valuation classes and
constants $c$. The results for two players in the deterministic
setting are summarized in Table~\ref{tbl:results}. It is also
surprising that there is such a chasm between the two- and
three-player cases; for example, there is no analogous chasm for
maximizing the social welfare in combinatorial auctions.

Furthermore, in contrast to combinatorial auctions, the two-player
setting is fundamental in fair division. Indeed, the first known
mention of fair division is in the Bible, when Abraham and Lot use the
cut-and-choose method to divide a piece of land. In modern day, one of
the primary applications of fair division for indivisible items is
divorce settlements, which is fundamentally a two-player setting. Fair
Outcomes Inc.\footnote{http://fairoutcomes.com}, a commercial fair
division website, only allows for two players. Other applications of
fair division, such as dividing an inheritance and international
border disputes, are also often two player settings. Unless otherwise
mentioned, we assume that $n=2$ throughout the paper.

\subsubsection{Submodular valuations}\label{sec:intro-submod}

We first consider submodular valuations in the deterministic setting
(for $n=2$). We show that full proportionality (1-Prop) requires only polynomial communication (Theorem~\ref{thm:prop-n=2-submod-upper}), whereas full envy-freeness
requires exponential communication
(Theorem~\ref{thm:EF-n=2-submod-lower}), exhibiting an interesting
difference between the two properties.

The hardness result for 1-EF leaves open the intriguing possibility of
a polynomial-communication approximation scheme (PAS):\footnote{This is the same idea as a polynomial-time approximation scheme (PTAS), but here we are interested in communication, not time.} for any fixed $c < 1$,
is communication cost polynomial in $m$ sufficient? 
As one of our main results, we prove
that this is
indeed the case, and we prove it using a reduction to a type of graph we call the ``minimal bundle graph'' (Theorem~\ref{thm:EF-n=2-submod-upper}). This is our most technically involved argument.

The communication cost of this protocol
  exponential in $\frac{1}{1-c}$, and so this PAS is not a \emph{fully
    polynomial-communication approximation scheme} (FPAS), which would require
  polynomial dependence on $\frac{1}{1-c}$. Our lower bound for 1-EF
  (Theorem~\ref{thm:EF-n=2-submod-lower}) rules out an
  FPAS, so our results are still tight.

\subsubsection{Subadditive valuations}

The story is different for subadditive valuations, which are treated
in Section~\ref{sec:subadd}. We show that only polynomial
communication is required for $c$-EF when $c \leq 1/2$
(Theorem~\ref{thm:EF-n=2-subadd-upper}) and for $c$-Prop when
$c \leq 2/3$ (Theorem~\ref{thm:prop-n=2-subadd-upper}). Interestingly, the constants $1/2$ and $2/3$ turn out to be tight: we
show that exponential communication is required for $c$-EF for every
constant $c > 1/2$ (Theorem~\ref{thm:EF-n=2-subadd-lower}) and for $c$-Prop for
every constant $c > 2/3$ (Theorem~\ref{thm:prop-n=2-subadd-lower}).  This
establishes another interesting difference between the two fairness
notions.


\subsubsection{General valuations}

The story is again different for general valuations, which we consider
in Section~\ref{sec:general}. In the deterministic setting, $c$-EF and $c$-Prop each require exponential communication for every $c > 0$ (Theorems~\ref{thm:EF-n=2-general-lower} and \ref{thm:prop-n=2-rand-lower}). This resolves the deterministic setting.

\subsubsection{Randomized communication complexity}
The $c$-Prop lower bound for general valuations also holds in the randomized setting for any $c > 0$. However, $c$-EF admits an efficient randomized protocol for any $c \leq 1$ and general (and hence also subadditive and submodular) valuations. This randomized protocol is based on a reduction to the \equ problem (testing whether two bit strings are identical), which is known to have an efficient randomized protocol. Our randomized protocol for $c$-EF also carries over to $c$-Prop for any $c \leq 1$ in the special case of subadditive (and hence also submodular)
valuations. This resolves the randomized setting.

Finally, we briefly consider the maximin share property in Section~\ref{sec:maximin}, and prove exponential lower bounds in that setting as well.
 
\renewcommand{\arraystretch}{1.6}
\begin{table}
\resizebox{6.45 in}{!}{ 
\begin{tabular}{c|c|c|c|c} 
\multicolumn{1}{c}{}  & \multicolumn{2}{c|}{$c$-EF (deterministic)} &
                                                                      \multicolumn{2}{c}{$c$-Prop (deterministic)}\\
  \cline{2-5}
\multicolumn{1}{c}{}  & easy when & hard when & easy when & hard when\\
  \thickhline
general valuations & 
never
& $c > 0$\ (Thm.~\ref{thm:EF-n=2-general-lower}) 
	& 
never
& $c > 0$\ (Thm.~\ref{thm:prop-n=2-rand-lower})\\
\hline
subadditive valuations & $c \leq 1/2$\ (Thm.~\ref{thm:EF-n=2-subadd-upper}) 
	& $c > 1/2$\ (Thm.~\ref{thm:EF-n=2-subadd-lower}) 
	& $c \leq 2/3$\ (Thm.~\ref{thm:prop-n=2-subadd-upper}) 
	& $c > 2/3$\ (Thm.~\ref{thm:prop-n=2-subadd-lower})\\
\hline
submodular valuations & $c < 1$\ (Thm.~\ref{thm:EF-n=2-submod-upper}) 
	& $c = 1$\ (Thm.~\ref{thm:EF-n=2-submod-lower}) & $c \leq 1$\
                                                          (Thm.~\ref{thm:prop-n=2-submod-upper})
                                  & 
never\\   
 \end{tabular}
 }
 \vspace{.1 in}
 \caption{A summary of our results for the two-player deterministic
   setting. For both $c$-EF and $c$-Prop, we characterize exactly when
   the problem is easy (i.e., can be solved with communication polynomial in
   the number of items) and
   hard (i.e., requires exponential communication). We note that the protocol for Theorem~\ref{thm:EF-n=2-submod-upper} has communication cost exponential in $\frac{1}{1-c}$, and the corresponding lower bound (Theorem~\ref{thm:EF-n=2-submod-lower}) rules out a protocol with communication cost polynomial in $\frac{1}{1-c}$. See Section~\ref{sec:intro-submod} for additional discussion.
}
 \label{tbl:results}
\end{table}
\renewcommand{\arraystretch}{1}


\subsection{Ideas behind our protocols}\label{sec:techniques}

Since the problem is always hard when $n>2$, all of our upper bounds are in
the two-player setting. All of our positive results require the following
condition: for any partition of the items into two bundles $A_1$ and $A_2$, each player must be happy with at least one of $A_1$ and $A_2$. This is
always true for envy-freeness: a player is always happy with whichever
of $A_1$ and $A_2$ she has maximum value for (she could be happy with
both bundles if they have equal value to her). This is not satisfied
for proportionality in general, for example if a player has value zero
for each of $A_1$ and $A_2$, but positive value for $A_1 \cup A_2$. However, it is satisfied for subadditive
valuations.

All of our deterministic protocols have the same first step: if there
is any allocation where player 1 would be happy to receive either
bundle, she specifies that allocation to player 2, and player 2
selects her preferred bundle. Player 2 is guaranteed to be happy with
at least one of the bundles by the above condition, and player 1 is
happy with either bundle in this allocation, so she is happy as well.

The key to the analysis is what happens when there is no allocation
such that player 1 is happy with either bundle. It will turn out that the
absence of such an allocation implies certain structure in the
valuations. The exact structure, and the way the structure is
exploited, depends on the setting (valuation class, property $P$, and
constant $c$).

For example, consider the case of subadditive valuations and
$\frac{1}{2}$-EF. We show that if there is no
allocation where player 1 is happy with either bundle, then there must exist a
single item that player 1 values more than the rest of the items
combined. Then player 1 can simply specify that item to player 2. If
player 2 is happy with the rest of the items, we have found a
satisfactory allocation. Otherwise, there is no satisfactory
allocation, since player 1 and player 2 both care about that
particular item more than the rest of the items combined.

Furthermore, this protocol gives an additional guarantee. If a $c$-$P$
allocation is not returned, the protocol will return the fairest
allocation possible, i.e., a $c'$-$P$ allocation where no allocation
is $c''$-$P$ for any $c'' > c'$. For brevity, we will use $c^*$ to refer to the maximum $c'$ such that a $c'$-$P$ allocation exists.\footnote{It is possible that $c^*=0$ (for example, in the case of two
  players and one item), but our protocol at least certifies that this
  is the best possible.}
If player 2 determines that a $c$-$P$
allocation does not exist, then there is a single item $g$ that both
players care about more than all of the other items together. One
player will have to not receive item $g$, and the protocol gives $g$
to the player who will be most unhappy otherwise. This yields a $c^*$-$P$ allocation. In fact, all of our deterministic protocols give this guarantee, although slightly more work is required to achieve it in other settings.

\subsubsection{Minimal bundles}

The reasoning described above is actually a special case of analyzing what we call \emph{minimal bundles}. We say that a bundle is minimal for some player if that player is happy with the bundle, but is not happy with any strict subset of that bundle.\footnote{A similar notion of ``minimal bundles" features prominently in \cite{Brams2012:Undercut}.} The minimal bundles represent the most a player is willing to compromise. If a player does not receive one of her minimal bundles (or a superset thereof), she cannot be happy, by definition. On the other hand, if a player receives one of her minimal bundles (or a superset thereof), she is guaranteed to be happy.\footnote{We assume monotonicity: adding items to a bundle cannot decrease its value.} Thus it is both necessary and sufficient for each player to receive one of her minimal bundles (or a superset thereof). By this reasoning, it is sufficient for player 1 to specify all of her
minimal bundles to player 2: player 2 can then determine if there is an allocation which satisfies her (player 2), while still giving player 1 one of player 1's minimal
bundles. 

The general Minimal Bundle Protocol (Protocol~\ref{pro:minimal-upper}) is as follows. If there is an allocation where player 1 is happy with either bundle, she specifies that allocation to player 2, and we are done. Otherwise, player 1 specifies all of her minimal bundles to player 2, who searches for a satisfactory allocation. If player 2 fails to find one, she declares that no satisfactory allocation exists. There is a final step that is used to guarantee that a $c^*$-$P$ allocation is returned if no $c$-$P$ allocation is found; this will be described later.

The key is proving that the number of minimal bundles is polynomial in $m$, and this analysis varies based on the context. For example, for subadditive valuations and $\frac{1}{2}$-EF, we discussed above how if there is no allocation where player 1 is happy with either bundle, there must be a single item $g$ that she values more than all of the other items together. This means that player 1 has a single minimal bundle: $\{g\}$.

We also use the protocol to give a PAS for EF in the submodular
setting: we show that for every fixed $c < 1$, the number of minimal
bundles is at most $2(m+1)^{\frac{8}{1-c}}$, and thus the protocol
uses polynomial communication for any fixed $c$. The analysis
for this case is technically involved and involves constructing what we
call the ``minimal bundle graph" for player 1's valuation. The vertices in this graph are the minimal bundles, and two vertices share an edge if the corresponding bundles overlap by exactly one item (it will be impossible for two minimal bundles to overlap by more than one item). For some of these edges, moving the overlapping item between bundles will cause a large change in value: these special edges will play an important role. We will show that the only way to have a large number of minimal bundles is for there to be a large number of these special edges, but submodularity will imply an upper bound on how many special edges can be incident on a single vertex, and hence an upper bound on the total number of special edges.

The Minimal Bundle Protocol is correct for any valuation class, property $P$, or constant $c$. However, in some contexts, the number of minimal bundles may be exponential. Our lower bound constructions all involve valuations with an exponential number of minimal bundles.


\subsection{Related work}
Fair division has a long history, and a full survey of this field is outside of the scope of this paper: see e.g. \cite{Brandt16:Handbook, Brams96:Cake, Moulin03:Fair} for further background.

There are several approaches for handling the fundamental asymmetry of
indivisible items, where neither envy-freeness or proportionality can be guaranteed. One natural question is whether there are other compelling properties that can be guaranteed~\cite{Plaut18:Almost, Caragiannis16:Nash, Budish16:Course, Lipton04:Approximately}. Another possibility is to allow for randomized allocations and search for allocations that are fair in expectation~\cite{Bogomolnaia04:Random, Budish13:Designing}.


Although EF and Prop allocations do not always exist, they often do. For example, \cite{Dickerson14:Computational} showed that when the number of items is at least a logarithmic factor larger than the number of players, envy-free allocations are likely to exist. If a fully envy-free or proportional allocation does exist in a particular instance, it may be preferable to choose that allocation before resorting to weaker properties or randomization. In this paper, we address the question of determining whether a $c$-EF or a $c$-Prop allocation exists, and if so, finding one.

Communication complexity was first studied by~\cite{Yao79:Complexity}.
The paper most relevant to our work is~\cite{Nisan06:Communication},
which shows that maximizing social welfare requires exponential
communication, even for two players with submodular
valuations. Furthermore, they show that for general valuations, any
constant factor approximation of the social welfare requires
exponential communication to compute. Although they do not mention
envy-freeness, proportionality, or fair division, some of their arguments can be adapted
to prove exponential lower bounds for some (but not all) of the cases that we study.


A recent and complementary line of work is presented in \cite{Branzei17:Communication}. They study the communication complexity of fair division with \emph{divisible} goods (also known as ``cake cutting"), where each resource can be divided into arbitrarily small pieces. Their paper complements ours with no overlap. Together, our papers give a comprehensive picture of the communication complexity of fair division in both the indivisible and divisible models.

The organization of the rest of the paper is as follows. Section~\ref{sec:model} formally presents the model. Section~\ref{sec:prop-n=2-submod-upper} presents our 1-Prop protocol for submodular valuations. In Section~\ref{sec:EF-n=2-submod-upper}, we discuss the PAS for 1-EF for submodular valuations. Section~\ref{sec:lower-bound-approach} describes our general lower bound approach, and proves a lemma that we will use to prove lower bounds later on in a standardized way. Section~\ref{sec:EF-n=2-submod-lower} uses that lemma to prove hardness for 1-EF for submodular valuations, which shows that the PAS from Section~\ref{sec:EF-n=2-submod-upper} is optimal. Section~\ref{sec:n>2} shows that the problem is always hard for more than two players, even for submodular valuations and even in the randomized setting. The rest of the paper is focused on resolving the two player case. Section~\ref{sec:subadd} presents the upper and lower bounds for subadditive valuations. Section~\ref{sec:general} considers general valuations, and also handles the randomized two player setting. Table~\ref{tbl:results} will be complete after this section. Finally, we consider the maximin share property (to be defined later) in Section~\ref{sec:maximin}.

\section{Model}\label{sec:model}

We formally introduce the discrete fair division model in Section~\ref{sec:model-fair-division}, and the communication complexity model in Section~\ref{sec:model-cc}.

\subsection{Fair division}\label{sec:model-fair-division}
Let $[k]$ denote the set $\{1...k\}$. Let $N = [n]$ be the set of players, and let $M$ be the set of items, where $|M| = m$. We assume throughout the paper that items are indivisible, meaning that an item cannot be split among multiple players. Player $i$'s value for each subset of $M$ is specified by a \emph{valuation} $v_i: 2^M \to \mathbb{R}_{\geq 0}$. We refer to a subset of $M$ as a \emph{bundle}.

We assume throughout the paper that valuations obey monotonicity (adding items to a bundle cannot decrease the value of the bundle) and normalization ($v_i(\emptyset) = 0$), and that $v_i(M) > 0$. We refer to set of the valuations constrained only by these three properties as ``general valuations".

There are many commonly studied subclasses of valuations, such as subadditive and submodular. A valuation $v$ is subadditive if for all bundles $S$ and $T$, $v(S \cup T) \leq v(S) + v(T)$. Submodular valuations represent ``diminishing returns": $v$ is submodular if $v(T \cup \{g\}) - v(T) \leq v(S \cup \{g\}) - v(S)$ whenever $S \subseteq T$. Every submodular valuation is subadditive, but not every subadditive valuation is submodular. Thus a problem that is hard for subadditive valuations may become tractable if valuations are restricted to be submodular. Similarly, problems that are hard for general valuations may be easier for subadditive valuations.


An \emph{allocation} is a partition of $M$ into $n$ disjoint subsets $(A_1, A_2...A_n)$, where $A_i$ is the bundle allocated to player $i$. The goal is to find a ``fair" allocation. The two most prominent fairness notions for indivisible items are envy-freeness and proportionality. Envy-freeness states that no player strictly prefers another player's bundle to her own, and proportionality states that every player receives at least $1/n$ of her value for the entire set of items. We can also define approximate versions of these properties:

\begin{definition}
An allocation $A = (A_1...A_n)$ is $c$-EF for some $c \in [0,1]$ if for all $i,j \in N$, 
\[
v_i(A_i) \geq c\cdot v_i(A_j)
\]
\end{definition}

\begin{definition}
An allocation $A = (A_1...A_n)$ is $c$-Prop for some $c \in [0,1]$ if for all $i \in N$,
\[
v_i(A_i) \geq c\cdot \cfrac{v_i(M)}{n}
\]
\end{definition}
\noindent Thus $1$-EF is standard envy-freeness, and $1$-Prop is standard proportionality.

We will say that a player is $(c,P)$-\emph{happy} with an allocation $A$ if property $c$-$P$  is satisfied from her viewpoint. Specifically, when $P = $ EF, we will say that player $i$ is $(c,P)$-happy with allocation $A$ if $v_i(A_i) \geq c\cdot v_i(A_j)$ for all $j$. For $P$ = Prop, we will say a player $i$ is $(c,P)$-happy if $v_i(A_i)~\geq~\mfrac{c}{n} v_i(M)$. We will typically leave $P$ implicit, and just say that player $i$ is $c$-happy. We sometimes also leave $c$ implicit, and just say that player $i$ is happy.

A instance of \textsc{Fair Division} consists of a set of players $N$, a set of items $M$, player valuations $(v_1...v_n)$, a fairness property $P\in \{\text{EF, Prop}\}$, and a constant $c \in [0,1]$. The goal is to find an allocation satisfying $c$-$P$, or show that none exists.

\subsubsection{Two players}
We use the following additional terminology when $n=2$. For a player $i$, we will use $\overline{i}$ to denote the other player. For an allocation $A = (A_1, A_2)$, let $\overline{A}$ be the allocation $(A_2,A_1)$. Also, when $n=2$, knowing player $i$'s bundle uniquely determines the overall allocation, since player $\overi$ simply has every item not in player $i$'s bundle. Therefore, with slight abuse of notation, we say that player $i$ is $c$-happy with bundle $S$ if player $i$ is $c$-happy with the allocation $A$ where $A_i = S$ and $A_{\overi} = M\backslash S$.

\subsection{Communication complexity}\label{sec:model-cc}

We assume that each player knows only her own valuation $v_i$, and does not know anything about other players' valuations. In order to solve an instance of \textsc{Fair Division}, players will need to exchange information about their valuations. We assume that all players know $N, M, P$, and $c$. Since there $2^m$ subsets of $M$, specifying a bundle requires $m$ bits. We will use $\valsize$ to refer to the number of bits required to represent a value $v_i(S)$. We assume that $\valsize$ is polynomial in $m$, otherwise sending even a single value would rule out a polynomial communication protocol.

A (deterministic) protocol $\Gamma$ specifies which player should speak (and what she should say) as a function of the messages sent so far, and terminates when a player declares that an allocation $A$ satisfies $c$-$P$, or when a player declares that no $c$-$P$ allocation exists. For fixed $N,M,P$, and $c$, we define the communication cost of a protocol $\Gamma$ to be the maximum number of bits $\Gamma$ sends across all player valuations $v_1...v_n$. Formally, let $C_{\Gamma}(N, M, (v_1...v_n), P, c)$ be the number of bits that $\Gamma$ communicates when run on the fair division instance $(N,M,(v_1...v_n),P,c)$. Then the communication cost of $\Gamma$ is $\max_{(v_1...v_n)} C_{\Gamma}(N, M, (v_1...v_n), P, c)$.

We define the deterministic communication complexity $D(n,m,P,c)$ as the minimum communication cost of any protocol $\Gamma$ which correctly solves \textsc{Fair Division} for $n$ players, $m$ items, property $P$ and constant $c$. Formally,
\[
D(n,m,P,c) = \min_{\Gamma} \max_{(v_1...v_n)} C_{\Gamma}([n], [m], (v_1...v_n), P, c)
\]
where $\Gamma$ ranges over all correct deterministic protocols.

In a randomized protocol $\Gamma_R$, each player also has access to an infinite stream of random bits. The protocol should correctly solve \textsc{Fair Division} with probability 2/3 (say) over these random bits. Like the deterministic setting, the communication cost of $\Gamma_R$ is the number of bits $\Gamma_R$ communicates for a worst-case choice of $v_1...v_n$. We can similarly define the randomized communication complexity $R(n,m,P,c)$ as the minimum communication cost of any randomized protocol $\Gamma_R$ which correctly solves \textsc{Fair Division} with probability at least 2/3. Formally,
\[
R(n,m,P,c) = \min_{\Gamma_R} \max_{(v_1...v_n)} C_{\Gamma_R}([n], [m], (v_1...v_n), P, c)
\]
where $\Gamma_R$ ranges over all correct randomized protocols. If valuations are restricted to be subadditive or submodular, the problem may become easier, so $D(n,m,P,c)$ and $R(n,m,P,c)$ may be affected. We use $D_{subadd}(n,m,P,c)$ and $D_{submod}(n,m,P,c)$ to denote the deterministic communication complexity when valuations are restricted to be subadditive and submodular, respectively (and similarly for $R_{subadd}(n,m,P,c)$ and $R_{submod}(n,m,P,c)$). The following relationships are immediate, for all $n,m,P,$ and $c$:
\begin{align*}
R(n,m,&P,c)  \leq D(n,m,P,c)\\
D_{submod}(n,m,P,c) \leq&\ D_{subadd}(n,m,P,c) \leq D(n,m,P,c)\\
R_{submod}(n,m,P,c) \leq&\ R_{subadd}(n,m,P,c) \leq R(n,m,P,c)
\end{align*}

Another factor that may affect the communication complexity is how the players gain access to random bits. In the \emph{public-coin} model, the players can also see other players' streams of random bits; in the \emph{private-coin} model, each player sees only her own stream. This distinction is not significant in our setting, however, due to the following theorem by \cite{Newman91:Private}.

\begin{theorem}[\cite{Newman91:Private}]
Suppose there exists a public-coin randomized protocol with communication cost $C$ on $\ell$ bits of input. Then there exists a private-coin randomized protocol with communication cost $O(C + \log \ell)$.
\end{theorem}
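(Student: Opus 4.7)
The plan is to prove Newman's theorem by a probabilistic argument: show that a small, \emph{input-independent} collection of random strings suffices to approximate the behavior of the public-coin protocol on every input, and then have one player privately sample which string to use.

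First I would amplify the error. Given the public-coin protocol $\Pi$ with cost $C$ and error $\leq 1/3$, run it $O(1)$ times in parallel on independent public randomness and take a majority vote. This yields a protocol $\Pi'$ with communication $C' = O(C)$ and error $\leq 1/100$ on every input $(x,y) \in \{0,1\}^{\ell_1} \times \{0,1\}^{\ell_2}$ with $\ell_1+\ell_2 \leq \ell$. Let $R$ denote the (possibly very long) public random string used by $\Pi'$, and for each input $(x,y)$ let $\mathrm{err}(x,y;R)$ be the indicator that $\Pi'$ outputs incorrectly on $(x,y)$ using randomness $R$; so $\mathbb{E}_R[\mathrm{err}(x,y;R)] \leq 1/100$ for every $(x,y)$.

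Next, I would pick $k = C_0\,\ell$ strings $R_1,\dots,R_k$ independently and uniformly at random (for a constant $C_0$ to be fixed). Fix any input $(x,y)$ and apply a Chernoff bound to the sum $\sum_{i=1}^k \mathrm{err}(x,y;R_i)$, whose expectation is at most $k/100$. The probability that this sum exceeds $k/3$ is at most $2^{-\alpha k}$ for some constant $\alpha > 0$. There are at most $2^\ell$ inputs, so by a union bound, provided $C_0$ is chosen so that $\alpha C_0 > 1$, with positive probability there exists a fixed collection $R_1,\dots,R_k$ such that for every input the empirical error is at most $1/3$. Hard-code any such collection into both players (it depends only on $\ell$, not on the inputs).

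The private-coin protocol is then immediate: Alice privately samples $i \in [k]$ uniformly, sends $i$ to Bob using $\lceil \log k\rceil = O(\log \ell)$ bits, and both players simulate $\Pi'$ using $R_i$ in place of public randomness. By construction, the error on every input is at most $1/3$, and the total communication is $C' + O(\log \ell) = O(C + \log \ell)$, as required. The main obstacle is getting the parameters of the Chernoff/union bound balanced: one must first drive the error well below $1/3$ (via amplification) so that a sample of size linear in $\ell$ is enough to beat the $2^\ell$ inputs in the union bound; without this initial amplification, the straightforward Chernoff estimate is too weak.
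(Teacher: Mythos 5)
Your proof is correct: it is the standard amplification--sampling--union-bound argument for Newman's theorem, with the parameters (error driven below a small constant before sampling $k = O(\ell)$ strings, Chernoff bound $2^{-\alpha k}$ beating the $2^{\ell}$ inputs, and $O(\log \ell)$ bits to transmit the sampled index) handled properly. Note that the paper does not prove this statement at all --- it is quoted as a black-box result from Newman (1991) --- so there is no in-paper proof to compare against; your argument is the standard one from the literature and would serve as a self-contained proof of the cited theorem.
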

\noindent Thus we will assume all randomized protocols to be public-coin for the rest of the paper.

Finally, we mention the multiparty (i.e., $n > 2$) communication complexity model. There is more than one such model: for example, do players communicate in a peer-to-peer fashion, or is each message broadcast for all of the players to see? We discuss in Section~\ref{sec:n>2} how this turns out not to matter in our setting.



\section{An upper bound for 1-Prop with submodular valuations}\label{sec:prop-n=2-submod-upper}

This section presents our first result: a deterministic protocol for 1-Prop, when there are two players, and when valuations are submodular. The protocol will communicate just $m + 1$ values and a single bundle. Our protocol either finds a 1-Prop allocation or a $c^*$-Prop allocation. Recall that $c^*$ is the maximum $c$ such that a $c$-$P$ allocation exists.\footnote{Technically, our protocol always returns a $c^*$-Prop allocation, since we only consider $c \in [0,1]$. We state the 1-Prop case separately in the theorem because it is handled separately in the protocol.} We prove the following theorem:

\begin{restatable}{theorem}{propSubmodUpper}
\label{thm:prop-n=2-submod-upper}
For two players with submodular valuations, Protocol~\ref{pro:prop} has communication cost at most $(m + 1) v^{size} + m$, and either returns a 1-Prop allocation or a $c^*$-Prop allocation. This also implies that for any $c \in [0,1]$,
\[
D_{submod}(2,m,\text{Prop}, c) \le (m + 1) v^{size} + m
\]
\end{restatable}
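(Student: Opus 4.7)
The plan is to follow the two-phase template from Section~\ref{sec:techniques}. In Phase~1, player~1 searches for an \emph{equalizing partition}: a partition $(A, M \setminus A)$ with $v_1(A) \geq v_1(M)/2$ and $v_1(M \setminus A) \geq v_1(M)/2$. If one exists, she sends the bundle $A$ (using $m$ bits); since submodular valuations are subadditive, $v_2(A) + v_2(M \setminus A) \geq v_2(M)$, so player~2 is $1$-Prop happy with at least one of the two bundles and keeps it. The returned allocation is then $1$-Prop. When no equalizing partition exists, I would exploit submodularity to conclude that $v_1$ has a constrained structure: every partition of $M$ has exactly one $1$-Prop side, so $\{S : v_1(S) \geq v_1(M)/2\}$ is the complement of a downward-closed family. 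I would then have player~1 communicate $m+1$ carefully chosen values of $v_1$ (for example, her valuation on a chain of subsets built by a greedy/canonical rule), together with one distinguished bundle, so that player~2 can either identify a $1$-Prop allocation or certify that none exists.

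The main obstacle is the structural result behind Phase~2: I must show that, in the absence of an equalizing partition for a submodular $v_1$, the data about her minimal $1$-Prop bundles (and about $v_1$ on the other bundles player~2 might propose) can be captured by $m+1$ values along with a single bundle. Submodularity is essential here, since subadditivity alone is not sufficient; controlling how marginal values behave along a monotone chain of sets is the lever that limits the data player~1 must transmit. I expect the argument to use the downward-closedness of $\{S : v_1(S) < v_1(M)/2\}$ together with the decreasing marginals property to pin down enough of $v_1$ from the chosen chain that player~2 can reason about arbitrary candidate bundles.

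Correctness then has two pieces. If the protocol returns an allocation, verifying the claimed guarantee ($1$-Prop upon success, $c^*$-Prop as a fallback) follows from how player~2 checks each candidate against the communicated data. If the protocol reports that no $1$-Prop allocation exists and returns a $c^*$-Prop allocation instead, this is justified by the Phase~2 characterization, which exhausts the bundles that could possibly yield a better guarantee. The final step handing player~1 the ``most unhappy'' bundle when necessary, shared with the other deterministic protocols described in Section~\ref{sec:techniques}, is what upgrades the fallback from an arbitrary allocation to a $c^*$-Prop one. Finally, the communication bound $(m+1)\valsize + m$ is immediate from the protocol's design: player~1 transmits at most $m+1$ values of $v_1$ plus one $m$-bit bundle, and player~2's responses consist only of an index into the data already sent (absorbed into the bundle or negligible).
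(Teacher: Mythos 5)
Your Phase~1 is exactly the paper's step~1, and your Phase~2 points in the right direction (send $m+1$ values built from marginal gains along a canonical chain, plus one distinguished bundle), so the skeleton matches the paper's protocol. But the heart of the theorem --- the reason $m+1$ numbers suffice for player~2 to classify \emph{every} bundle --- is exactly the step you flag as ``the main obstacle'' and then do not carry out, so as written this is a plan with the central lemma missing. The concrete missing argument is the following. Fix an ordering $g_1,\dots,g_m$ of $M$, let $\delta_k^M = v_1(\{g_1,\dots,g_k\}) - v_1(\{g_1,\dots,g_{k-1}\})$, and let $\chi(S) = \sum_{g_k \in S}\delta_k^M$. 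Submodularity gives $\chi(S) \le v_1(S)$ for every $S$ (each item's marginal contribution inside $S$ is at least its marginal contribution on the larger prefix of $M$), and telescoping gives $\chi(S) + \chi(M\backslash S) = v_1(M)$. Now, \emph{assuming no equalizing partition exists}, one shows $v_1(S) \ge v_1(M)/2$ if and only if $\chi(S) \ge v_1(M)/2$: the ``if'' direction is the underestimate property, and for ``only if,'' $v_1(S)\ge v_1(M)/2$ forces $v_1(M\backslash S) < v_1(M)/2$, hence $\chi(M\backslash S) < v_1(M)/2$, hence $\chi(S) > v_1(M)/2$ by the telescoping identity. This equivalence is what lets player~2 simulate player~1's happiness test exactly from the $m$ transmitted marginals; your downward-closedness observation is true but does not by itself yield any such finite certificate, and ``I expect the argument to use\dots'' leaves the load-bearing step unproven.

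A secondary caution: you phrase Phase~2 partly in terms of player~1's ``minimal $1$-Prop bundles.'' For $1$-Prop with submodular valuations there can be exponentially many minimal bundles (an additive valuation with $m$ items each of value $1$ has every size-$m/2$ set minimal), so any route that enumerates or directly encodes the minimal-bundle family fails here; the paper's protocol deliberately avoids this by sending the additive proxy $\chi$ instead. Finally, the $c^*$-Prop fallback also needs an actual argument (that the player not receiving $S^*(v_i)$ is fully happy with the complement, via the condition that each player is happy with at least one side of any partition, and that no allocation can beat $\max_i \bfc_i(S^*(v_i))$); you gesture at this but do not prove it.
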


To see that the theorem also implies $D_{submod}(2,m,\text{Prop}, c) \le (m + 1) v^{size} + m$ for any $c$, suppose that the protocol returns a $c^*$-Prop allocation where $c^* < 1$: then we know that no allocation is $c'$-Prop for any $c' > c^*$, so a $c$-Prop allocation exists if and only if $c^* \geq c$. Thus Protocol~\ref{pro:prop} either finds a $c$-Prop allocation or shows that none exists, for any $c \in [0,1]$.

 It will be important that the following condition is satisfied in this setting:
\begin{restatable}{condition}{condHappyWithOne}
\label{cond:happy-with-one}
For every allocation $A$, each player is happy with at least one of $A$ and $\overline{A}$. 
\end{restatable}
Recall that for an allocation $A = (A_1, A_2)$, $\overline{A} = (A_2, A_1)$. This condition is satisfied for proportionality with subadditive valuations (and hence also satisfied for submodular valuations):
\[
\max\big(v_i(A_1), v_i(A_2)\big) \geq \frac{1}{2}\big(v_i(A_1) + v_i(A_2)\big) \geq \frac{1}{2}v_i(A_1 \cup A_2) = \frac{1}{2}v_i(M) \geq \frac{c}{2}v_i(M)
\]
for all $c \in [0,1]$. Thus player $i$ is always happy if she receives the bundle $\argmax\big(v_i(A_1), v_i(A_2)\big)$. Also, we assume in this section that $v_1(M) = 1$, without loss of generality: were this not the case, we could simply rescale $v_1$ as needed.


\begin{figure}
\centering
\begin{tabular}{ c|ccc} 
$k$ & 1 & 2 & 3\\
\hline
$v_1(g_1,\dots, g_k)$ & 2 & 2 & 3\\
$\delta_k^M$ & 2 & 0 & 1
\end{tabular}
\caption{An example of a possible valuation $v_1$ over three goods, and the corresponding values for $\delta_k^S$.}
\label{fig:delta}
\end{figure}

Let $M = (g_1, g_2...g_m)$ be an arbitrary ordering of the items. Consider starting from the empty set and adding the items in $M$ one at a time in this order. We define $\delta_k^M$ as player 1's marginal value of adding $g_k$ in this process: $\delta_k^M = v_1(g_1, g_2\dots g_{k-1}, g_k) - v_1(g_1, g_2\dots g_{k-1})$. Note that $\delta_k^M$ is not equal to $v_i(\{g_k\})$ in general, because of submodularity.


The protocol is as follows.  The first step is common to all of our deterministic protocols: player 1 checks if there is an allocation $A$ where she is happy with both $A$ and $\overline{A}$. If so, player 2 can choose whichever she prefers, and we are done by Condition~\ref{cond:happy-with-one}. If this fails, the following condition is satisfied:
\begin{restatable}{condition}{cutFailedCond}
\label{cut-failed-cond}
There is no allocation $A$ for which player 1 is happy with both $A$ and $\overline{A}$.
\end{restatable}

In this case, player 1 sends the values $(\delta_1^M, \delta_2^M...\delta_m^M)$ to player 2. For every bundle $S$, player 2 needs to be able to figure out whether player 1 likes $S$ or $\mss$. To do this, player 2 simply pretends that player 1's valuation is additive where $\delta_k^M$ is the value of item $g_k$. Formally, let $\chi(S) = \sum\limits_{g_k \in S} \delta_k^M$: player 2 pretends that $v_1(S) = \chi(S)$. This will not be a perfect estimate of $v_1$, of course, but player 2 does not need to know the exact value of $v_1(S)$: she only needs to know whether player 1 is happy with $S$. 

Lemma~\ref{lem:prop-chi} shows that this actually works: assuming Condition~\ref{cut-failed-cond}, $v_1(S) \geq 1/2$ if and only if $\chi(S) \geq 1/2$. We informally argue why this is case. Crucially, submodularity implies that $\chi(S)$ will be an underestimate of $v_1(S)$: $v_1(S) \geq \chi(S)$ for all $S$. Since $\chi(S) + \chi(\mss) = \sum_{k = 1}^m \delta_k^M = v_1(M)$, either $\chi(S) \geq 1/2$ or $\chi(\mss) \geq 1/2$. Say $\chi(S) \geq 1/2$: then $v_1(S) \geq \chi(S) \geq 1/2$, so player 1 is happy with $S$. Then by Condition~\ref{cut-failed-cond}, we know that player 1 is not happy with $\mss$. Therefore, for any bundle $S$, player 2 can correctly use $\chi$ as a proxy for $v_1$ to determine which of $S$ and $\mss$ player 1 is happy with. Thus $\chi$ is sufficient for her to determine whether or not a 1-Prop allocation exists, and if so, find one. This lemma is the heart of Protocol~\ref{pro:prop}.

Step 4, $S^*(v_i)$, and $\bfc_i(S^*(v_i))$ are necessary only for finding a $c^*$-Prop allocation if no 1-Prop allocation is found. For a bundle $S$ and property $P$, let $\bfc_i^P(S)$ be the maximum $c' \leq 1$ such that player $i$ is $c'$-happy with $S$. For example, $\bfc_i^{Prop}(S) = \min\left(1, \mfrac{2v_i(S)}{v_i(M)}\right) = \min(1, 2v_i(S))$, since we assumed $v_i(M) = 1$. Although this section considers only proportionality, we allow for either $P \in$ \properties in our definitions, since we will use this terminology again in later sections. We will typically leave $P$ implicit and write $\bfc_i(S)$.

For each player $i$, we define a special bundle
\[
S^*(v_i) = \argmax_{S \subseteq M:\ \bfc_i(S) < c} \bfc_i(S)
\]
In words, $S^*(v_i)$ the bundle that player $i$ is the most happy with, out of all of the bundles she is not fully happy (i.e., $c$-happy) with. 

\begin{protocol}
\protocolspacing
\begin{center}
Private inputs: $v_1, v_2$\\
Public inputs: the ordering of $M = \{g_1, g_2...g_m\}$
\end{center}

\begin{enumerate}
\item If there exists an allocation $A$ where player 1 is happy with both $A$ and $\overline{A}$, player 1 sends that allocation to player 2. If player 2 is happy with $A$, she declares that $A$ is 1-Prop, otherwise she declares that $\overline{A}$ is 1-Prop.

\item If there is no such allocation $A$, player 1 sends the values $(\delta_1^M, \delta_2^M...\delta_m^M)$ to player 2, along with $S^*(v_1)$ and the value $\bfc_1(S^*(v_1))$.

\item Player 2 first checks if there exists any bundle $S$ where $\chi(S) \geq 1/2$ and $v_2(\mss) \geq 1/2$. If so, she declares that the allocation $(S, M\backslash S)$ is 1-Prop.

\item If not, player 2 computes $S^*(v_2)$, $\bfc_2(S^*(v_2))$, and $i =\argmax_{i' \in \{1,2\}} \bfc_{i'}(S^*(v_{i'}))$. Let $A$ be the allocation where $A_{i} = S^*(v_{i})$ and $A_{\overline{i}} = M\backslash S^*(v_{i})$. Player 2 then declares that $A$ is $\bfc_{i}(S^*(v_{i}))$-Prop, and that $c^* = \bfc_{i}(S^*(v_{i}))$.

\end{enumerate}
\caption{Protocol for two players with submodular valuations to either find a 1-Prop allocation or a $c^*$-Prop allocation.}
\label{pro:prop}
\end{protocol}

It will be useful for the analysis to define $\delta_i^S$ for an arbitrary bundle $S$. First, let
\[S_{\le k} = \{g_j \in S\ |\ j \leq k\}
\]
For example, $S_{\le 0} = \emptyset$ and $S_{\le m} = S$ for all $S$. Also, whenever $g_k \in S$, we have $S_{\le k} = S_{\le k-1} \cup\{g_k\}$. Let $\delta_k^S = v_1(S_{\le k}) - v_1(S_{\le k-1})$. Note that for all $S$, $v_1(S) = \sum_{k = 1}^m \delta_k^S$.

\begin{lemma}\label{lem:prop-chi}
Assuming Condition~\ref{cut-failed-cond}, for any bundle $S$, $v_1(S) \geq 1/2$ if and only if $\chi(S) \geq 1/2$.
\end{lemma}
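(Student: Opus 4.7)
The plan is to build the equivalence on two ingredients: a pointwise inequality $v_1(S) \geq \chi(S)$ that comes from submodularity, and Condition~\ref{cut-failed-cond}, which blocks the only way the converse could fail.

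First I would establish that $v_1(S) \geq \chi(S)$ for every bundle $S \subseteq M$. Writing $v_1(S) = \sum_{k=1}^m \delta_k^S$ via telescoping along the chain $\emptyset = S_{\le 0} \subseteq S_{\le 1} \subseteq \dots \subseteq S_{\le m} = S$, and noting $\delta_k^S = 0$ when $g_k \notin S$, it is enough to show $\delta_k^S \geq \delta_k^M$ for each $g_k \in S$. Since $S \subseteq M$ implies $S_{\le k-1} \subseteq M_{\le k-1}$, submodularity (diminishing marginal returns) gives
\[
v_1(S_{\le k-1} \cup \{g_k\}) - v_1(S_{\le k-1}) \;\geq\; v_1(M_{\le k-1} \cup \{g_k\}) - v_1(M_{\le k-1}),
\]
which is exactly $\delta_k^S \geq \delta_k^M$. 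Summing over $g_k \in S$ yields $v_1(S) \geq \chi(S)$.

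The easy direction of the lemma then follows immediately: if $\chi(S) \geq 1/2$ then $v_1(S) \geq \chi(S) \geq 1/2$. For the other direction, suppose $v_1(S) \geq 1/2$ and, for contradiction, $\chi(S) < 1/2$. Since $\chi(S) + \chi(M\setminus S) = \sum_{k=1}^m \delta_k^M = v_1(M) = 1$, we get $\chi(M\setminus S) > 1/2$, hence by the pointwise bound $v_1(M\setminus S) \geq \chi(M \setminus S) > 1/2$. But then for the allocation $A = (S, M\setminus S)$, player 1 is happy with $A$ (since $v_1(S) \geq 1/2$) and also with $\overline{A}$ (since $v_1(M\setminus S) > 1/2$), contradicting Condition~\ref{cut-failed-cond}. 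Therefore $\chi(S) \geq 1/2$.

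I expect the main obstacle to be the submodularity step $\delta_k^S \geq \delta_k^M$; once that inequality is in hand, the rest is a short bookkeeping argument using $\chi(M) = v_1(M) = 1$ together with Condition~\ref{cut-failed-cond}.
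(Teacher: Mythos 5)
Your proof is correct and follows essentially the same route as the paper: the key inequality $v_1(S) \geq \chi(S)$ via submodularity applied to $S_{\le k-1} \subseteq M_{\le k-1}$, combined with $\chi(S) + \chi(M\setminus S) = v_1(M) = 1$ and Condition~\ref{cut-failed-cond}. The only cosmetic difference is that you phrase the harder direction as a contradiction (player 1 would be happy with both $A$ and $\overline{A}$), whereas the paper argues it directly; the underlying logic is identical.
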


\begin{proof}
We first claim that for any bundle $S$ and any item $g_k \in S$, $\delta_k^S \ge \delta_k^M$. We have
\[
\delta_k^S = v_1(S_{\le k}) - v_1(S_{\le k-1}) = v_1(S_{\le k-1}\cup \{g_k\}) - v_1(S_{\le k-1})
\]
and
\[
\delta_k^M = v_1(M_{\le k}) - v_1(M_{\le k-1}) = v_1(M_{\le k-1}\cup \{g_k\}) - v_1(M_{\le k-1})
\]
Since $S_{\leq k-1} \subseteq M_{\leq k-1}$, we have $v_1(S_{\le k-1}\cup \{g_k\}) - v_1(S_{\le k-1}) \geq v_1(M_{\le k-1}\cup \{g_k\}) - v_1(M_{\le k-1})$ by submodularity. Thus $\delta_k^S \geq \delta_k^M$ for all $k$ and $S$. Therefore for any bundle $S$,
\[
v_1(S) = \sum\limits_{g_k \in S} \delta_k^S \geq \sum\limits_{g_k \in S} \delta_k^M 
= \chi(S)
\]
so $v_1(S) \geq \chi(S)$ for all $S \subseteq M$.

Suppose $\chi(S) \geq 1/2$: then we immediately have $v_1(S) \geq 1/2$ by the above argument. Suppose $v_1(S) \geq 1/2$. Then by Condition~\ref{cut-failed-cond}, $v_1(\mss) < 1/2$. Therefore $\chi(\mss) < 1/2$. Next, we have
\[
\chi(S) + \chi(\mss) = \sum\limits_{g_k \in S} \delta_k^M + \sum\limits_{g_k \in \mss} \delta_k^M = \sum_{k = 1}^m \delta_k^M = v_1(M) = 1
\]
Since $\chi(\mss) < 1/2$, we have $\chi(S) \geq 1/2$.
\end{proof}


\propSubmodUpper*

\begin{proof}
If the protocol terminates in step 1, just one bundle is communicated (and zero values), which requires $m$ bits. Thus in this case, the communication cost is $m \leq (m+1)v^{size} + m$. If the protocol does not terminate in step 1, then the $m$ values $(\delta_1^M...\delta_m^M)$ are sent, plus the bundle $S^*(v_1)$, plus the value $\bfc_1(S^*(v_1))$. By definition of $\bfc_1^{Prop}$, $\bfc_1(S^*(v_1))$ requires a single value to communicate. 

Thus in this case, $m+1$ values and one bundle are communicated, so the communication cost is $(m+1)v^{size} + m$. Therefore the communication cost bound is satisfied.

It remains to prove correctness. Suppose the protocol terminates in step 1. By Condition~\ref{cond:happy-with-one}, player 2 is happy with at least one of $A$ and $\overline{A}$. Therefore player 2 is happy with whichever of $A$ and $\overline{A}$ she declares to be 1-Prop. Player 1 is happy with both $A$ and $\overline{A}$, so she is also happy. Therefore if the protocol terminates in step 1, the declared allocation is in fact 1-Prop.

Suppose the protocol does not terminate in step 1. We assume Condition~\ref{cut-failed-cond} for the remainder of the proof. Suppose player 2 declares that $(S, \mss)$ is 1-Prop in step 3: then
\[
\chi(S) \geq 1/2\ \ \textnormal{and}\ \ v_2(\mss) \geq 1/2
\]
Thus by Lemma~\ref{lem:prop-chi}, $v_1(S) \geq 1/2$, so $(S, \mss)$ is indeed a 1-Prop allocation.

So suppose the protocol does not terminate until step 4. We first claim that no 1-Prop allocation exists. Suppose that a 1-Prop allocation $A$ does exist: then $v_i(A_i) \geq 1/2$ for both $i$. Since the protocol did not terminate in step 1, we have Condition~\ref{cut-failed-cond}. Thus by Lemma~\ref{lem:prop-chi}, $\chi(A_1) \geq 1/2$. Let $S = A_1$: then
\[
\chi(S) \geq 1/2\ \ \textnormal{and}\ \ v_2(\mss) = v_2(A_2) \geq 1/2
\]
so the protocol should have terminated in step 3, which is a contradiction.

Therefore no 1-Prop allocation exists. It remains to show that we return a $c^*$-Prop allocation in this case. Let $i =\argmax_{i' \in \{1,2\}} \bfc_{i'}(S^*(v_{i'}))$ as computed by player 2 in step 4. Let $A$ be the allocation returned by the protocol in this case: $A_{i} = S^*(v_{i})$ and $A_{\overline{i}} = M\backslash S^*(v_{i})$.

We first claim that $A$ is $\bfc_i(S^*(v_i))$-Prop. Player $i$ is $\bfc_i(S^*(v_i))$-happy with $A$ by definition, and we claim that player $\overi$ is 1-happy with $A$. If $\overi$ were not 1-happy with $A$, then she must be 1-happy with $\overline{A}$ by Condition~\ref{cond:happy-with-one}. Furthermore, Since player $i$ is not 1-happy with $A$, she must be 1-happy with $\overline{A}$ also by Condition~\ref{cond:happy-with-one}. But then both players are 1-happy with $\overline{A}$, which is a contradiction.

Thus $A$ is $\bfc_i(S^*(v_i))$-Prop. Suppose that $c^* \neq \bfc_i(S^*(v_i))$: then there exists an allocation $A'$ where $A'$ is $c$-Prop for some $c > \bfc_i(S^*(v_i))$. We know that player $i$ cannot be happier than $\bfc_i(S^*(v_i))$-happy without being 1-happy, so player $i$ must be 1-happy with $A'$. That implies that player 2 is not 1-happy with $A'$, since no allocation makes both players 1-happy in this case. But then the happiest player $\overi$ can be is $\bfc_{\overi}(S^*(v_{\overi}))$, and $\bfc_{\overi}(S^*(v_{\overi}))\leq \bfc_i(S^*(v_i))$ by assumption. Thus for any allocation, there is a player who is at most $\bfc_i(S^*(v_i))$-happy. Therefore no allocation is $c$-Prop for any $c > \bfc_i(S^*(v_i))$.
\end{proof}

\section{PAS for EF with submodular valuations}\label{sec:EF-n=2-submod-upper}

In this section, we prove our other positive result for specifically submodular valuations: a deterministic protocol for $c$-EF when $c < 1$, and when there are two players. This is our most technically involved result. We prove the following theorem:

\begin{restatable}{thm}{EFsubmodUpper}
\label{thm:EF-n=2-submod-upper}
For two players with submodular valuations and any $c < 1$, Protocol~\ref{pro:minimal-upper} has communication cost at most $2m(m+1)^{\frac{8}{1-c}} + 2\valsize$, and either returns a $c$-EF allocation or a $c^*$-EF allocation. This also implies that
\[
D_{submod}(2,m,\text{EF}, c) \le 2m(m+1)^{\frac{8}{1-c}} + 2\valsize
\]
for any $c < 1$.
\end{restatable}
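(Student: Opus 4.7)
The plan is to invoke the general Minimal Bundle Protocol (Protocol~\ref{pro:minimal-upper}), so correctness comes essentially for free: $c$-EF always satisfies Condition~\ref{cond:happy-with-one}, since each player is $c$-happy with whichever side of any 2-partition has higher value. The communication cost of the protocol is dominated by transmitting player 1's minimal bundles (each an $m$-bit string) together with the fallback pair $(S^*(v_1), \bfc_1(S^*(v_1)))$ accounting for the additive $2\valsize$ term. So Theorem~\ref{thm:EF-n=2-submod-upper} reduces to a purely combinatorial claim: for any submodular $v_1$ on $m$ items and any $c < 1$, the number of minimal $c$-EF bundles for player 1 is at most $2(m+1)^{8/(1-c)}$.

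To attack this combinatorial claim I would construct the minimal bundle graph $G$: the vertices are the minimal $c$-EF bundles of $v_1$, and two vertices are joined by an edge if and only if they differ by swapping a single item. A necessary preliminary lemma is that distinct minimal bundles can overlap in at most one item; the intended argument uses minimality plus submodularity to show that if $|S \cap T|$ were large, one could drop a shared item from $S$ and still have $v_1(S \setminus \{g\}) \geq c \cdot v_1((M \setminus S) \cup \{g\})$, contradicting minimality. Once this is in place, edges in $G$ correspond exactly to "local swaps," which is the right geometry for the encoding argument to come.

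The heart of the proof is a two-tier classification of edges. Call an edge $\{S, T\}$ obtained by swapping item $g$ \emph{special} if the induced change in $v_1$ is at least a constant times $(1-c)\,v_1(M)$, and \emph{ordinary} otherwise. Submodularity, via diminishing returns, should bound the special-edge degree of each vertex by $O(1/(1-c))$: too many high-marginal swaps at a single bundle would together contribute a cumulative change exceeding $v_1(M)$. This yields $O(m/(1-c))$ special edges, and hence $O(m/(1-c))$ "anchor" minimal bundles incident to a special edge. For ordinary edges, every swap moves $v_1$ by at most a $(1-c)$-fraction of $v_1(M)$, so a path of more than $O(1/(1-c))$ ordinary steps from any anchor must cross the $c$-EF boundary and leave the minimal bundle family. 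Every minimal bundle can therefore be encoded by an anchor plus an ordered list of $O(1/(1-c))$ items specifying the ordinary swaps, giving a count of at most $(m+1)^{O(1/(1-c))}$.

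The step I expect to be the main obstacle is the last one: calibrating the special/ordinary threshold, the overlap lemma, and the path-counting so that the exponent comes out at exactly $8/(1-c)$ rather than a larger constant multiple, and verifying that the anchor-plus-path encoding is injective onto the minimal bundles. Once that calibration is nailed down, multiplying the bound $2(m+1)^{8/(1-c)}$ on the number of minimal bundles by the $m$ bits needed per bundle, and adding the $2\valsize$ overhead for the fallback, yields the claimed communication cost of $2m(m+1)^{8/(1-c)} + 2\valsize$.
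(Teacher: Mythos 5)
Your reduction of the theorem to the combinatorial claim ``at most $2(m+1)^{8/(1-c)}$ minimal bundles under Condition~\ref{cut-failed-cond}'' is exactly the paper's, and your intuition that submodularity caps the number of high-marginal swaps at a single vertex (else their telescoping sum would exceed $v_1(M)$) is precisely the final step of the paper's proof. But the way you assemble these pieces into a count of minimal bundles does not work, in two places. First, a per-vertex bound of $O(1/(1-c))$ on special edges yields $|E_+| = O(|V|/(1-c))$, not $O(m/(1-c))$; you cannot conclude there are few ``anchors'' without already knowing $|V|$ is small, which is the quantity you are trying to bound --- the argument is circular. The paper argues in the opposite direction: it proves \emph{lower} bounds, namely that every minimal bundle $S$ has out-degree at least $|S|$ in the minimal bundle graph (Lemma~\ref{lem:graph-outdegree}) and that at least half of all edges are special (Lemma~\ref{lem:e-plus-size}, via Lemma~\ref{lem:marginal-value}: for every $g$ in a minimal bundle $S$, at least one of $\Delta(S\backslash\{g\},g)$ and $\Delta(\mss,g)$ is at least $(1-c)/2$). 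Then, if $|V| \ge 2(m+1)^{4/\alpha}$, the counting bound $\sum_{j\le \beta}\binom{m}{j}\le(m+1)^\beta$ forces at least half the vertices to be large, whence $|E_+| > |V|/\alpha$, whence some vertex has more than $1/\alpha$ special out-edges --- and only at that point does the submodular telescoping contradiction with $v_1(M)=1$ apply.

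Second, your path-encoding step has no support: every vertex of the graph is a minimal bundle and hence already satisfies the happiness inequality, so a long walk along low-marginal edges does not ``cross the $c$-EF boundary and leave the family,'' and neither reachability of every minimal bundle from an anchor nor injectivity of the anchor-plus-path encoding is established. Relatedly, your picture of the graph is off: the paper's edges join nearly \emph{complementary} bundles, $(S,T)\in E$ iff $T \subseteq (\mss)\cup\{g\}$ for some $g\in S$ (so $S\cap T=\{g\}$), not bundles differing by a single item swap; and the preliminary lemma you propose (all pairs of minimal bundles overlap in at most one item) is neither proved nor needed in the paper --- the overlap property is a consequence of the edge definition together with Condition~\ref{cut-failed-cond} (Lemmas~\ref{lem:g-in-T} and~\ref{lem:g-unique}), and your sketch of why a large overlap would contradict minimality does not go through. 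The correctness and cost-accounting portions of your proposal are fine; the gap is entirely in the bound on $|\cals|$.
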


This constitutes a polynomial-communication approximation scheme (PAS): the communication cost approaches infinity exponentially as $c$ goes to $1$, but for any fixed constant $c < 1$, it is polynomial in $m$.\footnote{Because the dependence on $\frac{1}{1-c}$ is exponential, this constitutes a PAS but not an FPAS. An FPAS is ruled out in Section~\ref{sec:EF-n=2-submod-lower}.}

We use much of the same terminology from Section~\ref{sec:prop-n=2-submod-upper}: in particular, $\bfc_i^P(A)$, $S^*(v_i)$, $S_{\le k}$, and $\delta_k^S$. Also, recall the following condition:

\condHappyWithOne*

This is satisfied for $c$-EF for any $c \in [0,1]$, even for general valuations: if $v_i(A_i) \geq v_i(A_{\overi})$, player $i$ is happy with $A$. Otherwise, $v_i(A_{\overi}) \geq v_i(A_i)$, so she is happy with $\overline{A}$.

Our PAS protocol will use the minimal bundle analysis discussed in Section~\ref{sec:techniques}. For a fixed constant $c$, we say that a bundle $S$ is \emph{minimal} for a particular player if that player is $c$-happy with $S$, but for all $g \in S$, she is not $c$-happy with $S\backslash \{g\}$. We use $\cals$ to denote the set of player 1's minimal bundles: each $S \in \cals$ is a minimal bundle for player 1. Also, in this section, we assume that $v_1(M) = 1$.

\begin{figure}
\centering
\begin{tabular}{ c|ccc} 
bundle $S$ & $v_i(S)$ & 1-Prop? & minimal? \\
\hline
$\{g_1\}$ & 2 & no & N/A\\
$\{g_2\}$ & 4 & no & N/A\\
$\{g_3\}$ & 5 & yes & yes\\
$\{g_1, g_2\}$ & 6 & yes & yes\\
$\{g_1, g_3\}$ & 7 & yes & no\\
$\{g_2, g_3\}$ & 9 & yes & no\\
$\{g_1, g_2, g_3\}$ & 10 & yes & no
\end{tabular}
\caption{An example demonstrating the minimal bundle property for $P =$ Prop and $c=1$. This instance involves a valuation $v_i$ over three goods. Since $v_i(M) = v_i(\{g_1, g_2, g_3\}) = 10$ in this case, player $i$ is happy with $S$ if and only if $v_i(S) \ge 10/n = 5$. For example, player $i$ is happy with $\{g_1, g_3\}$, but that bundle is not minimal, since player $i$ is also happy with $\{g_3\}$. In contrast, $\{g_1, g_2\}$ is minimal, since player $i$ is happy with neither $\{g_1\}$ nor $\{g_2\}$.}
\label{fig:minimal}
\end{figure}

\subsection{The protocol}

We now describe Protocol~\ref{pro:minimal-upper}, also known as the Minimal Bundle Protocol. Although we only consider envy-freeness in this section, we define Protocol~\ref{pro:minimal-upper} for either $P \in$ \properties. We will use this same protocol in Section~\ref{sec:n=2-subadd-upper} to prove upper bounds for both envy-freeness and proportionality in the subadditive case.

First, if there is an allocation $A$ where player 1 is happy with both $A$ and $\overline{A}$, we are done: player 2 chooses her favorite of $A$ and $\overline{A}$, and she is guaranteed to be happy with at least one them by Condition~\ref{cond:happy-with-one}. If there is no such allocation $A$, player 1 sends the set $\cals$ of all of her minimal bundles to the other player. We will prove that in our setting, the number of minimal bundles sent in step 2 must be polynomial in $m$. Specifically, we will show that  $|\cals | < 2(m+1)^{\frac{8}{1-c}}$.

The minimal bundles represent the most player 1 is willing to compromise while still being happy: she does not require anything more than a minimal bundle, but she is not happy with any strict subset of any of her minimal bundles. In this way, receiving a minimal bundle is both necessary and sufficient for player 1 to be happy. Using this reasoning, we will show that knowing $\cals$ is sufficient for player 2 to find a $c$-$P$ allocation or show that none exists. Finally, step 4 is identical to that of Protocol~\ref{pro:prop}, and is used to find a $c^*$-$P$ allocation when no $c$-$P$ allocation exists.

\newcounter{pro:minimal-upper}
\setcounter{pro:minimal-upper}{\value{algorithm}}
\begin{restatable}[tb]{protocol}{proMinimalUpper}
\begin{center}
\protocolspacing
Private inputs: $v_1, v_2$\\
Public inputs: $P, c$
\end{center}

\begin{enumerate}
\item If there exists an allocation $A$ where player 1 is happy with both $A$ and $\overline{A}$, player 1 sends that allocation to player 2. If player 2 is happy with $A$, she declares that $A$ is $c$-$P$, otherwise she declares that $\overline{A}$ is $c$-$P$.

\item If there is no such allocation $A$, player 1 sends the set $\cals$ of her minimal bundles to player 2. She also sends the bundle $S^*(v_1)$ and the value $\bfc_{1}(S^*(v_{1}))$.

\item Player 2 first checks if there exists a bundle $S \in \cals$ where player 2 is happy with $M\backslash S$. If so, she declares that $(S ,M\backslash S)$ is $c$-$P$.

\item If not, player 2 computes $S^*(v_2)$ and $i =\argmax_{i' \in \{1,2\}} \bfc_{i'}(S^*(v_{i'}))$. Let $A$ be the allocation where $A_{i} = S^*(v_{i})$ and $A_{\overline{i}} = M\backslash S^*(v_{i})$. Player 2 then declares that $A$ is $\bfc_{i}(S^*(v_{i}))$-$P$, and that $c^* = \bfc_{i}(S^*(v_{i}))$.

\end{enumerate}
\caption{Protocol for two players to either find a $c$-$P$ allocation or a $c^*$-$P$ allocation.}
\label{pro:minimal-upper}
\end{restatable}


\subsection{Correctness}

We now formally prove the correctness of Protocol~\ref{pro:minimal-upper}. We will prove a few helpful lemmas before proving the main correctness lemma (Lemma~\ref{lem:minimal-correct}).

\begin{restatable}{lemma}{lemMinimalSound}
\label{lem:minimal-sound}
If Protocol~\ref{pro:minimal-upper} declares an allocation to be $c$-$P$, the allocation is in fact $c$-$P$.
\end{restatable}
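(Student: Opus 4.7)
The plan is to verify soundness at each of the three possible declaration points of Protocol~\ref{pro:minimal-upper} in turn. The first two cases are essentially by construction; the main work lies in step 4.

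If the protocol terminates in step 1, player 1 has exhibited an allocation $A$ such that she is $c$-happy with both $A$ and $\overline{A}$, and player 2 selects whichever of the two she herself is $c$-happy with. Condition~\ref{cond:happy-with-one}---which the paper already observes holds for $c$-EF on arbitrary valuations and for $c$-Prop on subadditive valuations---guarantees that at least one such choice exists, so both players are $c$-happy with the declared allocation. If the protocol terminates in step 3, player 2 has identified $S \in \cals$ such that she is $c$-happy with $\mss$; since $S \in \cals$ is a minimal bundle, player 1 is $c$-happy with $S$ by definition, so both players are $c$-happy with $(S, \mss)$.

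The main case is step 4, where the declared allocation is $A$ with $A_i = S^*(v_i)$ for $i = \argmax_{i' \in \{1,2\}} \bfc_{i'}(S^*(v_{i'}))$, and the claim is that $A$ is $\bfc_i(S^*(v_i))$-$P$. Player $i$ is exactly $\bfc_i(S^*(v_i))$-happy with $A$ by definition of $\bfc_i$ and $S^*$, so the obstacle is to show that player $\overi$ is also at least $\bfc_i(S^*(v_i))$-happy with $A$. I would argue by contradiction: suppose player $\overi$ is not $c$-happy with $A$; then Condition~\ref{cond:happy-with-one} forces her to be $c$-happy with $\overline{A}$. Since $\bfc_i(S^*(v_i)) < c$ by the definition of $S^*(v_i)$, player $i$ is likewise not $c$-happy with $A$, so Condition~\ref{cond:happy-with-one} also forces player $i$ to be $c$-happy with $\overline{A}$. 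Hence both players are $c$-happy with $\overline{A}$.

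To complete the contradiction, observe that player 1 being $c$-happy with her bundle $B$ in $\overline{A}$ implies $B \supseteq S$ for some minimal bundle $S \in \cals$: starting from $B$ and iteratively deleting items while $c$-happiness is maintained terminates at such an $S$. Player 2's bundle in $\overline{A}$ is then $M \setminus B \subseteq M \setminus S$, and $c$-happiness for both EF and Prop is preserved under enlarging one's own bundle (monotonicity of valuations and shrinking of the opponent's bundle), so player 2 is $c$-happy with $\mss$. But then the pair $(S, \mss)$ would have triggered step 3, contradicting that the protocol advanced to step 4. Hence player $\overi$ is $c$-happy with $A$, and in particular at least $\bfc_i(S^*(v_i))$-happy since $\bfc_i(S^*(v_i)) \le c$. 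The main obstacle is this last case: correctly combining Condition~\ref{cond:happy-with-one}, monotonicity of $c$-happiness, and the minimal-bundle extraction to conclude that the failure of step 3 rules out the only way player $\overi$ could fall short in $A$.
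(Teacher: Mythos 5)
Your handling of steps 1 and 3 is correct and matches the paper's proof of this lemma exactly. However, the bulk of your effort --- the step 4 analysis --- addresses a claim that this lemma does not make. Lemma~\ref{lem:minimal-sound} concerns only declarations that an allocation is $c$-$P$ for the fixed input constant $c$, and the paper's proof opens by observing that only steps 1 and 3 ever issue such a declaration: step 4 declares the allocation to be $\bfc_{i}(S^*(v_{i}))$-$P$, and by the definition of $S^*(v_i)$ we have $\bfc_{i}(S^*(v_{i})) < c$, so step 4 is outside the lemma's scope. The correctness of step 4's declaration (that the returned allocation really is $c^*$-$P$) is deferred to Lemma~\ref{lem:minimal-correct}. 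That said, your step 4 argument is itself sound and essentially reconstructs the paper's proof of that later lemma: the contradiction via Condition~\ref{cond:happy-with-one} forcing both players to be $c$-happy with $\overline{A}$, followed by the minimal-bundle extraction (Lemma~\ref{lem:minimal-subset}) showing step 3 would then have fired, is exactly the right chain of reasoning --- it just belongs to a different lemma in the paper's decomposition. So: no gap, but the proof proves strictly more than the statement asserts.
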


\begin{proof}
The only two steps that can declare an allocation to be $c$-$P$ are steps 1 and 3. Suppose the protocol declares an allocation to be $c$-$P$ in step 1.  Then by assumption, there exists an allocation $A$ where player 1 is happy with both $A$ and $\overline{A}$. If player 2 declares $A$ to be $c$-$P$, then both players are happy with $A$, and the claim is satisfied. If player 2 declares $\overline{A}$ to be $c$-$P$, then she was not happy with $A$. By Condition~\ref{cond:happy-with-one}, player 2 is happy with $\overline{A}$. Thus $\overline{A}$ is $c$-$P$ in this case, so the lemma is satisfied if the protocol terminates in step 1.

Suppose the protocol declares an allocation to be $c$-$P$ in step 3. Then the allocation declared can be written as $(S, M\backslash S)$ for some $S \in \cals$. Since $S$ is minimal, player 1 is happy with $S$ by assumption, and player 2 only declares an allocation to be $c$-$P$ in this step if she is happy with it. Thus the lemma is satisfied in this case as well.
\end{proof}

\begin{restatable}{lemma}{lemMinimalSubset}
\label{lem:minimal-subset}
Player 1 is happy with a bundle $S$ if and only if there exists a minimal bundle $T$ where $T \subseteq S$.
\end{restatable}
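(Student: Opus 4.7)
The plan is to prove the two directions separately. The reverse direction ($\Leftarrow$) is a direct consequence of monotonicity. Assuming a minimal $T \subseteq S$, I would use $v_1(S) \geq v_1(T)$ (from monotonicity, since $T \subseteq S$) to deduce happiness with $S$. For $P = \text{Prop}$ this is immediate because the happiness threshold $\frac{c}{n} v_1(M)$ does not depend on the bundle. For $P = \text{EF}$ I would additionally observe that $M \setminus S \subseteq M \setminus T$, so monotonicity gives $v_1(M \setminus S) \leq v_1(M \setminus T)$, and then $v_1(S) \geq v_1(T) \geq c \cdot v_1(M \setminus T) \geq c \cdot v_1(M \setminus S)$. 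This direction should take only a few lines.

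For the forward direction ($\Rightarrow$), the plan is to construct $T$ by iterated deletion starting from $S$. Formally, build a decreasing chain $S = S_0 \supseteq S_1 \supseteq \cdots$ as follows: at each step check whether the current $S_i$ is minimal, in which case stop and take $T = S_i$; otherwise, by the failure of the minimality condition there must exist some $g \in S_i$ such that player 1 is still happy with $S_i \setminus \{g\}$, and we set $S_{i+1} = S_i \setminus \{g\}$ and continue. Cardinality strictly decreases so the process halts after at most $m$ steps at some $T \subseteq S$ which is happy but admits no further item removal while preserving happiness — that is exactly the definition of a minimal bundle.

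The only real thing to check is that the descent cannot terminate at the wrong place, specifically at $T = \emptyset$. Under the standing assumptions $v_1(M) > 0$ and $c > 0$, the empty bundle has value $0$ and satisfies neither $c$-EF happiness (which would require $0 \geq c \cdot v_1(M) > 0$) nor $c$-Prop happiness, so the chain — which preserves happiness at every step — cannot reach $\emptyset$ starting from a happy $S$. There is otherwise no serious obstacle: the argument is essentially a monotonicity inequality in one direction and a finite descent in the other.
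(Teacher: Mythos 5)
Your proof is correct and matches the paper's argument: the reverse direction is the same monotonicity argument (with the extra observation $v_1(M\setminus S)\le v_1(M\setminus T)$ for EF), and the forward direction is the same iterated-deletion descent. The extra check that the chain cannot terminate at $\emptyset$ is harmless but unnecessary—wherever the descent halts is by construction a bundle that is happy and admits no happiness-preserving removal, i.e., a minimal bundle.
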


\begin{proof}
$(\implies)$ Suppose player 1 is happy with bundle $S$. If $S$ is minimal, we are done, so assume $S$ is not minimal. Then there exists $g\in S$ where player 1 is happy with $S\backslash\{g\}$. If $S\backslash\{g\}$ is not minimal, there again exists some $g' \in S\backslash \{g\}$ that we can remove, and this process can be repeated until we obtain some minimal subset of $S$.

$(\impliedby)$ Suppose there exists a minimal bundle $T$ where $T \subseteq S$. Then by monotonicity, $v_1(S) \geq v_1(T)$. Since $T$ is minimal, player 1 is happy with $T$. If $P =$ Prop, this is sufficient to show that player 1 is happy with $S$. If $P =$ EF, it is also necessary to note that $v_1(M\backslash S) \leq v_1(M\backslash T)$, again by monotonicity. Thus the claim holds for both $P \in$ \properties.
\end{proof}


\begin{restatable}{lemma}{lemMinimalComplete}
\label{lem:minimal-complete}
Protocol~\ref{pro:minimal-upper} declares an allocation to be $c$-$P$ if and only if a $c$-$P$ allocation exists.
\end{restatable}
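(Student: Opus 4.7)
The plan is straightforward: the $(\Longleftarrow)$ direction follows immediately from Lemma~\ref{lem:minimal-sound} (anything declared $c$-$P$ really is $c$-$P$, hence witnesses the existence of a $c$-$P$ allocation), so the substantive content is the $(\Longrightarrow)$ direction. For that, I would fix an arbitrary $c$-$P$ allocation $A^* = (A_1^*, A_2^*)$ and case on whether Protocol~\ref{pro:minimal-upper} terminates in step~1.

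If it terminates in step~1, then by Lemma~\ref{lem:minimal-sound} the allocation returned is $c$-$P$ and there is nothing more to do. Otherwise, Condition~\ref{cut-failed-cond} holds, so the protocol reaches step~2 and player 1 sends $\cals$. Since $A^*$ is $c$-$P$, player 1 is $c$-happy with $A_1^*$, and Lemma~\ref{lem:minimal-subset} produces a minimal bundle $T \in \cals$ with $T \subseteq A_1^*$. I would then show that player 2 is $c$-happy with $M \setminus T$, so that the check in step~3 succeeds on this $T$ and the declared allocation $(T, M\setminus T)$ is $c$-$P$, completing the proof.

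The only step that actually requires work is verifying that player 2 is $c$-happy with $M\setminus T$, and this splits by $P$. Note that $T \subseteq A_1^*$ gives $A_2^* \subseteq M \setminus T$, so by monotonicity $v_2(M\setminus T) \ge v_2(A_2^*)$. If $P = \text{Prop}$, player 2's $c$-happiness with $A^*$ says $v_2(A_2^*) \ge \tfrac{c}{2} v_2(M)$, and combining gives $v_2(M\setminus T) \ge \tfrac{c}{2} v_2(M)$, i.e.\ player 2 is $c$-happy with $M\setminus T$. If $P = \text{EF}$, player 2's $c$-happiness with $A^*$ says $v_2(A_2^*) \ge c\cdot v_2(A_1^*)$, and monotonicity also gives $v_2(A_1^*) \ge v_2(T)$, so
\[
v_2(M\setminus T) \;\ge\; v_2(A_2^*) \;\ge\; c\cdot v_2(A_1^*) \;\ge\; c\cdot v_2(T),
\]
which is exactly the condition for player 2 to be $c$-EF-happy with the allocation $(T, M\setminus T)$.

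I do not anticipate a real obstacle here: the monotonicity argument for player 2 is immediate once we have a minimal $T \subseteq A_1^*$ from Lemma~\ref{lem:minimal-subset}, and the only care needed is to handle both fairness notions uniformly. The proof is essentially a two-line case analysis layered on top of the already-proved lemmas.
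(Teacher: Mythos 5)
Your proposal is correct and follows essentially the same route as the paper: one direction is immediate from Lemma~\ref{lem:minimal-sound}, and the other uses Lemma~\ref{lem:minimal-subset} to extract a minimal bundle $T \subseteq A_1^*$ and monotonicity to show player 2 is happy with $M\setminus T$, so step~3 must declare some allocation (your explicit EF/Prop case split just unpacks what the paper compresses into ``by monotonicity''). The only cosmetic slip is that your arrow labels are swapped --- ``declares implies exists'' is the $(\Longrightarrow)$ direction of the statement as written --- and that step~3 may declare a different bundle of $\cals$ than your $T$, which is harmless since soundness covers whatever is declared.
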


\begin{proof}
If no $c$-$P$ allocation exists, the protocol does not declare any allocation to be $c$-$P$ by Lemma~\ref{lem:minimal-sound}. Thus assume a $c$-$P$ allocation $A$ exists. Then player 1 is happy with $A_1$, so by Lemma~\ref{lem:minimal-subset}, there exists $S \in \cals$ where $S \subseteq A_1$. Then $A_2 \subseteq M\backslash S$, so by monotonicity, player 2 is happy with $M\backslash S$. Thus if the protocol has not already terminated, player 2 will declare will declare $(S, M\backslash S)$ to be $c$-$P$. Then by Lemma~\ref{lem:minimal-sound}, the declared allocation is in fact $c$-$P$, so the claim is satisfied in this case.

If the protocol terminated before player 2 considered $S$ in step 3, the protocol declared some other allocation to be $c$-$P$, and the declared allocation is again $c$-$P$ by Lemma~\ref{lem:minimal-sound} in this case. Thus the claim is satisfied in both cases.
\end{proof}

Finally, we show that the protocol correctly returns a $c^*$-$P$ allocation if no $c$-$P$ allocation exists. Recall the definitions of $S^*(v_i)$ and $\bfc(S)$: $\bfc_i(S)$ is the maximum $c' \leq 1$ where player $i$ is $c'$-happy with $S$, and $S^*(v_i) = \argmax_{S\subseteq M:\ \bfc_i(S) < c} \bfc_i(S)$. In words, $S^*(v_i)$ is the bundle that makes player $i$ the most happy, out of all the bundles that do not make her $c$-happy. For $P =$ EF, $\bfc_i(S) = \min\left(1, \mfrac{v_i(S)}{v_i(\mss)}\right)$.

\begin{restatable}{lemma}{lemMinimalCorrect}
\label{lem:minimal-correct}
Protocol~\ref{pro:minimal-upper} either returns a $c$-$P$ allocation or a $c^*$-$P$ allocation.
\end{restatable}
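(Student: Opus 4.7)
The plan is to combine Lemmas~\ref{lem:minimal-sound} and~\ref{lem:minimal-complete} for everything that happens before step~4 and then handle the step~4 case by hand. By Lemma~\ref{lem:minimal-complete}, the protocol declares some allocation to be $c$-$P$ if and only if a $c$-$P$ allocation exists, and by Lemma~\ref{lem:minimal-sound} any such declaration is correct; so whenever the protocol terminates in step~1 or step~3 it returns a valid $c$-$P$ allocation, which is also $c^*$-$P$. The only remaining case is when the protocol reaches step~4, and by Lemma~\ref{lem:minimal-complete} this happens precisely when no $c$-$P$ allocation exists. In that case I need to show two things about the output $A$, where $A_i = S^*(v_i)$ and $A_{\overi} = M\backslash S^*(v_i)$: first, that $A$ is actually $\bfc_i(S^*(v_i))$-$P$, and second, that $c^* = \bfc_i(S^*(v_i))$.

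For the first claim, player~$i$ is $\bfc_i(S^*(v_i))$-happy with $A_i = S^*(v_i)$ by definition, so the substantive step is to show that player~$\overi$ is at least that happy with $A$; I would in fact show that player~$\overi$ is $1$-happy with $A$. Suppose not. Using Condition~\ref{cond:happy-with-one} — which, in the valuation/property combinations where Protocol~\ref{pro:minimal-upper} is applied, actually yields $1$-happiness with one of $A$ and $\overline{A}$ (as noted just after the condition is restated in this section) — player~$\overi$ must then be $1$-happy with $\overline{A}$. Since $\bfc_i(S^*(v_i)) < c \leq 1$, player~$i$ is not $1$-happy with $A_i = S^*(v_i)$, so the same condition also forces her to be $1$-happy with $\overline{A}$. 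But then $\overline{A}$ is a $1$-$P$ (hence $c$-$P$) allocation, contradicting the assumption that no $c$-$P$ allocation exists.

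For the second claim, suppose for contradiction that some allocation $A'$ is $c'$-$P$ for $c' > \bfc_i(S^*(v_i))$. For each $j \in \{1,2\}$, the step~4 choice $i = \argmax_{i'} \bfc_{i'}(S^*(v_{i'}))$ gives
\[
\bfc_j(A'_j) \geq c' > \bfc_i(S^*(v_i)) \geq \bfc_j(S^*(v_j)).
\]
By the defining argmax $S^*(v_j) = \argmax_{S:\ \bfc_j(S) < c} \bfc_j(S)$, any bundle whose $\bfc_j$-value strictly exceeds $\bfc_j(S^*(v_j))$ must already satisfy $\bfc_j \geq c$ (otherwise it would have been a better argmax). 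Hence $\bfc_j(A'_j) \geq c$ for both $j$, so $A'$ is $c$-$P$, contradicting the step~4 case assumption.

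I expect the main obstacle to be the careful invocation of Condition~\ref{cond:happy-with-one} in the first claim: formally the condition only asserts $c$-happiness with one of $A$ and $\overline{A}$, whereas the contradiction requires the stronger $1$-happiness, which is specific to the settings (EF for general valuations, Prop for subadditive valuations) where Protocol~\ref{pro:minimal-upper} is actually deployed. Everything else is a clean unpacking of the definitions of $S^*(v_i)$, $\bfc_j$, and $c^*$ together with the preceding lemmas.
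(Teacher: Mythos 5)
Your proof is correct and takes essentially the same approach as the paper's (reduce steps 1 and 3 to Lemmas~\ref{lem:minimal-sound} and \ref{lem:minimal-complete}, then verify the step-4 output via Condition~\ref{cond:happy-with-one} and the argmax defining $S^*$). The obstacle you flag is not a real one: since you only need player $\overi$ to be $\bfc_i(S^*(v_i))$-happy and $\bfc_i(S^*(v_i)) < c$, the contradiction can be run entirely at level $c$ --- if player $\overi$ were not $c$-happy with $A$ then, player $i$ also not being $c$-happy with $A$, Condition~\ref{cond:happy-with-one} would make both players $c$-happy with $\overline{A}$, i.e.\ a $c$-$P$ allocation would exist --- which is exactly the paper's argument and requires no appeal to $1$-happiness.
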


\begin{proof}
If a $c$-$P$ allocation exists, Lemma~\ref{lem:minimal-complete} implies that the protocol correctly returns one, so the claim is satisfied in this case. 

Suppose no $c$-$P$ allocation exists: then the protocol does not declare an allocation to be $c$-$P$, again by Lemma~\ref{lem:minimal-complete}. Thus the protocol does not terminate until step 4. Let $i =\argmax_{i' \in \{1,2\}} \bfc_{i'}(S^*(v_{i'}))$ as computed by player 2 in step 4. Let $A$ be the allocation returned by the protocol in this case: $A_{i} = S^*(v_{i})$ and $A_{\overline{i}} = M\backslash S^*(v_{i})$.

First observe that $A$ is $\bfc_i(S^*(v_i))$-$P$: this is because player $i$ is $\bfc_i(S^*(v_i))$-happy with $A$, and player $\overi$ is $c$-happy with $A$. Suppose that $A$ is not $c^*$-$P$: then there exists an allocation $A'$ where $A'$ is $c''$-$P$ for some $c'' > \bfc_i(S^*(v_i))$. We know that player $i$ cannot be happier than $\bfc_i(S^*(v_i))$-happy without being $c$-happy, so player $i$ must be $c$-happy with $A'$. That implies that player 2 is not $c$-happy with $A'$, since no allocation makes both players $c$-happy in this case. But then the happiest player $\overi$ can be is $\bfc_{\overi}(S^*(v_{\overi}))$, and $\bfc_{\overi}(S^*(v_{\overi}))\leq \bfc_i(S^*(v_i))$ by assumption. Thus for any allocation, there is a player who is at most $\bfc_i(S^*(v_i))$-happy. Therefore $c^* = \bfc_i(S^*(v_i))$.
\end{proof}

\subsection{Communication cost}\label{sec:pro-communication}

It remains to bound the communication cost. This will primarily consist of proving an upper bound on the number of minimal bundles player 1 sends to player 2. We will go through a series of helpful lemmas before proving the final theorem.

The upper bound on the number of minimal bundles will depend on there being no allocation $A$ for which player 1 is happy with both $A$ and $\overline{A}$: recall that if there is such an allocation, then Protocol~\ref{pro:minimal-upper} terminates after step 1 and does not even send the set of minimal bundles $\cals$. This condition was defined in Section~\ref{sec:prop-n=2-submod-upper}.

\cutFailedCond*

Let $\Delta(S, g)$ be player 1's marginal value for adding item $g$ to bundle $S$. Formally, $\Delta(S, g) = v_1(S \cup \{g\}) - v_1(S)$. Also, let $\alpha = \mfrac{1-c}{2}$.

The idea behind Lemma~\ref{lem:marginal-value} is the following. Because of Condition~\ref{cut-failed-cond}, we have $c \cdot v_1(S) > v_1(\mss)$ whenever player 1 is happy with $S$. If $S$ is minimal, then moving any $g \in S$ to $\mss$ will invert this inequality: $v_1(S\backslash\{g\}) < c\cdot v_1\big((\mss) \cup\{g\}\big)$. Lemma~\ref{lem:marginal-value} uses this to show that at least one of $\Delta(S\backslash \{g\}, g)$ and $\Delta(\mss, g)$ has to be fairly large.

\begin{restatable}{lemma}{lemMarginalValue}
\label{lem:marginal-value}
Assuming Condition~\ref{cut-failed-cond}, for every minimal bundle $S$ and every good $g \in S$,
\[
\max\Big(\Delta(S\backslash \{g\}, g), \Delta(\mss, g)\Big) \geq \alpha
\] 
\end{restatable}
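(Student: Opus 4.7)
The plan is to argue by contradiction: suppose both $\Delta(S\backslash\{g\},g) < \alpha$ and $\Delta(\mss,g) < \alpha$, and derive a numerical contradiction by combining three ingredients: (i) the minimality of $S$, (ii) Condition~\ref{cut-failed-cond}, and (iii) subadditivity, which is available since submodular valuations are subadditive.

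Let $a = v_1(S\backslash\{g\})$, $b = v_1(\mss)$, $\delta_1 = \Delta(S\backslash\{g\},g)$, and $\delta_2 = \Delta(\mss,g)$, so that $v_1(S) = a+\delta_1$ and $v_1(\mss \cup \{g\}) = b+\delta_2$. Minimality of $S$ says removing $g$ makes player $1$ unhappy with what remains, giving $a < c(b+\delta_2)$. Since player $1$ is $c$-happy with $S$, Condition~\ref{cut-failed-cond} says she is not $c$-happy with $\mss$, giving $b < c(a+\delta_1)$. Subadditivity gives the absolute bound $(a+\delta_1)+b \geq v_1(M) = 1$.

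The heart of the proof is to chain the first two inequalities. Substituting each into the other yields $a(1-c^2) < c^2\delta_1 + c\delta_2$ and, symmetrically, $b(1-c^2) < c^2\delta_2 + c\delta_1$. Adding and dividing by $1-c^2$ (legal for $c<1$; the case $c=1$ is trivial since $\alpha=0$) gives $a+b < \frac{c(\delta_1+\delta_2)}{1-c}$. Under the contradiction hypothesis $\delta_1+\delta_2 < 2\alpha = 1-c$, this collapses to $a+b < c$. Combined with the subadditivity bound $a+b \geq 1-\delta_1$, one obtains $\delta_1 > 1-c$, which directly contradicts $\delta_1 < \alpha = \frac{1-c}{2}$.

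The main conceptual obstacle is noticing that neither minimality nor Condition~\ref{cut-failed-cond} alone pins down the magnitude of $a+b$; both are purely ratio-type constraints that are scale-invariant. Subadditivity is the necessary third ingredient that converts those ratios into an absolute lower bound on the marginals, and this is precisely what fails (making the lemma potentially false) if one drops submodularity/subadditivity. Beyond this observation, the argument is elementary algebra.
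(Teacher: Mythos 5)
Your proof is correct and uses exactly the same three ingredients as the paper's (the minimality inequality $v_1(S\backslash\{g\}) < c\cdot v_1((\mss)\cup\{g\})$, the Condition~\ref{cut-failed-cond} inequality $v_1(\mss) < c\cdot v_1(S)$, and subadditivity to lower-bound $v_1(S)+v_1(\mss)$ by $v_1(M)=1$); the only difference is that the paper simply adds the two strict inequalities and concludes $\Delta(S\backslash\{g\},g)+\Delta(\mss,g) > 1-c$ directly, whereas you substitute one into the other and argue by contradiction. Both are valid; the paper's route is marginally shorter and avoids the separate $c=1$ case, but your handling of that case (and of the division by $1-c^2$) is correct.
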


\begin{proof}
Since $S$ is minimal, for every good $g \in S$, we know that player 1 is not happy with $S\backslash \{g\}$. Specifically,
\[
v_1(S\backslash \{g\}) < c\cdot v_1\big((\mss)\cup\{g\}\big)
\]
so by definition of $\Delta$, we have
\[
v_1(S) - \Delta(S\backslash\{g\}, g) < c\cdot \big(v_1(\mss) + \Delta(\mss, g)\big)
= c\cdot v_1(\mss) + c\cdot \Delta(\mss, g)
\]

We also know that player 1 is happy with $S$. Thus by Condition~\ref{cut-failed-cond}, player 1 is not happy with $\mss$, so $v_1(\mss) < c \cdot v_1(S)$. Adding this to the above equation yields
\begin{align*}
v_1(S) - \Delta(S\backslash\{g\}, g) + v_1(\mss) 
	<&\ c\cdot v_1(\mss) + c\cdot \Delta(\mss, g) + c\cdot v_1(S)\\
(1-c)v_1(S) + (1-c)v_1(\mss) <&\  \Delta(S\backslash\{g\}, g) + c\cdot \Delta(\mss, g)\\
(1-c)v_1(S) + (1-c)v_1(\mss) <&\  \Delta(S\backslash\{g\}, g) +  \Delta(\mss, g)\\
(1-c)\big(v_1(S) + v_1(\mss)\big) <&\  \Delta(S\backslash\{g\}, g) + \Delta(\mss, g)\\
(1-c)v_1(M) <&\  \Delta(S\backslash\{g\}, g) +  \Delta(\mss, g)\\
\end{align*}
where the last step follows from submodularity (actually just subadditivity).

Since $v_1(M) = 1$ by assumption, we have
\begin{align*}
\Delta(S\backslash\{g\}, g) +  \Delta(\mss, g) \geq&\ 1-c\\
\max\Big(\Delta(S\backslash \{g\}), \Delta(\mss, g)\Big) \geq&\ \frac{1-c}{2} = \alpha
\end{align*}
\end{proof}

Next, we define the a directed graph $G = (V, E)$ which we call the \emph{minimal bundle graph}. The vertex set $V$ is the set of minimal bundles. With slight abuse of notation, we will use $S$ and $T$ to refer both to minimal bundles and to the corresponding vertices in $V$. We define the edge set $E$ by
\[
E = \big\{(S,T)\ |\ \exists g\in S\ \textnormal{where}\ T\subseteq (\mss)\cup \{g\} \big\}
\]

The next three lemmas establish some useful properties of the minimal bundle graph.

\begin{restatable}{lemma}{lemGinT}
\label{lem:g-in-T}
Assuming Condition~\ref{cut-failed-cond}, let $(S,T) \in E$, and let $g$ be a good in $S$ such that $T\subseteq (\mss)\cup\{g\}$. Then $g \in T$.
\end{restatable}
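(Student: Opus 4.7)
The plan is to argue by contradiction: assume $g \notin T$ and then show that player 1 would be happy with both $T$ and $M\setminus T$, violating Condition~\ref{cut-failed-cond}.

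First, I would derive the containment $S \subseteq M\setminus T$. Since $(S,T) \in E$, the hypothesis gives $T \subseteq (M\setminus S) \cup \{g\}$. If $g \notin T$, then this shrinks to $T \subseteq M\setminus S$, which is equivalent to $S \subseteq M\setminus T$.

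Next, I would use minimality of $T$ to conclude that player 1 is happy with $T$. Then, I would use minimality of $S$ (player 1 is happy with $S$) together with $S \subseteq M\setminus T$ and monotonicity to conclude that player 1 is also happy with $M\setminus T$. For $P = $ Prop this is immediate from $v_1(M\setminus T) \geq v_1(S) \geq \tfrac{c}{2}v_1(M)$. For $P = $ EF, I would additionally note that $T \subseteq M\setminus S$ gives $v_1(T) \leq v_1(M\setminus S)$ by monotonicity, so
\[
v_1(M\setminus T) \geq v_1(S) \geq c \cdot v_1(M\setminus S) \geq c \cdot v_1(T),
\]
confirming happiness with $M\setminus T$ in the EF case as well.

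Finally, I would apply Condition~\ref{cut-failed-cond} to the allocation $A = (T, M\setminus T)$: player 1 is happy with both $A$ and $\overline{A}$, which contradicts the hypothesis of Condition~\ref{cut-failed-cond}. Hence $g \in T$. The only subtle point is being careful to handle both $P \in \{\text{EF}, \text{Prop}\}$ since Protocol~\ref{pro:minimal-upper} is stated for both, but the monotonicity argument handles both cases cleanly and I do not anticipate a real obstacle.
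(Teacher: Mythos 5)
Your proposal is correct and follows essentially the same route as the paper: assume $g \notin T$, deduce $S \subseteq M\setminus T$, conclude by minimality and monotonicity that player 1 is happy with both $T$ and $M\setminus T$, and contradict Condition~\ref{cut-failed-cond}. The only difference is that you spell out the monotonicity step for both $P \in \{\text{EF},\text{Prop}\}$, which the paper handles implicitly (it is the backward direction of Lemma~\ref{lem:minimal-subset}).
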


\begin{proof}
Suppose $g\not\in T$: then $S\subseteq M\backslash T$. Since $S$ is minimal, player 1 is happy with $S$. Thus by monotonicity, player 1 is also happy with $M\backslash T$. But player 1 is also happy with $T$, because $T$ is minimal. This contradicts Condition~\ref{cut-failed-cond}, so we must have $g\in T$.
\end{proof}

\begin{restatable}{lemma}{lemGUnique}
\label{lem:g-unique}
Assuming Condition~\ref{cut-failed-cond}, if $(S,T) \in E$, then there is a unique $g \in S$ where $T\subseteq (\mss)\cup\{g\}$.
\end{restatable}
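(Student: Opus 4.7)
The plan is to prove uniqueness by contradiction, leaning on Lemma~\ref{lem:g-in-T} to pin down where the witnessing good must live. Suppose for contradiction there are two distinct goods $g_1, g_2 \in S$ with $T \subseteq (\mss) \cup \{g_1\}$ and $T \subseteq (\mss) \cup \{g_2\}$. First I would apply Lemma~\ref{lem:g-in-T} with the witness $g_1$ to conclude $g_1 \in T$.

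Next, I would exploit the second containment $T \subseteq (\mss) \cup \{g_2\}$ together with $g_1 \neq g_2$. Since $g_1 \in S$, $g_1 \notin \mss$; and since $g_1 \neq g_2$, $g_1 \notin \{g_2\}$. Therefore $g_1 \notin (\mss) \cup \{g_2\}$, and the containment forces $g_1 \notin T$. This directly contradicts $g_1 \in T$ obtained in the previous step, completing the proof.

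The argument is essentially a one-liner once Lemma~\ref{lem:g-in-T} is in hand, so there is no real obstacle: the content of uniqueness is already baked into the fact that any witness must simultaneously belong to $T$ (by Lemma~\ref{lem:g-in-T}) and yet be the \emph{only} element of $S$ that $T$ contains (by the defining condition $T \subseteq (\mss)\cup\{g\}$). Condition~\ref{cut-failed-cond} is used only indirectly, through the invocation of Lemma~\ref{lem:g-in-T}.
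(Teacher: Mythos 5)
Your proof is correct and follows essentially the same route as the paper's: both argue by contradiction using Lemma~\ref{lem:g-in-T} to place a witness inside $T$ and the containment $T\subseteq (\mss)\cup\{g\}$ for the other witness to exclude it, differing only in which of the two symmetric goods carries each role.
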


\begin{proof}
Suppose there exist $g_1, g_2\in S$ where $g_1 \neq g_2$, $T\subseteq (\mss)\cup\{g_1\}$, and $T\subseteq (\mss)\cup\{g_2\}$. Then by Lemma~\ref{lem:g-in-T}, $g_1 \in T$ and $g_2 \in T$. But this contradicts $T\subseteq (\mss)\cup \{g_1\}$, because $g_2 \in S\backslash\{g_1\}$, so $g_2 \not \in (\mss)\cup \{g_1\}$. Therefore $g_1 = g_2$.
\end{proof}

Using Lemma~\ref{lem:g-unique} for each edge $(S,T) \in E$, let $g(S,T)$ be the unique good such that $T\subseteq (\mss)\cup\{g(S,T)\}$.

\begin{restatable}{lemma}{lemGraphSymmetric}
\label{lem:graph-symmetric}
Assuming Condition~\ref{cut-failed-cond}, if $(S,T) \in E$, then $(T,S) \in E$. Furthermore, $g(T,S) = g(S,T)$.
\end{restatable}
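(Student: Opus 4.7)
The plan is to show that the same good $g = g(S,T)$ also serves as the witness for the reverse edge $(T,S)$. Concretely, I will verify the two ingredients needed: (i) $g \in T$, and (ii) $S \subseteq (M\setminus T) \cup \{g\}$. The uniqueness part then follows immediately from Lemma~\ref{lem:g-unique} applied to the reverse edge.

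For (i), Lemma~\ref{lem:g-in-T} applied to the edge $(S,T)$ and the witness $g(S,T)$ gives $g(S,T) \in T$, so $g$ is indeed an element of $T$, and is therefore a candidate witness for the edge $(T,S)$.

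For (ii), the key step is a direct set-chasing argument. Fix any $h \in S$. If $h = g$ there is nothing to check, so assume $h \in S \setminus \{g\}$. Then $h \notin M\setminus S$ and $h \neq g$, so $h \notin (M\setminus S) \cup \{g\}$. Since $(S,T) \in E$ gives $T \subseteq (M\setminus S) \cup \{g\}$, we conclude $h \notin T$, i.e., $h \in M\setminus T$. Combined with the trivial case $h = g$, this establishes $S \subseteq (M\setminus T) \cup \{g\}$, which is precisely the defining property of $(T,S) \in E$ witnessed by $g \in T$.

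Finally, Lemma~\ref{lem:g-unique} (now applied to the edge $(T,S)$) says there is a unique good in $T$ serving as the witness, so $g(T,S) = g = g(S,T)$. I do not foresee any real obstacle: once one notices that the edge relation has a natural involutive structure through the shared swapped good, the verification is a couple of lines of elementary set reasoning, and Condition~\ref{cut-failed-cond} is used only implicitly via the previously proved lemmas it enables (\ref{lem:g-in-T} and \ref{lem:g-unique}).
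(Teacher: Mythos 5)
Your proof is correct and follows essentially the same route as the paper's: both use Lemma~\ref{lem:g-in-T} to get $g(S,T) \in T$ and then observe that $T \subseteq (M\backslash S)\cup\{g(S,T)\}$ forces $S\backslash\{g(S,T)\} \subseteq M\backslash T$ (your element-chasing is just the explicit form of that complement argument), yielding $(T,S)\in E$ with the same witness. Your extra invocation of Lemma~\ref{lem:g-unique} to justify $g(T,S)=g(S,T)$ is a slightly more careful finish than the paper's, but it is the same argument.
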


\begin{proof}
Suppose $(S,T) \in E$: then $T\subseteq (\mss)\cup\{g(S,T)\}$. By Lemma~\ref{lem:g-in-T}, we have $g(S,T) \in T$. Since $T\subseteq (\mss)\cup \{g(S,T)\}$, we have $S\backslash\{g(S,T)\} \subseteq M\backslash T$. Therefore $S \subseteq (M\backslash T)\cup\{g(S,T)\}$, and so $(T, S) \in E$ and $g(S,T) = g(T,S)$.
\end{proof}

The next lemma is because there are $|S|$ items in $S$ that we could move to $\mss$. The proof uses Lemma~\ref{lem:g-unique} to show that each of them will yield a different minimal bundle $T$, so this constitutes $|S|$ distinct edges $(S,T)$.

\begin{restatable}{lemma}{lemGraphOutdegree}
\label{lem:graph-outdegree}
The out-degree of each bundle $S \in V$ is at least $|S|$.
\end{restatable}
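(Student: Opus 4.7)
The plan is to construct, for each $g \in S$, a distinct minimal bundle $T_g$ giving a distinct out-edge $(S, T_g) \in E$; producing $|S|$ such edges is exactly what the lemma asks for. Fix $g \in S$ and look at the allocation $(S \backslash \{g\},\, (M \backslash S) \cup \{g\})$. Since $S$ is minimal, player~1 is by definition not happy with $S \backslash \{g\}$; applying Condition~\ref{cond:happy-with-one} to this allocation then forces her to be happy with the complementary bundle $(M \backslash S) \cup \{g\}$. By Lemma~\ref{lem:minimal-subset}, there exists a minimal bundle $T_g \subseteq (M \backslash S) \cup \{g\}$, and this inclusion is exactly what the definition of $E$ requires for $(S, T_g)$ to be an edge (with $g$ as the connecting good).

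Having produced one $T_g$ per $g \in S$, the remaining task is to show that the map $g \mapsto T_g$ is injective, so that these are $|S|$ distinct out-edges from $S$. Here I would invoke the machinery of the preceding graph lemmas: Lemma~\ref{lem:g-in-T} guarantees $g \in T_g$ for every $g \in S$, and then if $T_{g_1} = T_{g_2} = T$ for two distinct $g_1, g_2 \in S$, we would have $g_2 \in T \subseteq (M \backslash S) \cup \{g_1\}$, forcing $g_2 \in (M \backslash S) \cup \{g_1\}$. Since $g_2 \in S$ and $g_2 \neq g_1$, this is impossible. (Equivalently, Lemma~\ref{lem:g-unique} directly says the connecting good $g(S, T)$ is unique, so the two edges $(S, T_{g_1})$ and $(S, T_{g_2})$ cannot coincide.)

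I expect the only real conceptual step to be the existence part: recognizing that minimality of $S$ at $g$, together with Condition~\ref{cond:happy-with-one}, is exactly what pins player~1's happiness onto $(M \backslash S) \cup \{g\}$ and thereby guarantees a minimal sub-bundle there to serve as $T_g$. Once that observation is in hand, injectivity is essentially free from Lemmas~\ref{lem:g-in-T} and~\ref{lem:g-unique}, so there is no quantitative inequality to push---only set-theoretic bookkeeping---and no appeal to submodularity is needed for this particular bound.
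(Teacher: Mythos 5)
Your proposal is correct and follows essentially the same route as the paper: for each $g \in S$, minimality plus Condition~\ref{cond:happy-with-one} forces player 1 to be happy with $(M\backslash S)\cup\{g\}$, Lemma~\ref{lem:minimal-subset} extracts a minimal $T_g$ inside it, and Lemmas~\ref{lem:g-in-T}/\ref{lem:g-unique} give distinctness of the resulting edges. No gaps.
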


\begin{proof}
Let $S = \{g_1, g_2...g_{|S|}\}$. We first claim that for all $g_j \in S$, there exists $T_j \in V$ where $T_j \subseteq (\mss)\cup\{g_j\}$. Consider some $g_j \in S$. Because $S$ is minimal, we know that player 1 is not happy with $S \backslash \{g_j\}$. Therefore player 1 must be happy with $(\mss)\cup \{g_j\}$. Then by Lemma~\ref{lem:minimal-subset}, there exists $T_j \subseteq (\mss)\cup\{g_j\}$ where $T_j$ is minimal. Therefore $(S, T_j) \in E$.

By Lemma~\ref{lem:g-unique}, $g_j = g(S,T_j)$ is unique. Thus for all $g \in S$ where $g \neq g(S,T_j)$, we have $g \not \in T_j$. This implies that each $T_j$ is distinct. Thus $(S, T_1), (S,T_2)...(S, T_{|S|})$ are all distinct edges in $E$, so the out-degree of $S$ is at least $|S|$.
\end{proof}

Next, we define a set of edges $E_+ \subseteq E$ by
\[
E_+ = \{(S, T)\ |\ \Delta(S\backslash\{g(S,T)\}, g(S,T)) \geq \alpha\}
\]
This is the set of ``special edges" alluded to in Section~\ref{sec:techniques}.

The informal argument for the next lemma is as follows. By Lemma~\ref{lem:graph-symmetric}, we have $(S,T) \in E$ if and only if $(T,S) \in E$. Then Lemma~\ref{lem:marginal-value} (combined with submodularity) implies that at least one of $\Delta(S\backslash \{g(S,T)\}, g(S,T)) \geq \alpha$ and $\Delta(T\backslash \{g(S,T)\}, g(S,T)) \geq \alpha$ is true, so at least one of $(S,T)$ and $(T,S)$ must be in $E_+$.

\begin{restatable}{lemma}{lemEPlusSize}
\label{lem:e-plus-size}
Assuming Condition~\ref{cut-failed-cond}, $|E_+| \geq |E|/2$.
\end{restatable}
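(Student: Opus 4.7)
The plan is to pair up edges with their reverses via Lemma~\ref{lem:graph-symmetric}, and show that in each such pair at least one edge lies in $E_+$. Combined with the symmetry and submodularity, this will yield the desired bound.

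First I would fix an arbitrary edge $(S,T)\in E$ and let $g := g(S,T)$. By Lemma~\ref{lem:graph-symmetric}, the reverse edge $(T,S)$ is also in $E$, and $g(T,S) = g$. By Lemma~\ref{lem:g-in-T} applied to $(S,T)$, we have $g\in T$, and moreover $T\setminus\{g\}\subseteq M\setminus S$, since $T\subseteq(M\setminus S)\cup\{g\}$. Symmetrically, $S\setminus\{g\}\subseteq M\setminus T$.

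The key step is to apply Lemma~\ref{lem:marginal-value} to the minimal bundle $S$ and good $g\in S$, which guarantees
\[
\max\bigl(\Delta(S\setminus\{g\},g),\ \Delta(M\setminus S,g)\bigr)\ \geq\ \alpha.
\]
If the first term attains the maximum, then by definition $(S,T)\in E_+$. Otherwise $\Delta(M\setminus S,g)\geq\alpha$; since $T\setminus\{g\}\subseteq M\setminus S$, submodularity gives $\Delta(T\setminus\{g\},g)\geq\Delta(M\setminus S,g)\geq\alpha$, which means $(T,S)\in E_+$ (noting $g(T,S)=g$). Hence each symmetric pair $\{(S,T),(T,S)\}$ contributes at least one edge to $E_+$.

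To finish, I would partition $E$ according to this symmetry. For any $(S,T)\in E$ with $S\neq T$, the pair $\{(S,T),(T,S)\}$ consists of two distinct edges of $E$ and contains at least one edge of $E_+$. For a self-loop $(S,S)\in E$ (which can only occur when $S=\{g\}$ is a singleton), the case analysis above still applies: if $\Delta(M\setminus S,g)\geq\alpha$, then by submodularity and $\emptyset=S\setminus\{g\}\subseteq M\setminus S$ we get $\Delta(\emptyset,g)\geq\Delta(M\setminus S,g)\geq\alpha$, so $(S,S)\in E_+$ regardless. Summing over all symmetric pairs yields $|E_+|\geq|E|/2$.

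The main (minor) obstacle is the bookkeeping: verifying that $g(T,S)=g(S,T)$ and making sure the submodularity step uses the correct inclusion $T\setminus\{g\}\subseteq M\setminus S$. Once these are in hand, the argument is essentially a one-line case split plus a counting argument on the symmetric pairs.
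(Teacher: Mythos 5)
Your proof is correct and follows essentially the same approach as the paper's: pair each edge with its reverse via Lemma~\ref{lem:graph-symmetric}, apply Lemma~\ref{lem:marginal-value} to the minimal bundle $S$ and the good $g=g(S,T)$, and in the second case use submodularity with $T\setminus\{g\}\subseteq M\setminus S$ to conclude $(T,S)\in E_+$. Your write-up of that second case (and your explicit treatment of singleton self-loops) is in fact slightly more careful than the paper's own, which appears to transpose the roles of $S$ and $T$ in its final step.
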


\begin{proof}
Let $(S,T)$ be some edge in $E$: then by Lemma~\ref{lem:graph-symmetric}, $(T,S) \in E$. It suffices to show that for every edge $(S,T) \in E$, at least one of $(S,T)$ and $(T,S)$ are in $E_+$. Assume $(S,T)\not\in E_+$: otherwise we are done. Then
\[
\Delta\big(S\backslash\{g(S,T)\}, g(S,T)\big) < \alpha
\]
Thus by Lemma~\ref{lem:marginal-value},
\[
\Delta\big(\mss, g(S,T)\big) \geq \alpha
\]
Since $(T,S)$ is an edge in the graph, $S\subseteq (M\backslash T)\cup\{g(S,T)\}$. Therefore $S\backslash\{g(S,T)\} \subseteq M\backslash T$. Thus by submodularity, $\Delta(S\backslash\{g(S,T)\}, g(S,T)) \geq \Delta(M\backslash T, g(S,T)) \geq \alpha$. Therefore $(S,T) \in E_+$.
\end{proof}


Lemma~\ref{lem:counting-bound} follows from a simple counting argument.

\begin{restatable}{lemma}{lemCountingBound}
\label{lem:counting-bound}
For any integers $m$ and $\ell$, $\sum\limits_{j=0}^\ell \mbinom{m}{j} \leq (m+1)^\ell$.
\end{restatable}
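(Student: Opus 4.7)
The plan is to prove the inequality by giving a combinatorial interpretation of each side and exhibiting an explicit injection between the two families of objects. The left-hand side $\sum_{j=0}^{\ell} \binom{m}{j}$ counts the subsets of $[m]$ of size at most $\ell$, and the right-hand side $(m+1)^{\ell}$ counts the length-$\ell$ sequences with entries drawn from $[m+1]$. It therefore suffices to construct an injection from the first family into the second.

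The injection I would use is ``list and pad.'' Given a subset $S \subseteq [m]$ of size $k \leq \ell$, write its elements in increasing order as $s_1 < s_2 < \cdots < s_k$, and map $S$ to the length-$\ell$ sequence
\[
\phi(S) = (s_1,\, s_2,\, \ldots,\, s_k,\, m+1,\, m+1,\, \ldots,\, m+1),
\]
where the trailing $\ell - k$ entries are all equal to $m+1$. This is a valid length-$\ell$ sequence with entries in $[m+1]$.

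The key point is that $\phi$ is injective. Since every element of $S$ lies in $[m]$, no $s_i$ equals $m+1$, so given $\phi(S)$ one recovers $k$ as the index of the first coordinate equal to $m+1$ (or as $\ell$ if no coordinate equals $m+1$), and then $S = \{s_1, \ldots, s_k\}$ is determined. Distinct subsets therefore give distinct sequences, so the number of subsets of $[m]$ of size at most $\ell$ is at most the number of length-$\ell$ sequences from $[m+1]$, which is exactly the desired inequality.

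There is no real obstacle here; the only thing to be careful about is the padding being unambiguous, which is ensured precisely because the padding symbol $m+1$ is outside the ground set $[m]$ from which subset elements are drawn. (If one preferred an inductive proof, the same bound follows from induction on $\ell$ using the crude estimate $\binom{m}{\ell+1} \leq m \cdot m^{\ell} \leq m(m+1)^{\ell}$, but the injection argument is shorter and more transparent.)
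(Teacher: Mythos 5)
Your proof is correct and takes essentially the same approach as the paper: both interpret the left side as subsets of $[m]$ of size at most $\ell$ and the right side as length-$\ell$ sequences over a ground set augmented by one dummy/padding symbol, then exhibit an injection by listing the subset's elements and padding with the dummy. Your version is if anything slightly more careful, since you fix a canonical (increasing) order and spell out how to invert the map.
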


\begin{proof}
The left-hand-side is number of subsets of $[m]$ of size at most $\ell$. The right-hand-side is the number of ways to select $\ell$ elements from $[m]\cup\{d\}$, where each element can be selected multiple times, and including ordering. We think of $d$ as a dummy element. For each subset $S\subseteq [m]$ counted by $\sum_{j=0}^\ell \binom{m}{j}$, we represent it in $(m+1)^\ell$ as follows: first select element $d$ $\ell - |S|$ times, and then select the elements in $S$ in any order. Thus each subset of $[m]$ counted by the left-hand-side is represented in a unique way by the right-hand-side, and so $\sum_{j=0}^\ell \binom{m}{j} \leq (m+1)^\ell$.
\end{proof}


We are now ready to prove the final theorem. Recall the following definitions from Section~\ref{sec:prop-n=2-submod-upper}:
\begin{align*}
S_{\le k} =&\ \{g_j \in S\ |\ j \leq k\}\\
\delta_k^S =&\ v_1(S_{\le k}) - v_1(S_{\le k-1})
\end{align*}

\EFsubmodUpper*

\begin{proof}
Correctness of Protocol~\ref{pro:minimal-upper} follows from Lemma~\ref{lem:minimal-correct}, so it remains only to bound the communication cost.

We prove that the number of minimal bundles is (strictly) less than $2(m+1)^{\frac{8}{1-c}} = 2(m+1)^{4/\alpha}$, assuming Condition~\ref{cut-failed-cond}. Let $\beta = 4/\alpha$, and suppose that the number of minimal bundles is at least $2(m+1)^{4/\alpha} = 2(m+1)^\beta$. By Lemma~\ref{lem:counting-bound}, the number of minimal bundles of size at most $\beta $ is at most $(m+1)^{\beta}$. Thus there are at least $(m+1)^{\beta}$ minimal bundles $S$ where $|S| > \beta$.

So at least half of the minimal bundles have size more than $\beta$. Let $G = (V,E)$ be the minimal bundle graph. Then by Lemma~\ref{lem:graph-outdegree}, at least half of the minimal bundles in $V$ have out-degree more than $\beta$. Therefore $|E| > \beta |V|/2$. Then by Lemma~\ref{lem:e-plus-size}, $|E_+| > \beta |V|/4 = |V| /\alpha$.

For a bundle $S$, let $X_+^S$ be the set of out-edges from $S$ that are in $E_+$. Formally,
\[
X_+^S = \{(S,T) \in E\ |\ \Delta(S\backslash\{g(S,T)\}, g(S,T)) \geq \alpha\}
\]
and we can define the corresponding goods by $g(X_+^S) =\{g \in S\ |\ \Delta(S\backslash\{g\}, g) \geq \alpha\}$.

We next show that there must exist a minimal bundle $S \in V$ where $|X_+^S| > 1/\alpha$. Suppose that $|X_+^S| \le 1/\alpha$ for all $S \in V$: then
\[
|E_+| \leq |V|/\alpha
\]
which contradicts $|E_+| > |V|/\alpha$. Therefore there exists some bundle $S$ with $|X_+^S| > 1/\alpha$. By definitions, we have
\[
v_1(S) = \sum\limits_{k = 1}^m \delta_k^S = \sum\limits_{k: g_k \in S} \delta_k^S
= \sum\limits_{k: g_k \in S} \Delta(S_{\le k-1}, g_k)
\geq \sum\limits_{k: g_k \in g(X_+^S)} \Delta(S_{\le k-1}, g_k)
\]
Because $S_{\le k-1} \subseteq S$ and $g_k \not\in S_{\le k-1}$, we have $S_{\le k-1} \subseteq S\backslash\{g_k\}$. Therefore by submodularity,\footnote{This is the crucial use of submodularity: that we can add in the items in $S$ one by one, and the value of the set increases by at least $\Delta(S\backslash\{g_k\}, g_k)$ each time. This allows us to pump the value of $S$ over $v_1(M)$.}
\[
\sum\limits_{k:g_k \in g(X_+^S)} \Delta(S_{\le k-1}, g_k)
\geq \sum\limits_{k: g_k \in g(X_+^S)} \Delta(S\backslash\{g_k\}, g_k) 
\geq \sum\limits_{k: g_k \in g(X_+^S)} \alpha
= \alpha|X_+^S| > 1
\]
But $v_1(M) = 1$, so this is a contradiction. Therefore the number of minimal bundles is less than $2(m+1)^{\frac{8}{1-c}}$. 

Thus the number of minimal bundles is at most $2(m+1)^{\frac{8}{1-c}} - 1$. If the protocol terminates in step 1, just one bundle is communicated (and zero values), so the communication cost bound is trivially satisfied. Suppose the protocol does not terminate in step 1: then player 1 sends at most $2(m+1)^{\frac{8}{1-c}} - 1$ minimal bundles, as well as $S^*(v_1)$. Thus at most $2(m+1)^{\frac{8}{1-c}}$ bundles are sent, each of which require $m$ bits to communicate.

Player 1 also sends $\bfc_i(S^*(v_1))$. By definition of $\bfc_i^{EF}$, $\bfc_i(S^*(v_1))$ can be expressed as the ratio of two values, each of which takes $\valsize$ bits to communicate. Therefore the total communication cost is
\[
2m(m+1)^{\frac{8}{1-c}} + 2\valsize
\]
as required.
\end{proof}

We will show formally in Section~\ref{sec:EF-n=2-submod-lower} that Theorem~\ref{thm:EF-n=2-submod-upper} is tight, meaning that exponential communication can be required when $c=1$. To see why the minimal bundle argument fails for $c=1$, consider an additive (and hence submodular) valuation over an even number of items, where the value of each item is one. Then a bundle is minimal if and only if it contains exactly half the items, and there are exponential number of such bundles.

\section{Lower bound approach}\label{sec:lower-bound-approach}

In Section~\ref{sec:EF-n=2-submod-lower}, we will prove a lower bound that matches the PAS from Section~\ref{sec:EF-n=2-submod-upper}. Before we do that, we describe our general lower bound approach in this section. All of our lower bounds will rely on reductions from two well-known problems in communication complexity: determining whether two bit strings are equal, and determining whether two bit strings are disjoint. Let $x_i$ denote the bit string held by player $i$, and let $x_{ij}$ denote the $j$th bit of $x_i$. An input $(x_1, x_2)$ is a no-instance of the \equ problem if and only if there exists $j$ where $x_{1j} \neq x_{2j}$. An input $(x_1, x_2)$ is a no-instance of the \disj problem if and only if there exists $j$ where $x_{1j} = x_{2j} = 1$. The following lemma states that \disj is hard in the randomized setting (and thus also in the deterministic setting).

\begin{lemma}[\cite{Kalya92:Probabilistic, Razborov92:Distributional}]\label{lem:disjoint-lower}
Any randomized protocol which solves \disj for bit strings of length $\ell$ has communication cost $\Omega(\ell)$.
\end{lemma}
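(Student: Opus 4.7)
The plan is to prove this via the distributional-complexity route of Razborov, using the corruption (rectangle) method. First I would apply Yao's minimax principle: the randomized communication complexity of \disj is at least the minimum, over deterministic protocols with average error at most $1/3$, of the cost under any fixed input distribution $\mu$. So it suffices to exhibit a distribution $\mu$ on $\{0,1\}^\ell \times \{0,1\}^\ell$ and show that every deterministic protocol achieving error $\le 1/3$ under $\mu$ uses $\Omega(\ell)$ bits.

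Next I would fix the classical ``almost disjoint'' hard distribution: with probability $3/4$, sample a yes-instance by drawing each coordinate independently and uniformly from $\{(0,0),(0,1),(1,0)\}$; with probability $1/4$, sample a no-instance by first drawing a yes-instance as above and then flipping a uniformly chosen coordinate to $(1,1)$. Call the marginals on $x_1$ and $x_2$ somewhat independent after conditioning on ``yes''; this independence is what makes rectangle arguments tractable.

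The heart of the proof is the corruption lemma: any combinatorial rectangle $R = A \times B$ that is labeled 0 (``disjoint'') by the protocol and whose $\mu$-measure exceeds $2^{-\varepsilon \ell}$ must contain a non-negligible fraction of no-instances relative to yes-instances. Concretely, one shows that if $\mu(R \mid \text{yes})$ is large, then by averaging over the $\ell$ possible corrupted coordinates, $\mu(R \mid \text{no})$ is at least a constant fraction of $\mu(R \mid \text{yes})$, so $R$ cannot be correctly labeled 0 on a $(1-\text{small})$ fraction of its mass. This step is the main obstacle and is done by a careful entropy / counting argument bounding the number of ``one-step perturbations'' of disjoint pairs that remain inside $A\times B$; it is exactly where the $\Omega(\ell)$ (as opposed to, say, $\Omega(\log \ell)$) comes from.

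Finally I would assemble the pieces. A deterministic protocol of cost $c$ partitions the input space into at most $2^c$ monochromatic rectangles. Because the yes-probability is $3/4$ and the error at most $1/3$, the 0-labeled rectangles collectively carry $\mu$-measure at least $3/4 - 1/3$, so some 0-labeled rectangle $R$ has $\mu(R) \ge \Omega(2^{-c})$. Plugging $R$ into the corruption lemma forces $2^{-c} \le 2^{-\Omega(\ell)}$, hence $c = \Omega(\ell)$, as required. The Kalyanasundaram--Schnitger original proof achieves the same bound by a different (discrepancy / fooling-set style) argument, but the corruption route above is the cleanest modern version and is what I would actually write out.
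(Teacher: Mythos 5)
The paper itself does not prove this lemma; it is imported verbatim from the literature (Kalyanasundaram--Schnitger and Razborov), so there is no in-paper argument to compare yours against. Judged on its own, your sketch correctly identifies the standard modern architecture --- Yao's minimax principle, a non-product hard distribution with $3/4$ of its mass on disjoint pairs, a corruption bound for $0$-labeled rectangles, and a final counting step over the at most $2^{c}$ leaves --- and you are right that non-productness is essential (for product distributions the distributional complexity of \disj is only $O(\sqrt{\ell}\log \ell)$, so no product distribution could work).

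The genuine gap is that the entire technical content of the theorem is concentrated in the corruption lemma, and you assert it rather than prove it: the claim that any $0$-labeled rectangle of measure exceeding $2^{-\varepsilon \ell}$ must contain a constant relative fraction of intersecting pairs is exactly the hard part, and ``a careful entropy / counting argument'' is not a proof. Moreover, it is not obvious that the corruption property holds for the particular distribution you chose (coordinates i.i.d.\ uniform over $\{(0,0),(0,1),(1,0)\}$, with a no-instance obtained by flipping one coordinate); Razborov's argument uses a more structured distribution (a random partition of the universe with sets of fixed size $\approx \ell/4$), and the conditioning on that partition is what makes the rectangle analysis go through. Two smaller issues: with error $1/3$ the final averaging does not immediately produce a large $0$-rectangle with small enough internal error, so one must first amplify the success probability (standard, but should be said); and your characterization of the Kalyanasundaram--Schnitger proof as ``discrepancy / fooling-set style'' is inaccurate --- the discrepancy method provably cannot yield better than $\Omega(\sqrt{\ell})$ for \disj, which is precisely why their and Razborov's arguments are more delicate.
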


The following well-known lemma states that \equ is hard in the deterministic setting.

\begin{lemma}\label{lem:equality-lower}
Any deterministic protocol which solves \equ for bit strings of length $\ell$ has communication cost at least $\ell$.
\end{lemma}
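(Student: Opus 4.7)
The plan is a standard fooling-set (equivalently, rectangle-partition) argument. First I would establish the rectangle property: for any deterministic protocol $\Gamma$ on input domain $\{0,1\}^\ell \times \{0,1\}^\ell$ with communication cost $c$, the transcript partitions the input space into at most $2^c$ combinatorial rectangles of the form $A \times B$, each of which is monochromatic (all inputs in the rectangle produce the same output). This is proved by induction on the protocol tree: the set of inputs consistent with reaching any given node forms a rectangle, because when player~$i$ speaks, her message depends only on her own input, so the rectangle splits along the $i$-th coordinate, preserving the product structure.

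Next I would exhibit a fooling set of size $2^\ell$. Consider the diagonal inputs $D = \{(x, x) : x \in \{0,1\}^\ell\}$, each of which is a yes-instance of \equ. The key claim is that no two distinct diagonal inputs $(x,x)$ and $(y,y)$ with $x \neq y$ can lie in the same monochromatic rectangle $A \times B$. If they did, then $x, y \in A$ and $x, y \in B$, so $A \times B$ would also contain the off-diagonal point $(x, y)$. But $(x,y)$ is a no-instance of \equ while $(x,x)$ is a yes-instance, contradicting monochromaticity.

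Consequently $\Gamma$ must induce at least $|D| = 2^\ell$ distinct leaves in its protocol tree, giving $2^c \geq 2^\ell$ and hence $c \geq \ell$. The argument is essentially textbook; the only step requiring a little care is the rectangle-partition lemma, but this is entirely standard and no real technical obstacle arises. The lemma is stated here for completeness because later lower-bound reductions invoke it as a black box via \equ and \disj.
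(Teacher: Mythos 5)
Your proof is correct; the paper states Lemma~\ref{lem:equality-lower} as a well-known fact without giving a proof, and your fooling-set/rectangle argument is exactly the standard one being implicitly invoked. The diagonal fooling set of size $2^\ell$ forces $2^\ell$ distinct monochromatic rectangles, hence $2^c \geq 2^\ell$ leaves and $c \geq \ell$, as required.
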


Perhaps surprisingly, \equ admits a constant communication randomized protocol, due to \cite{Yao79:Complexity}.
\begin{lemma}[\cite{Yao79:Complexity}]\label{lem:equality-upper}
There exists a randomized protocol which solves \equ and has communication cost $O(1)$.
\end{lemma}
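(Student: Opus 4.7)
The plan is to give an explicit public-coin randomized protocol with constant communication. Since Newman's theorem is already cited, and since the paper states that all randomized protocols in the sequel are assumed to be public-coin, exhibiting a public-coin protocol of cost $O(1)$ suffices. Write $\ell$ for the length of the input strings $x_1, x_2 \in \{0,1\}^\ell$.

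The protocol I would use is the classical XOR fingerprinting scheme. Using the shared random tape, both players sample the same uniformly random vector $r \in \{0,1\}^\ell$. Player 1 then sends the single bit
\[
b_1 \;=\; \sum_{j=1}^\ell r_j \cdot x_{1j} \pmod 2
\]
to player 2, who computes $b_2$ analogously from $x_2$ and declares $x_1 = x_2$ iff $b_1 = b_2$. The key probabilistic observation is that if $x_1 = x_2$ then $b_1 = b_2$ deterministically, while if $x_1 \neq x_2$ then $b_1 \oplus b_2 = \sum_j r_j (x_{1j} \oplus x_{2j}) \pmod 2$ is a nontrivial $\mathbb{F}_2$-linear function of the uniformly random $r$, and hence equals $1$ with probability exactly $1/2$. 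Thus one round of the protocol correctly identifies equal strings with probability $1$ and unequal strings with probability $1/2$, using only $O(1)$ bits.

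To get success probability at least $2/3$, I would repeat the above scheme independently $k$ times, using $k$ independent random vectors $r^{(1)}, \ldots, r^{(k)}$ drawn from the public tape. Player 1 sends the $k$ bits $b_1^{(1)}, \ldots, b_1^{(k)}$, and player 2 declares equality iff $b_1^{(t)} = b_2^{(t)}$ for every $t$. Equal strings are always accepted, while unequal strings are erroneously accepted with probability at most $2^{-k}$, so taking $k = 2$ already gives error at most $1/4 < 1/3$. Total communication is $k + 1 = O(1)$ bits (the extra bit is player 2 announcing the output).

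There is no real obstacle here beyond the short probabilistic calculation that a nonzero $\mathbb{F}_2$-linear form on a uniformly random input is balanced; everything else is bookkeeping. The one thing to double-check is that the lemma really allows the public-coin model (which the paper has already committed to via Newman's theorem), since the private-coin analogue would only give $O(\log \ell)$ rather than $O(1)$.
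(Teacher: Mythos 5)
Your proposal is correct and matches the paper's argument: the paper also uses the public-coin inner-product-mod-2 fingerprint (comparing $\sum_j r_j x_{ij} \bmod 2$), amplified to success probability at least $75\%$, with the public-coin assumption justified via Newman's theorem exactly as you note. No issues.
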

\noindent The protocol for Lemma~\ref{lem:equality-upper} asks each player to compute the inner product mod 2 of her bit string and a random string. The protocol then compares those inner products. The Principle of Deferred Decisions can be used to show that this protocol arrives at the correct answer with probability at least 75\%. Lemma~\ref{lem:equality-upper} will be a key element of our randomized upper bound in Section~\ref{sec:n=2-rand-upper}.

All of our lower bounds have the following structure. Given two bit strings $x_1$ and $x_2$ of length $\ell = \Omega(\binom{2k}{k})$, we construct a corresponding instance of \textsc{Fair Division} with $O(k)$ items. In the two player case, each index in the bit strings will correspond to a possible allocation that gives each player $k$ items.

Our constructed instance will have that a property that a $c$-$P$ allocation exists if and only if $(x_1, x_2)$ is a no-instance\footnote{Note that no-instances of \equ or \disj become instances where a $c$-$P$ allocation \emph{does} exist.} of \equ (for a deterministic lower bound), or a no-instance of \disj (for a randomized lower bound). Thus if there existed a protocol for \textsc{Fair Division} with communication cost less than $\Omega(\binom{2k}{k})$, it could also be used to solve \equ or \disj in communication less than $\Omega(\ell)$. This is impossible according to Lemmas~\ref{lem:disjoint-lower} and \ref{lem:equality-lower}, so any protocol for \textsc{Fair Division} requires exponential communication. 

Using this framework, all that is needed to prove a lower bound for a particular set of parameters (property $P$, constant $c$, and a valuation class) is:
\begin{enumerate}
\item Given bit strings $x_1$ and $x_2$ of length $\Omega(\binom{2k}{k})$, define how to construct a corresponding instance of \textsc{Fair Division} with $O(k)$ items.
\item Show that a $c$-$P$ allocation exists in the constructed instance if and only if $(x_1, x_2)$ is a no-instance of \equ or \disj.
\item Show that the valuations in the constructed instance of \textsc{Fair Division} are of the desired valuation class.
\end{enumerate}

More specifically, our \fd instance will have two players and $2k$ items. Valuations will be constructed such that a player will never be happy if she receives fewer than $k$ items, so both players will have to receive exactly $k$ items. There are $\binom{2k}{k}$ allocations which give each player $k$ items, and this gives rise to the exponential communication lower bound.

In fact, we can do this in a very standardized way for the two player deterministic case. Given bit strings of length $\frac{1}{2}\binom{2k}{k}$, we define a list of allocations $\T = (T_1, T_2...T_{|\T|})$ where each $T_j = (T_{j1}, T_{j2}) \in \T$ is an allocation giving each player $k$ items: $|T_{j1}| = |T_{j2}| = k$. It is important that $\T$ does not contain every such allocation: in particular, for any allocation $A \in \T$, $\overline{A} \not\in \T$.\footnote{Recall that for $A = (A_1, A_2)$, $\overline{A} = (A_2, A_1)$.} The allocations in $\T$ appear in an arbitrary (but known) order. Note that $|\T| = \frac{1}{2}\binom{2k}{k}$.

Lemma~\ref{lem:EF-equality-reduction} states that under this approach, all that is necessary to complete the lower bound is to construct valuations satisfying three particular properties. The exact way valuations are constructed will depend on what class we wish them to belong to (general, subadditive, or submodular). We only prove the lemma for the $c$-EF in the two player deterministic setting. A similar result is possible for other settings, but this is only setting where we prove enough different lower bounds to make it worth having a separate lemma.

For a bit string $x_i$, let $\overline{x_i}$ denote the string obtained by flipping every bit: $x_{ij} \neq \overline{x_i}_j$ for all $j$. We will define two new bit strings, $y_1$ and $y_2$, by $y_1 = x_1$ and $y_2 = \overline{x_2}$. Also, recall that for a player $i$, $\overline{i}$ denotes the other player.

The lemma relies on three conditions. Condition~\ref{reduction-cond1} states that neither player is happy with any bundle containing fewer than $k$ items: then any $c$-$P$ allocation must either be $A$ or $\overline{A}$ for some $A \in \T$. Condition~\ref{reduction-cond2} states that player $i$ is unhappy receiving $T_{j\overi}$ when $y_{ij} = 1$ (and happy receiving $T_{ji}$). Condition~\ref{reduction-cond3} states that player $i$ is unhappy receiving $T_{ji}$ when $y_{ij} = 0$ (and happy receiving $T_{j\overi}$). Thus we want to find an index $j$ where either $y_{1j} = y_{2j} = 1$, in which case the allocation $(T_{j1}, T_{j2})$ is $c$-$P$, or where $y_{1j} = y_{2j} = 0$, in which case the allocation $(T_{j2}, T_{j1})$ is $c$-$P$. Therefore we are looking for an index where $y_{1j} = y_{2j}$, which is equivalent to $x_{1j} \neq x_{2j}$. This is exactly the \equ problem.

\begin{lemma}
\label{lem:EF-equality-reduction}
Given bit strings of length $\mfrac{1}{2}\mbinom{2k}{k}$ for some integer $k$, let $M = [2k]$ and $N=[2]$. Let $y_1 = x_1$ and $y_2 = \overline{x_2}$, and let $c$ be some constant. Let $\T = (T_1, T_2...T_{|\T|})$ be a list of allocations as described above. Suppose $v_1, v_2$ can be constructed such that the following conditions are met:
\begin{restatable}{condition}{firstReductionCond}\label{reduction-cond1}
For all $|S| < k$ and both $i$, $v_i(S) < c \cdot v_i(\mss)$.
\end{restatable}
\vspace{-.1 in}
\begin{restatable}{condition}{secondReductionCond}\label{reduction-cond2}
Whenever $y_{ij} = 1$, $v_i(T_{j\overi}) < c\cdot v_i(T_{ji})$.
\end{restatable}
\vspace{-.1 in}
\begin{restatable}{condition}{thirdReductionCond}\label{reduction-cond3}
Whenever $y_{ij} = 0$, $v_i(T_{ji}) < c\cdot v_i(T_{j\overi})$.
\end{restatable}
Then any deterministic protocol which finds a $c$-EF allocation for two players requires exponential communication. Specifically,
\[
D(2, 2k, \text{EF}, c) \geq \frac{1}{2} \binom{2k}{k}
\]
\end{lemma}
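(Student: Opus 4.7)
The plan is a reduction from \equ. Given an \equ instance $(x_1, x_2)$ of length $\frac{1}{2}\binom{2k}{k}$, player $i$ locally sets $y_i$ (with $y_1 = x_1$ and $y_2 = \overline{x_2}$) and then builds the valuation $v_i$ satisfying Conditions~\ref{reduction-cond1}--\ref{reduction-cond3}; the hypothesis of the lemma is precisely that such a construction exists, and this step uses no communication. I then run the supposed deterministic protocol for $c$-EF \fd on $(N, M, (v_1, v_2), \text{EF}, c)$. From its output I read off the answer to \equ via the central claim that a $c$-EF allocation exists in the constructed instance if and only if there is some index $j$ with $y_{1j} = y_{2j}$, equivalently $x_{1j} \neq x_{2j}$. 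Lemma~\ref{lem:equality-lower} then forces any such protocol to communicate at least $\frac{1}{2}\binom{2k}{k}$ bits.

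The heart of the argument is proving this equivalence, and I would structure it in two steps. First, any $c$-EF allocation $A$ must be balanced: Condition~\ref{reduction-cond1} tells us that if $|A_i| < k$ for some $i$ then $v_i(A_i) < c\cdot v_i(M\setminus A_i)$, directly violating $c$-EF for player $i$. Hence $|A_1| = |A_2| = k$, and since $\mathcal{T}$ contains exactly one representative from each pair $\{T, \overline{T}\}$ of balanced partitions (because $|\mathcal{T}| = \frac{1}{2}\binom{2k}{k}$ and $\overline{A} \notin \mathcal{T}$ whenever $A \in \mathcal{T}$), there is a unique $j$ with $A \in \{T_j, \overline{T_j}\}$. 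Second, for $T_j$ itself to be $c$-EF each player $i$ needs $v_i(T_{ji}) \geq c\cdot v_i(T_{j\overline{i}})$; Condition~\ref{reduction-cond3} rules this out whenever $y_{ij} = 0$, while Condition~\ref{reduction-cond2} combined with $c \leq 1$ (giving $v_i(T_{ji}) > v_i(T_{j\overline{i}})/c \geq c\cdot v_i(T_{j\overline{i}})$) guarantees it whenever $y_{ij} = 1$. Thus $T_j$ is $c$-EF iff $y_{1j} = y_{2j} = 1$, and by the symmetric analysis (swapping the roles of $T_{ji}$ and $T_{j\overline{i}}$), $\overline{T_j}$ is $c$-EF iff $y_{1j} = y_{2j} = 0$. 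Combining the two cases yields the equivalence.

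A hypothetical \fd protocol of cost less than $\frac{1}{2}\binom{2k}{k}$ would, after composing with the no-communication local reduction, give an \equ protocol on strings of length $\frac{1}{2}\binom{2k}{k}$ of the same cost, contradicting Lemma~\ref{lem:equality-lower}. The present lemma is essentially a templated reduction, so no deep difficulty arises within its own proof; the real effort lies ahead in the subsequent sections, where for each valuation class (general, subadditive, submodular) and each relevant $c$ one must exhibit explicit valuations $v_1, v_2$ that simultaneously satisfy Conditions~\ref{reduction-cond1}--\ref{reduction-cond3} \emph{and} live inside the desired class. That constructive burden is exactly what Lemma~\ref{lem:EF-equality-reduction} is designed to isolate.
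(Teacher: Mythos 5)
Your proposal is correct and follows essentially the same route as the paper: a reduction from \equ in which Condition~\ref{reduction-cond1} forces any $c$-EF allocation to be balanced (hence of the form $T_j$ or $\overline{T_j}$), and Conditions~\ref{reduction-cond2} and~\ref{reduction-cond3} show that $T_j$ is $c$-EF exactly when $y_{1j}=y_{2j}=1$ and $\overline{T_j}$ exactly when $y_{1j}=y_{2j}=0$, so a $c$-EF allocation exists iff $x_{1j}\neq x_{2j}$ for some $j$. The only cosmetic difference is that you prove this biconditional characterization directly, whereas the paper organizes the same reasoning as separate yes-/no-instance cases of \equ.
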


\begin{proof}
We reduce from \equ. Given bit strings $x_1$ and $x_2$ of length $\mfrac{1}{2}\mbinom{2k}{k}$ for some integer $k$, we construct the following instance of \fd. Let $N, M, (y_1, y_2)$, and $\T$ be as defined in the statement of Lemma~\ref{lem:EF-equality-reduction}. Also assume that $v_1$ and $v_2$ satisfy Conditions~\ref{reduction-cond1}, \ref{reduction-cond2}, and \ref{reduction-cond3}.

Suppose that $(x_1, x_2)$ is a no-instance of \equ: then there exists $j$ where $x_{1j} \neq x_{2j}$. Therefore $y_{1j} = y_{2j}$. If $y_{1j} = y_{2j} = 1$, then by Condition~\ref{reduction-cond2},
\[
v_i(T_{ji}) > \frac{1}{c} v_i(T_{j\overi}) \geq c \cdot v_i(T_{j\overi})
\]
for both $i$. Thus the allocation $T_j$ is $c$-EF, because each player $i$ receives $T_{ji}$. If $y_{1j} = y_{2j} = 0$, then by Condition~\ref{reduction-cond3},
\[
v_i(T_{j\overi}) > \frac{1}{c} v_i(T_{ji}) \geq c \cdot v_i(T_{ji})
\]
for both $i$. Thus the allocation $\overline{T_j}$ is $c$-EF, because each player $i$ receives $T_{j\overi}$. Therefore if $(x_1, x_2)$ is a no-instance of \equ, there exists an allocation satisfying $c$-EF.

Suppose that $(x_1, x_2)$ is a yes-instance of \equ: then for every $j$, $y_{1j} \neq y_{2j}$. For any allocation $A$ where $|A_i | < k$ for some $i$, we have $v_i(A_i) < c\cdot v_i(A_{\overi})$ by Condition~\ref{reduction-cond1}. Thus $A$ cannot be $c$-EF whenever $|A_i | < k$ for some $i$.

Now consider an arbitrary allocation $A$ where $|A_1| = |A_2| = k$. For any such allocation, there must exist $j$ where either $A = T_j$, or $A = \overline{T_j}$. Since $y_{1j} \neq y_{2j}$, there exists a player $i$ where $y_{ij} = 0$, and $y_{\overline{i}j} = 1$. Then by Condition~\ref{reduction-cond3}, $v_i(T_{ji}) < c\cdot v_i(T_{j\overi})$. Also, $v_{\overi}(T_{j\overline{\overi}}) < c\cdot v_{\overi}(T_{j\overi})$ by Condition~\ref{reduction-cond2}, where $\overline{\overi} = i$ represents the player other than $\overi$. Thus $v_{\overi}(T_{ji}) < c\cdot v_{\overi}(T_{j\overi})$.

Therefore neither player is happy with bundle $T_{ji}$. But since either $A = T_j$ or $A = \overline{T_j}$, there must be a player who receives $T_{ji}$, is hence is not happy. Thus no allocation where $|A_1 | = |A_2| = k$ can be $c$-EF, no allocation is $c$-EF.
\end{proof}

This lemma will be useful in a variety of settings. In the next section, we will use this lemma to prove a lower bound for 1-EF that matches the PAS from Section~\ref{sec:EF-n=2-submod-upper}.

%

\section{1-EF is hard for submodular valuations}\label{sec:EF-n=2-submod-lower}

In this section, we use the general approach described in Section~\ref{sec:lower-bound-approach} to show that 1-EF requires exponential communication, even for two players with submodular valuations. This shows that the PAS for this setting from Section~\ref{sec:EF-n=2-submod-upper} is the best we can hope for.

Formally, Section~\ref{sec:EF-n=2-submod-upper} showed that $D_{submod}(2,m,\text{EF}, c)$ is polynomial in $m$ when $c < 1$. We now show that $D_{submod}(2,m,\text{EF}, c)$ is exponential when $c = 1$. Section~\ref{sec:prop-n=2-submod-upper} showed that $D_{submod}(2,m,\text{Prop}, c)$ is polynomial for any $c$, so there is no lower bound necessary there. Thus this section resolves the deterministic submodular case for two players.

%


\begin{theorem}\label{thm:EF-n=2-submod-lower}
For two players with submodular valuations, any deterministic protocol which determines whether a $1$-EF allocation exists requires an exponential amount of communication. Specifically,
\[
D_{submod}(2,2k,\text{EF},1) \geq \frac{1}{2}\binom{2k}{k}
\]
\end{theorem}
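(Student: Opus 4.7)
The plan is to invoke Lemma~\ref{lem:EF-equality-reduction} with $c=1$; this reduces the theorem to constructing submodular valuations $v_1, v_2$ satisfying Conditions~\ref{reduction-cond1}, \ref{reduction-cond2}, and \ref{reduction-cond3}. Given bit strings $x_1, x_2$ of length $\frac{1}{2}\binom{2k}{k}$, set $M = [2k]$ and fix $\T = (T_1, \ldots, T_{|\T|})$ to be any ordering containing exactly one representative from each unordered pair of complementary size-$k$ partitions of $M$. As in the lemma, let $y_1 = x_1$ and $y_2 = \overline{x_2}$, and for each player $i$ and each index $j$ define
\[
R_{ij} = \begin{cases} T_{ji} & \text{if } y_{ij}=1, \\ T_{j\overi} & \text{if } y_{ij}=0, \end{cases}
\]
so $R_{ij}$ is the size-$k$ bundle player $i$ must strictly prefer to its complement. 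Let $P_i = \{R_{ij} : j \in [|\T|]\}$; by construction, $P_i$ contains exactly one bundle from each complementary pair of size-$k$ subsets of $M$.

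The construction is a weighted coverage term layered on the uniform matroid rank:
\[
v_i(S) \;=\; \min(|S|, k) \;+\; \epsilon \cdot h_i(S), \qquad h_i(S) \;=\; \big|\{j : S \cap R_{ij} \neq \emptyset\}\big|,
\]
where $\epsilon$ is a small positive constant, e.g.\ $\epsilon = 1/(2|\T|)$. The term $\min(|S|,k)$ is submodular as a concave function of $|S|$, and $h_i$ is a coverage function (each index $j$ is covered by any good in $R_{ij}$) and is therefore submodular. Hence $v_i$ is submodular and monotone, with $v_i(\emptyset)=0$ and $v_i(M) > 0$.

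Verifying the three conditions rests on a single observation about $h_i$. Because $P_i$ contains exactly one bundle from each complementary pair, $M \setminus R_{ij} \not\in P_i$ for any $j$. Two size-$k$ subsets of $[2k]$ are disjoint iff they are complementary, so $R_{ij} \cap R_{ij'} \neq \emptyset$ for every $j'$, giving $h_i(R_{ij}) = |\T|$; and $(M \setminus R_{ij}) \cap R_{ij'} \neq \emptyset$ iff $j' \neq j$, giving $h_i(M \setminus R_{ij}) = |\T|-1$. Thus $v_i(R_{ij}) - v_i(M \setminus R_{ij}) = \epsilon > 0$, which immediately yields Conditions~\ref{reduction-cond2} and \ref{reduction-cond3}. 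For Condition~\ref{reduction-cond1}, if $|S| < k$ then $v_i(S) \le (k-1) + \epsilon|\T| < k \le v_i(M \setminus S)$, since $\epsilon|\T| < 1$. Applying Lemma~\ref{lem:EF-equality-reduction} gives $D_{submod}(2, 2k, \text{EF}, 1) \ge \frac{1}{2}\binom{2k}{k}$, since every instance we produced is submodular.

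The main technical obstacle is obtaining submodularity while still encoding $\frac{1}{2}\binom{2k}{k}$ strict pairwise preferences into a single $v_i$. The naive attempt of adding a bonus of $\epsilon$ to $\min(|S|,k)$ exactly when $S \in P_i$ destroys submodularity at the boundary $|S|=k-1 \to k$: realizing the bonus there makes the marginal of the $k$-th good exceed the marginals at smaller sizes, violating diminishing returns. The weighted coverage formulation sidesteps this by distributing the extra value smoothly across marginals (each added good covers additional indices), so submodularity is preserved throughout.
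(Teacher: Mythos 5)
Your proposal is correct: every claim checks out (the coverage function $h_i$ is monotone submodular, $\min(|S|,k)$ is submodular as a concave function of cardinality, the computation $h_i(R_{ij}) = |\T|$ versus $h_i(M\setminus R_{ij}) = |\T|-1$ is right because $P_i$ contains exactly one bundle from each complementary pair and two size-$k$ subsets of $[2k]$ are disjoint iff complementary, and $\epsilon|\T| < 1$ handles Condition~\ref{reduction-cond1}). You follow the same overall route as the paper — both proofs reduce to exhibiting submodular valuations satisfying the three conditions of Lemma~\ref{lem:EF-equality-reduction} — but the valuation gadget is genuinely different. The paper defines $v_i$ explicitly by cases: essentially $3\min(|S|,k)$, with value $3k$ on preferred size-$k$ bundles and $3k-1$ on dispreferred ones, and then verifies submodularity by hand, checking that the marginal value $v_i(S\cup\{g\})-v_i(S)$ is non-increasing in $|S|$ (the factor of $3$ creates the integer slack between the marginal of $3$ at sizes below $k$ and $0$ above $k$ that lets one perturb level-$k$ values by $\pm 1$ without breaking diminishing returns — exactly the obstruction your last paragraph identifies for the unscaled naive bonus). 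Your construction instead expresses $v_i$ as a sum of two canonically submodular functions (uniform matroid rank plus a weighted coverage term), so submodularity is free and the only work is computing $h_i$ on size-$k$ bundles and their complements. The paper's version is more concrete and self-contained; yours is more modular and makes transparent \emph{why} exponentially many independent strict preferences between complementary size-$k$ bundles can coexist inside a single submodular function.
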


\begin{proof}
Given bit strings of length $\mfrac{1}{2}\mbinom{2k}{k}$ for some integer $k$, define $M, N, (y_1, y_2)$, and $\T$ as in Lemma~\ref{lem:EF-equality-reduction}. We need only to construct submodular valuations $v_1, v_2$ such that Conditions~\ref{reduction-cond1}, \ref{reduction-cond2}, and \ref{reduction-cond3} are met. We define each $v_i$ by
\[ v_i(S) =
   \begin{cases} 
      3|S| & \textnormal{if}\ \ |S| < k\\
      3k & \textnormal{if}\ \ |S| > k\\
      3k & \textnormal{if}\ \ S = T_{ji} \ \ \textnormal{and}\ \ y_{ij} = 1\\
      3k & \textnormal{if}\ \ S = T_{j\overi} \ \ \textnormal{and}\ \ y_{ij} = 0\\
      3k - 1 & \textnormal{if}\ \ S = T_{ji} \ \ \textnormal{and}\ \ y_{ij} = 0\\
      3k - 1& \textnormal{if}\ \ S = T_{j\overi} \ \ \textnormal{and}\ \ y_{ij} = 1
   \end{cases}
\]
Importantly, for every bundle $S$ with $|S| = k$, there exists exactly one pair $(i,j)$ where $S = T_{ji}$. Thus if $|S| = k$, $S$ falls under exactly one of the last four cases in the definition of $v_i$.

If $|S| < k$, we have $|\mss| > k$, so $v_i(S) < 3k = v_i(\mss)$. This satisfies Condition~\ref{reduction-cond1}. Suppose $y_{ij} = 1$ for some $i,j$: then $v_i(T_{j\overi}) = 3k - 1 < 3k = v_i(T_{ji})$, so Condition~\ref{reduction-cond2} is satisfied. Suppose $y_{ij} = 0$ for some $i,j$: then similarly, $v_i(T_{ji}) = 3k - 1 < 3k = v_i(T_{j\overi})$. Thus Condition~\ref{reduction-cond3} is satisfied as well.

It remains to show that the valuations are submodular. To do this, we examine $v_i(S \cup \{g\}) - v_i(S)$, for any bundle $S$ and item $g\not\in S$.
\[ v_i(S\cup\{g\}) - v_i(S) =
   \begin{cases} 
      3 & \textnormal{if}\ \ |S\cup\{g\}| < k\\
      2\ \textnormal{or}\ 3 & \textnormal{if}\ \ |S \cup\{g\}| = k\\
      0\ \textnormal{or}\ 1 & \textnormal{if}\ \ |S\cup\{g\}| = k + 1\\
      0 & \textnormal{if}\ \ |S\cup\{g\}| > k + 1      
   \end{cases}
\]
Therefore $v_i(S\cup\{g\}) - v_i(S)$ is non-increasing with $|S|$. Thus $v_i(X\cup\{g\}) - v_i(X) \geq v_i(Y\cup\{g\}) - v_i(Y)$ whenever $|X| < |Y|$. If $X \subseteq Y$, either $|X| < |Y|$ or $X = Y$. When $X = Y$, we trivially have $v_i(X\cup\{g\}) - v_i(X) = v_i(Y\cup\{g\}) - v_i(Y)$. Thus we have $v_i(X\cup\{g\}) - v_i(X) \geq v_i(Y\cup\{g\}) - v_i(Y)$ whenever $X\subseteq Y$, and so $v_i$ is submodular.

\end{proof}

Recall that Section~\ref{sec:EF-n=2-submod-upper} gave a PAS for this setting, where for any fixed $c$, communication at most $2(m+1)^{\frac{8}{1-c}}$ is required. In a fully polynomial-communication approximation scheme (FPAS), the dependence in $\frac{1}{1-c}$ is required to be polynomial. The PAS from Section~\ref{sec:EF-n=2-submod-upper} is not an FPAS, since the dependence on $\frac{1}{1-c}$ is exponential.

The above proof of Theorem~\ref{thm:EF-n=2-submod-lower} actually shows that for any $c > \frac{3k-1}{3k} = \frac{3m-2}{3m}$, exponential communication is required. This does not contradict the PAS from Section~\ref{sec:EF-n=2-submod-upper}, because $\frac{3m-2}{3m}$ is not a fixed constant (it depends on $m$). However, this does rule out the possibility of an FPAS. To see this, suppose an FPAS existed, and consider some $c > \frac{3m-2}{3m}$. Then the FPAS would have communication cost polynomial in $\frac{1}{1 - c}$. We have
\[
\frac{1}{1-c} > \frac{1}{1 - \mfrac{3m-2}{3m}} = \frac{3m}{2}
\]
so the communication cost is polynomial of $m$. But the proof of Theorem~\ref{thm:EF-n=2-submod-lower} shows communication exponential in $m$ is required, which is a contradiction.

Finally, we note that the proof of Theorem~\ref{thm:EF-n=2-submod-lower} can easily be adapted to prove exponential lower bounds on the communication complexity of maximizing Nash welfare (the product of player utilities) or egalitarian welfare (the minimum player utility).

\section{Everything is hard for more than two players}\label{sec:n>2}

In this section, we show that \fd requires an exponential amount of communication whenever there are more than two players: even when randomization is allowed, even for submodular valuations, and for any $c > 0$. This will allow us to focus on the two player setting for the rest of the paper.

Before proving the theorems, we discuss the multiparty (i.e., $n>2$) communication complexity model. As mentioned in Section~\ref{sec:model}, there is more than one such model. This will turn out not to matter in our setting. The reason is that our lower bounds will hold even when only player 1 and player 2 have private valuations, and the valuations of all other players are public information. One can think of the other players as not really being agents, and just being a (publicly known) part of the input. Thus we never actually consider multiparty communication. In this way, the theorem that we are really proving is that when there are more than two \fd players, the problem is hard in the two-party communication complexity model.

We first prove hardness for envy-freeness, and then reduce envy-freeness to proportionality. Recall that \disj has randomized communication complexity $\Omega(\ell)$, where $\ell$ is the length of the bit strings (Lemma~\ref{lem:disjoint-lower}).

\begin{theorem}\label{thm:EF-n>2}
For any $n>2$ and any $c>0$, any randomized protocol which determines whether a $c$-EF allocation exists requires an exponential amount of communication, even for submodular valuations. Specifically,
\[
R_{submod}(n,2k+n-2,\text{EF},c) \in \Omega \left(\binom{2k}{k}\right)
\]
for any $n>2$ and $c > 0$.
\end{theorem}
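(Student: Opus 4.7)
I will reduce from \disj, following the framework of Section~\ref{sec:lower-bound-approach}. Given bit strings $x_1, x_2 \in \{0,1\}^L$ with $L = \tfrac{1}{2}\binom{2k}{k}$ held respectively by players $1$ and $2$, I construct an $n$-player \fd instance on $2k + n - 2$ items whose $c$-EF answer coincides with the \disj answer. The items are $2k$ ``contested'' items $g_1, \dots, g_{2k}$ together with one dummy item $h_i$ per dummy player $i \in \{3, \dots, n\}$. Each dummy player $i$ receives a \emph{public} additive valuation that gives value $V_0 > 0$ only to $h_i$, so in any $c$-EF allocation player $i$ must receive $h_i$ (otherwise $v_i(A_i) = 0$ while the holder of $h_i$ has positive value from $i$'s viewpoint, violating $c$-EF for any $c > 0$). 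Players $1$ and $2$ are assigned submodular valuations obtained from the proof of Theorem~\ref{thm:EF-n=2-submod-lower}: enumerate the $k$-partitions of $\{g_1, \dots, g_{2k}\}$ as $\T = \{(T_{j1}, T_{j2})\}_{j=1}^{L}$; define $v_i(S)$ on the $g$-part by $3|S \cap g|$ if $|S \cap g| < k$, by $3k$ if $|S \cap g| > k$, by $3k$ if $S \cap g$ is a size-$k$ bundle matching $T_{ji}$ with $x_{ij} = 1$ (``good''), and by $3k - 1$ on all other size-$k$ bundles (``bad''); then add $V \cdot |S \cap \{h_3, \dots, h_n\}|$ to each player $i \in \{1, 2\}$'s valuation, with $V = (3k - \tfrac{1}{2})/c$. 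Because only players $1$ and $2$ hold private data, the reduction lives in the two-party communication model, so Lemma~\ref{lem:disjoint-lower} yields the $\Omega(\binom{2k}{k})$ bound regardless of the multiparty model chosen.

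Paralleling Lemma~\ref{lem:EF-equality-reduction}, the correctness analysis goes as follows. For player $i \in \{1, 2\}$, the envy constraint against any dummy $r \geq 3$ reads $v_i(A_i) \geq c \cdot v_i(A_r) = cV = 3k - \tfrac{1}{2}$. If $|A_i \cap g| < k$, then $v_i(A_i) \leq 3(k - 1) < 3k - \tfrac{1}{2}$, so player $i$ envies a dummy; hence $|A_i \cap g| = k$ is forced for both $i \in \{1, 2\}$. Among size-$k$ bundles, the bad ones give $v_i(A_i) = 3k - 1 < 3k - \tfrac{1}{2}$ and again cause envy with a dummy, so in any $c$-EF allocation both of players $1, 2$ must receive good size-$k$ bundles. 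Since $(A_1, A_2)$ partitions the $g$'s, it must be of the form $(T_{j1}, T_{j2})$ for some $j$, and both bundles being good requires $x_{1j} = x_{2j} = 1$. Conversely, for any such $j$ the allocation $(T_{j1}, T_{j2}, \{h_3\}, \dots, \{h_n\})$ is $c$-EF: the player-vs-player envy holds via $3k \geq c(3k-1)$, the player-vs-dummy envy via $3k \geq cV$, and each dummy trivially does not envy (she is the unique player valuing her own bundle). Thus a $c$-EF allocation exists iff $\exists j$ with $x_{1j} = x_{2j} = 1$, which is precisely a no-instance of \disj.

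\textbf{Main obstacle.} The delicate point is making the argument work for \emph{every} $c > 0$ and not just $c$ close to $1$. Submodularity caps $v(S \cup g) - v(S)$ by $v(\{g\})$, which limits the good/bad gap on size-$k$ bundles to a small constant (just $1$ in the construction of Theorem~\ref{thm:EF-n=2-submod-lower}) -- too small, on its own, to be decisive in the $c$-EF check for tiny $c$. The new ingredient afforded by having $n > 2$ players is the dummy items: players $1$ and $2$ can place an essentially free-parameter value $V$ on dummy items they will never receive, and $V$ can be tuned so that the threshold $cV$ lies in the narrow window $(3k - 1,\, 3k]$ for every $c > 0$, converting the tiny structural gap into a decisive envy-with-dummy gap. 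Submodularity of $v_i$ is inherited from Theorem~\ref{thm:EF-n=2-submod-lower} (the $g$-part) plus the additivity of the $h$-contribution, so the rest of the verification mirrors the proof of that theorem and concludes via Lemma~\ref{lem:disjoint-lower}.
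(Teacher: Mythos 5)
Your proposal is correct and follows essentially the same route as the paper's proof: a reduction from \disj in which only players 1 and 2 hold private data, dummy players with publicly known valuations must each receive their own dummy item, and players 1 and 2 assign a large value (tuned to $c$) to the dummy items so that the envy constraint against a dummy forces them onto ``good'' size-$k$ bundles of the contested items. The only differences are cosmetic normalizations — the paper scales the contested-item values by $c$ and puts value $k$ on a single dummy item $g_3$, whereas you leave the $3|S|/3k/(3k-1)$ values unscaled and put value $(3k-\tfrac{1}{2})/c$ additively on every dummy item, and you index $\T$ by unordered partitions (length $\tfrac{1}{2}\binom{2k}{k}$) rather than ordered ones — neither of which affects correctness or the asymptotic bound.
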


\begin{proof}
We reduce from \disj . Given bit strings $x_1$ and $x_2$ of length $\mbinom{2k}{k}$, we construct a fair division instance as follows. Although there will be more than two players, there are only two bit strings. Let player 1 hold $x_1$ and player 2 hold $x_2$, and the other players will have no bit strings.

Let $M_1 = [2k], M_2 = \{g_3...g_n\}$, and $M = M_1 \cup M_2$: note that $|M| = 2k + n -2$.  Let $N = [n]$. We define a similar list of allocations $\T = (T_1, T_2 ...)$, where $T_j = (T_{j1}, T_{j2})$. Here each $T_j$ is an allocation over only $M_1$, and for just two players. Any such allocation $A$ where $|A_1 | = |A_2| = k$ is in $\T$ (and so is $\overline{A}$). Note that $|\T| = \mbinom{2k}{k}$. For $i \in \{1,2\}$, $v_i$ is given by
\[ v_i(S) =
   \begin{cases} 
      k & \textnormal{if}\ \ g_3 \in S\\      
      |S|c & \textnormal{if}\ \ |S| < k\ \ \textnormal{and}\ \ g_3 \not\in S\\
      kc & \textnormal{if}\ \ |S| > k\ \ \textnormal{and}\ \ g_3 \not\in S\\
      (k-\frac{1}{2})c & \textnormal{if}\ \ |S| = k \ \ \textnormal{and}\ \ g_3\not\in S
      \ \ \textnormal{and}\ \ S \cap M_2\ne \emptyset\\ 
      kc & \textnormal{if}\ \ \exists j\ \ S = T_{ji} \ \ \textnormal{where}\ \ x_{ij} = 1\ \ \textnormal{and}\ \ g_3 \not\in S\\  
      (k - \frac{1}{2})c & \textnormal{if}\ \ \exists j\ \ S = T_{ji} \ \ \textnormal{where}\ \ x_{ij} = 0\ \ \textnormal{and}\ \ g_3 \not\in S  
   \end{cases}
\]
Every allocation giving each player $k$ items occurs in $\T$ exactly once. Thus when $|S| = k$ and $S \subset M_1$, exactly one of the last two cases occurs, and any such $j$ must be unique. For $i > 2$, $v_i(S)$ is given by
\[ v_i(S) =
   \begin{cases} 
      1 & \textnormal{if}\ \ g_i \in S\\
      0 & \textnormal{otherwise}\\
   \end{cases}
\]
Suppose that $(x_1, x_2)$ is a no-instance of \disj: then there exists $j$ where $x_{1j} = x_{2j} = 1$. Consider the allocation $A$ where $A_i = T_{ji}$ for $i \leq 2$, and $A_i = \{g_i\}$ for $i > 2$. For $i>2$, $v_i(A_i) = 1$ and $v_i(A_{i'}) = 0$ for all $i' \neq i$, so each player $i > 2$ is happy. For $i \leq 2$, we have $v_i(A_i) = v_i(T_{ji}) = kc$, and $v_i(A_{i'}) \leq k$ for all $i'$. Therefore for all $i, i'$, $v_i(A_i) \geq c v_i(A_{i'})$, so $A$ is $c$-EF.

Suppose that $(x_1, x_2)$ is a yes-instance of \disj: then for every $j$, there exists $i$ where $x_{ij} = 0$. Suppose that a $c$-Prop allocation $A = (A_1, A_2)$ exists. We first claim that for every $i > 2$, $g_i \in A_i$: if not, $v_i(A_i) = 0$, so player $i$ will envy whichever player receives $g_i$. 

Thus for $i \le 2$, 
\[
v_i(A_i) \geq c\cdot v_i(A_3) \geq c\cdot v_i(\{g_3\}) = kc
\]
Suppose a player $i \le 2$ receives strictly fewer than $k$ items in $A_i$: then $v_i(A_i) < kc$, since none of those items can be $g_3$. This is a contradiction, so we have $|A_1 | = |A_2 | = k$. Since $\T$ contains all of the allocations which give each player $k$ items, there must exist $j$ where $A_i = T_{ji}$ for both $i$, and $v_i(A_i) \geq kc$. But that implies that $x_{1j} = x_{2j} = 1$, which is a contradiction. Therefore no allocation is $c$-Prop.

It remains to show that the valuations are submodular. For $i>2$, $v_i$ is trivially submodular. We now we examine $v_i(S \cup \{g\}) - v_i(S)$ for $i \le 2$, any bundle $S$, and any item $g\not\in S$ where $g_3 \not\in  S\cup \{g\}$.
\[ v_i(S\cup\{g\}) - v_i(S) =
   \begin{cases} 
      c & \textnormal{if}\ \ |S\cup\{g\}| < k\\
      c\ \textnormal{or}\ c/2 & \textnormal{if}\ \ |S \cup\{g\}| = k\\
      c/2\ \textnormal{or}\ 0 & \textnormal{if}\ \ |S\cup\{g\}| = k + 1\\
      0 & \textnormal{if}\ \ |S\cup\{g\}| > k + 1      
   \end{cases}
\]
Therefore $v_i(S\cup\{g\}) - v_i(S)$ is non-increasing with $|S|$  when $g_3 \not \in S\cup\{g\}$. Thus $v_i(X\cup\{g\}) - v_i(X) \geq v_i(Y\cup\{g\}) - v_i(Y)$ whenever $|X| < |Y|$ and $g_3 \not \in S\cup\{g\}$. If $X \subseteq Y$, either $|X| < |Y|$ or $X = Y$. When $X = Y$, we trivially have $v_i(X\cup\{g\}) - v_i(X) = v_i(Y\cup\{g\}) - v_i(Y)$. Thus we have $v_i(X\cup\{g\}) - v_i(X) \geq v_i(Y\cup\{g\}) - v_i(Y)$ whenever $X\subseteq Y$ and $g_3 \not \in S\cup\{g\}$. Therefore the submodularity condition is satisfied when $g_3 \not \in S\cup\{g\}$.

There are two remaining cases: when $g_3 \in S$, or when $g = g_3$. For $g_3 \in S$, $v_i(S\cup\{g\}) - v_i(S) = 0$ for all $S$ and $g$, so the condition is satisfied in this case. For $g = g_3$, we have $v_i(X\cup \{g_3\}) - v_i(X) = v_i(M) - v_i(X)$ and $v_i(Y\cup \{g_3\}) - v_i(Y) = v_i(M) - v_i(Y)$. If $X\subseteq Y$, we have $v_i(X) \leq v_i(Y)$, so $v_i(X\cup \{g_3\}) - v_i(X) \geq  v_i(Y\cup \{g_3\}) - v_i(Y)$. Therefore $v_i$ is submodular for all $i$.
\end{proof}


We now prove hardness for proportionality for more than two players, by reducing from envy-freeness.

\begin{theorem}\label{thm:prop-n>2}
For any $n>2$ and any $c>0$, any randomized protocol which determines whether a $c$-Prop allocation exists requires an exponential amount of communication, even for submodular valuations. Specifically,
\[
R_{submod}(n,2k+n-2,\text{Prop},c) \in \Omega \left(\binom{2k}{k}\right)
\]
for any $c > 0$.
\end{theorem}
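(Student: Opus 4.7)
The plan is to mimic the reduction from \disj used in Theorem~\ref{thm:EF-n>2}, keeping the item set $M = M_1 \cup M_2$ with $M_1 = [2k]$ and $M_2 = \{g_3, \dots, g_n\}$, the list $\T$ of balanced allocations of $M_1$, and the additive ``dummy'' valuations for players $i > 2$ all unchanged. The single adjustment is to inflate $v_i(S)$ for $i \leq 2$ and $g_3 \in S$: change its value from $k$ to $nk$, leaving every other case in the definition of $v_i$ (for $g_3 \notin S$) exactly as in Theorem~\ref{thm:EF-n>2}.

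The motivation for this inflation is that in the original construction $v_i(M) = k$ for $i \leq 2$, so the $c$-Prop threshold $c v_i(M)/n = ck/n$ is far too weak to exclude the ``bad'' $k$-bundles of value $(k - \tfrac{1}{2})c$. Setting $v_i(M) = nk$ pushes the Prop threshold up to exactly $ck$, which matches the value of the ``good'' bundles $T_{ji}$ (those with $x_{ij} = 1$) and strictly exceeds the value of every other $k$-subset of $M_1$.

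The key steps are: (i) verify monotonicity and submodularity of the modified $v_i$ for $i \leq 2$; this reduces to checking that the marginal $v_i(S \cup \{g_3\}) - v_i(S) = nk - v_i(S)$ is non-increasing in $S$ (immediate from monotonicity of $v_i$ restricted to $g_3$-free sets, together with $nk$ dominating every value taken on such sets), while all remaining cases carry over verbatim from Theorem~\ref{thm:EF-n>2}. (ii) For a no-instance of \disj with some $j$ satisfying $x_{1j} = x_{2j} = 1$, exhibit the allocation $A_i = T_{ji}$ for $i \leq 2$ and $A_i = \{g_i\}$ for $i > 2$, and check that each player meets the Prop threshold, with players $1$ and $2$ meeting it with equality by construction. (iii) For a yes-instance, argue that any $c$-Prop allocation $A$ is forced to have $g_i \in A_i$ for every $i > 2$ (otherwise $v_i(A_i) = 0$ fails the threshold $c/n > 0$), hence $A_1, A_2 \subseteq M_1$; the threshold $v_i(A_i) \geq ck$ for $i \leq 2$ then forces $|A_1| = |A_2| = k$ and $A_i = T_{ji}$ with $x_{ij} = 1$ for both $i$, contradicting the yes-instance assumption.

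I expect the only nontrivial verification to be step (i), in particular the ``mixed'' submodularity comparisons in which $X \subseteq Y$ but $g_3 \notin X$ and $g_3 \in Y$; all other cases are either trivial or identical to those in Theorem~\ref{thm:EF-n>2}. Steps (ii) and (iii) are essentially bookkeeping: the combinatorics of $\T$ and the arithmetic in $k$, $c$, and $n$ are unchanged, and the only substantive replacement is swapping the EF inequality ``$v_i(A_i) \geq c v_i(A_j)$'' for the Prop inequality ``$v_i(A_i) \geq c v_i(M)/n$''.
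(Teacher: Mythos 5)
Your proposal is correct and uses essentially the same construction as the paper: both take the Theorem~\ref{thm:EF-n>2} gadget and rescale it so that the $c$-Prop threshold $\frac{c}{n}v_i(M) = ck$ sits exactly at the value of the ``good'' bundles $T_{ji}$ with $x_{ij}=1$ and strictly above all other $k$-subsets of $M_1$ (you multiply the $g_3$-anchor value by $n$, whereas the paper instead divides all $g_3$-free values by $n$; the two normalizations are equivalent). The only presentational difference is that you verify the \disj correspondence for Prop directly, while the paper argues that on its rescaled instance an allocation is $c$-Prop if and only if it is $\frac{c}{n}$-EF and then invokes Theorem~\ref{thm:EF-n>2} as a black box; your submodularity check for the inflated $g_3$ marginal goes through exactly as you describe.
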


\begin{proof}
We reduce from \disj. Given an input $(x_1, x_2)$, we define $v_i$ as in the proof of Theorem~\ref{thm:EF-n>2}, except using $c/n$ instead of $c$. That is, for $i \le 2$,
\[ v_i(S) =
   \begin{cases} 
      k & \textnormal{if}\ \ g_3 \in S\\      
      |S|c/n & \textnormal{if}\ \ |S| < k\ \ \textnormal{and}\ \ g_3 \not\in S\\
      kc/n & \textnormal{if}\ \ |S| > k\ \ \textnormal{and}\ \ g_3 \not\in S\\
      (k-\frac{1}{2})c/n & \textnormal{if}\ \ |S| = k \ \ \textnormal{and}\ \ g_3\not\in S
      \ \ \textnormal{and}\ \ S \cap M_2\ne \emptyset\\ 
      kc/n & \textnormal{if}\ \ \exists j\ \ S = T_{ji} \ \ \textnormal{where}\ \ x_{ij} = 1\ \ \textnormal{and}\ \ g_3 \not\in S\\  
      (k - \frac{1}{2})c/n & \textnormal{if}\ \ \exists j\ \ S = T_{ji} \ \ \textnormal{where}\ \ x_{ij} = 0\ \ \textnormal{and}\ \ g_3 \not\in S  
   \end{cases}
\]

It was shown in the proof of Theorem~\ref{thm:EF-n>2} that these valuations are submodular.

Theorem~\ref{thm:EF-n>2} implies that $\Omega(\binom{2k}{k})$ communication is required to determine whether a $\frac{c}{n}$-EF allocation exists under these valuations. We will show that under these valuations, an allocation is $c$-Prop if and only if it is $\frac{c}{n}$-EF. This will imply that determining whether a $c$-Prop allocation exists is just as hard as whether a $\frac{c}{n}$-EF allocation exists.

In order for an allocation $A$ to be $\frac{c}{n}$-EF or $c$-Prop, we must have $v_i(A_i) > 0$ for all $i$. Thus assume $g_i \in A_i$ for all $i > 2$, and we need only consider $i \leq 2$.

Suppose an allocation $A$ is $c$-Prop: then for $i \le 2$, $v_i(A_i) \geq \mfrac{c}{n}v_i(M) = \mfrac{kc}{n}$. Since $v_i(A_{i'}) \leq k$ for all $i'$, we have
\[
v_i(A_i) \geq \frac{kc}{n} \geq \frac{c}{n}v_i(A_{i'})
\]
for all $i'$. Therefore $A$ is $\frac{c}{n}$-EF.

Suppose an allocation $A$ is $\frac{c}{n}$-EF: then for all $i$ and $i'$, $v_i(A_i) \geq \frac{c}{n} v_i(A_{i'})$. For $i \le 2$, we have $v_i(A_3) \geq v_i(\{g_3\}) = k$, so
\[
v_i(A_i) \geq \frac{c}{n} v_i(A_3) \geq \frac{kc}{n} = \frac{c}{n}v_i(M)
\]
so $A$ is $c$-Prop.
\end{proof}

This resolves the $n>2$ case for all combinations of other parameters, so we will assume that $n=2$ for the remainder of the paper.

\section{Subadditive valuations}\label{sec:subadd}

In this section, we consider the deterministic setting for two players with subadditive valuations. In Section~\ref{sec:n=2-subadd-upper}, we use the Minimal Bundle Protocol from Section~\ref{sec:EF-n=2-submod-upper} to show that $c$-EF for $c \leq 1/2$ and $c$-Prop for $c \leq 2/3$ require only polynomial communication. This is the same protocol that yielded the PAS for EF with submodular valuations, but the communication cost analysis will be different. In Section~\ref{sec:n=2-subadd-lower}, we show that this is tight, by giving an exponential lower bound for $c$-EF and $c$-Prop when $c$ exceeds $1/2$ and $2/3$, respectively.

\subsection{Upper bounds}\label{sec:n=2-subadd-upper}

In this section, we prove that when players have subadditive valuations, the Minimal Bundle Protocol (Protocol~\ref{pro:minimal-upper}) can be used to solve \fd for $\frac{1}{2}$-EF and $\frac{2}{3}$-Prop with polynomial communication. In fact, we will show that if a satisfactory allocation is not found in step 1, there must exist a single item $g$ where $v_1(\{g\}) > v_1(M\backslash \{g\})$. This will imply that the only minimal bundle is $\{g\}$. Protocol~\ref{pro:minimal-upper} is restated here for the convenience of the reader.

In Section~\ref{sec:EF-n=2-submod-upper}, we proved correctness of this protocol for any setting, so it remains only to prove the communication cost bound for this setting. 

\newcounter{algorithm saved}
\setcounter{algorithm saved}{\value{algorithm}}
\setcounter{algorithm}{\value{pro:minimal-upper}}
\proMinimalUpper* 
\setcounter{algorithm}{\value{algorithm saved}}

Let $\alpha \in (0,1]$ be some constant. Let $\eta^P(\alpha)$ be the maximum $c\leq 1$ for which any allocation $A$ is guaranteed to be $c$-$P$, given $v_i(A_i) \geq \alpha v_i(A_{\overline{i}})$ for both $i$. For example, $\eta^{EF}(\alpha) = \alpha$. We will write $\eta^P(\alpha) = \eta(\alpha)$ and leave $P$ implicit. Lemma~\ref{lem:subadd-minimal-cost} is strongest for $\alpha = 1/2$, but we find it insightful to prove the theorem for any $\alpha \leq 1/2$.

Also, recall that Condition~\ref{cond:happy-with-one} is satisfied for $c$-Prop with subadditive valuations, for any $c$: for any allocation $A$, each player must be happy with at least one of $A$ and $\overline{A}$.


\begin{lemma}\label{lem:subadd-minimal-cost}
For two players with subadditive valuations, $\alpha \in (0, 1/2]$, and $c = \eta(\alpha)$, Protocol~\ref{pro:minimal-upper} has communication cost at most
\[
2(m + v^{size})
\]
\end{lemma}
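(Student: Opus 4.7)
The plan is to bound the communication cost of Protocol~\ref{pro:minimal-upper} by cases on whether step~1 succeeds. If it does, player~1 sends a single bundle ($m$ bits), easily within $2(m+\valsize)$. Otherwise, Condition~\ref{cut-failed-cond} holds and the remaining task reduces to bounding $|\cals|$. The key structural claim I will establish is that under Condition~\ref{cut-failed-cond}, $|\cals|\le 1$ and (if nonempty) its unique element is a singleton. Given this, step~2 sends at most $|\cals|+1\le 2$ bundles together with $\bfc_1(S^*(v_1))$, which for either $P\in\properties$ can be encoded as a ratio of two values using $2\valsize$ bits; the total is then at most $2m+2\valsize = 2(m+\valsize)$.

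To prove the structural claim, I first show that every minimal bundle $S$ must be a singleton. Fix such $S$ and any $g\in S$; I argue that $\{g\}$ itself is $c$-happy. For $P=\mathrm{Prop}$: Condition~\ref{cut-failed-cond} combined with subadditivity gives $v_1(S) > (1-c/2)\,v_1(M)$ (since $v_1(M)\le v_1(S)+v_1(M\setminus S)$ and $v_1(M\setminus S) < (c/2)v_1(M)$), while minimality gives $v_1(S\setminus\{g\}) < (c/2)\,v_1(M)$. A second application of subadditivity yields $v_1(\{g\})\ge v_1(S)-v_1(S\setminus\{g\}) > (1-c)\,v_1(M)$. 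Since $c=\eta^{\mathrm{Prop}}(\alpha)\le 2/3$ satisfies $1-c\ge c/2$, $\{g\}$ is happy. For $P=\mathrm{EF}$: combining the minimality inequality $v_1(S\setminus\{g\}) < c\,v_1((M\setminus S)\cup\{g\})$, Condition~\ref{cut-failed-cond}'s $v_1(M\setminus S) < c\,v_1(S)$, and repeated subadditivity, one derives successively $v_1(\{g\}) > (1-c)\,v_1(S)$ and $v_1(M\setminus\{g\}) < \frac{2c}{1-c}\,v_1(\{g\})$; for $c=\eta^{\mathrm{EF}}(\alpha)=\alpha\le 1/2$, the inequality $2c^2/(1-c)\le 1$ rearranges to $v_1(\{g\})\ge c\,v_1(M\setminus\{g\})$, so $\{g\}$ is happy in this case as well. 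Once $\{g\}$ is known to be happy, if $|S|\ge 2$ then any $g'\in S\setminus\{g\}$ gives $\{g\}\subseteq S\setminus\{g'\}$, so monotonicity (via Lemma~\ref{lem:minimal-subset}) makes $S\setminus\{g'\}$ happy, contradicting minimality of $S$.

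Next, I show at most one singleton is minimal. If $\{g_1\},\{g_2\}\in\cals$ with $g_1\ne g_2$, then both singletons are happy, and since $\{g_2\}\subseteq M\setminus\{g_1\}$, monotonicity makes $M\setminus\{g_1\}$ happy as well. But then both $\{g_1\}$ and its complement are happy, contradicting Condition~\ref{cut-failed-cond}. Combining these two parts yields $|\cals|\le 1$, completing the bound on communication cost.

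The main obstacle is the algebraic step in the EF case: the threshold $c\le 1/2$ emerges precisely from the quadratic inequality $2c^2+c-1\le 0$ needed to push $\{g\}$ over the happiness bar, and its Prop counterpart $1-c\ge c/2$ fixes $c\le 2/3$. Both are tight, which is exactly why the lemma is strongest at $\alpha=1/2$ and yields the thresholds appearing in Theorems~\ref{thm:EF-n=2-subadd-upper} and~\ref{thm:prop-n=2-subadd-upper}.
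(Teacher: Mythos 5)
Your proposal is correct and shares the paper's skeleton: handle the step-1 case trivially, then show that under Condition~\ref{cut-failed-cond} the set $\cals$ consists of a single singleton $\{g\}$, and count $2$ bundles plus $2\valsize$ bits. The difference is in how the singleton claim is derived. The paper never uses the numerical value of $c$ at all: it converts ``player 1 is not happy with $M\backslash S$'' and ``not happy with $S\backslash\{g\}$'' into the $\alpha$-form inequalities $\alpha v_1(S) > v_1(M\backslash S)$ and $\alpha v_1\big((M\backslash S)\cup\{g\}\big) > v_1(S\backslash\{g\})$ via the definition of $\eta(\alpha)$ (taken contrapositively), and a single chain of subadditivity steps then gives $v_1(\{g\}) \geq \alpha\, v_1(M\backslash\{g\})$ for every $\alpha \le 1/2$, uniformly over $P$. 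You instead case-split on $P\in\properties$ and work with the happiness definitions directly; both of your computations check out (the EF threshold from $2c^2+c-1\le 0$, the Prop threshold from $1-c\ge c/2$), and your ``at most one singleton is minimal'' step is an acceptable substitute for the paper's observation that player 1 cannot be happy with $M\backslash\{g\}$. The small cost of your route is that the Prop branch quietly relies on $\eta^{Prop}(\alpha)\le 2/3$ for $\alpha\le 1/2$ — true, and witnessed by an additive valuation with item values $1$ and $1/\alpha$, but not proved in your write-up and never needed by the paper's $\alpha$-only argument, which is also what lets the paper state the lemma once for any property admitting an $\eta$-type guarantee.
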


\begin{proof}
If the protocol terminates in step 1, a single allocation is communicated, which requires $m$ bits. Thus the claim is satisfied in this case. 

If the protocol does not terminate in step 1, the only communication happens in step 3. For a bundle $S$, $\bfc_{i}(S)$ is defined as the ratio of two values: $\mfrac{v_i(S)}{v_i(M\backslash S)}$ for EF, and $\mfrac{2v_i(S)}{v_i(M)}$ for Prop\footnote{Technically only one value is needed for Prop, since we can assume that $v_1(M) = 1$, so only $v_1(S)$ is needed. However, since two values are needed for EF, we ignore this.}. Thus communicating $\bfc_{1}(S^*(v_{1}))$ requires $2v^{size}$ bits. The only other information transmitted is the bundle $S^*(v_{1})$ and $\cals$. Communicating $S^*(v_{1})$ requires $m$ bits, and $\cals$ requires $|\cals | m$ bits. Thus the communication cost of the protocol is
\[
m(|\cals | + 1) + 2v^{size}
\]
It remains to show that if the protocol does not terminate in step 1, $|\cals| = 1$. 

By Condition~\ref{cond:happy-with-one}, for every allocation $A$, player 1 is happy with at least one of $A$ and $\overline{A}$. Thus $|\cals| \geq 1$, so let $S$ be a minimal bundle in $\cals$. Since player 1 is happy with $S$, we know that she is not happy with $M\backslash S$, or the protocol would have terminated in step 1. Suppose $\alpha v_1(S) \leq v_1(M\backslash S)$: then player 1 is $\eta(\alpha)$-happy with $M\backslash S$. Since $c = \eta(\alpha)$, this means player 1 is happy with $M\backslash S$, which is a contradiction. Therefore $\alpha v_1(S) > v_1(M\backslash S)$. This also implies that $v_1(S) > 0$.

Also, since $S$ is minimal, player 1 is not happy with $S \backslash \{g\}$ for all $g \in S$. Therefore player 1 is happy with $(M\backslash S)\cup \{g\}$ for all $g \in S$. By the same argument as above, we have $\alpha v_1(M\backslash S)\cup \{g\}) > v_1(S\backslash \{g\})$.

Since $v_i(S) > 0$, $S$ must be nonempty, so let $g$ be an arbitrary item in $S$. By subadditivity of $v_1$, we have 
\begin{align*}
v_1(M\backslash S) + v_1(\{g\}) \geq v_1\Big((M\backslash S) \cup\{g\}\Big)
\end{align*}
Similarly,
\begin{align*}
v_1(S \backslash \{g\}) + v_1(\{g\}) \geq&\  v_1(S)\\
v_1(S \backslash \{g\}) \geq&\  v_1(S) - v_1(\{g\})
\end{align*}
Therefore
\begin{align*}
v_1(M\backslash S) + v_1(\{g\}) \geq&\ v_1\Big((M\backslash S) \cup\{g\}\Big)\\
>&\ \frac{1}{\alpha}v_1(S\backslash\{g\})\\
\geq&\ \Big(\frac{1}{\alpha} - 1\Big)v_1(S\backslash\{g\}) + v_1(S) - v_1(\{g\})
\end{align*}
Since $v_1(S) > \frac{1}{\alpha}v_1(M\backslash S)$, we have
\begin{align*}
v_1(M\backslash S) + v_1(\{g\}) >&\ \Big(\frac{1}{\alpha}- 1\Big)v_1(S\backslash\{g\}) + \frac{1}{\alpha}v_1(M\backslash S) - v_1(\{g\})\\
2v_1(\{g\}) \geq&\ \Big(\frac{1}{\alpha}- 1\Big)v_1(S\backslash\{g\}) + 
		\Big(\frac{1}{\alpha}- 1\Big)v_1(M\backslash S)\\
v_1(\{g\}) \geq&\ \frac{1}{2}\Big(\frac{1}{\alpha}- 1\Big)\Big(v_1(S\backslash\{g\}) + v_1(M\backslash S)\Big)
\end{align*}
By subadditivity of $v_1$, we have
\begin{align*}
v_1(S\backslash\{g\}) + v_1(M\backslash S) \geq&\ v_1\Big((S \cup (M\backslash S)) \backslash \{g\}\Big)\\
=&\ v_1(M\backslash\{g\})
\end{align*}
Therefore,
\begin{align*}
v_1(\{g\}) >&\ \frac{1}{2}\Big(\frac{1}{\alpha}- 1\Big) v_1(M\backslash\{g\})\\
v_1(\{g\}) \geq&\ \alpha v_1(M\backslash\{g\})
\end{align*}
where the final step is due to $0 < \alpha \leq 1/2$.

Thus player 1 is $\eta(\alpha)$-happy with the bundle $\{g\}$ by definition. Since the protocol did not terminate in step 1, player 1 must not be happy with $M\backslash \{g\}$. Therefore player $i$ is happy with a bundle $S$ if and only if $g\in S$, and so the only minimal bundle is $\{g\}$.
\end{proof}

Theorem~\ref{thm:n=2-subadd-upper} is immediately implied by the combination of Lemma~\ref{lem:minimal-correct} (correctness) and Lemma~\ref{lem:subadd-minimal-cost} (communication cost).

\begin{theorem}\label{thm:n=2-subadd-upper}
For two players with subadditive valuations and $c = \eta(1/2)$, Protocol~\ref{pro:minimal-upper} has communication cost at at most $2(m + v^{size})$, and either returns a $c$-$P$ allocation or a $c^*$-$P$ allocation.
\end{theorem}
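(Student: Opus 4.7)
The plan is to observe that Theorem~\ref{thm:n=2-subadd-upper} is an immediate corollary of two results already proven in the preceding material: Lemma~\ref{lem:minimal-correct} establishes correctness of Protocol~\ref{pro:minimal-upper} in full generality (for any valuation class, any $P \in $ \properties, and any constant $c$), and Lemma~\ref{lem:subadd-minimal-cost} established the communication bound $2(m + v^{size})$ for subadditive valuations with $c = \eta(\alpha)$ for any $\alpha \in (0, 1/2]$. Since neither result required the specific choice $\alpha = 1/2$, I would simply instantiate Lemma~\ref{lem:subadd-minimal-cost} at $\alpha = 1/2$ to match the hypothesis of the theorem.

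Concretely, I would first invoke Lemma~\ref{lem:minimal-correct} to conclude that, on any subadditive instance, Protocol~\ref{pro:minimal-upper} returns either a $c$-$P$ allocation or a $c^*$-$P$ allocation. This already takes care of the correctness half of the theorem statement. Then, noting that the hypothesis $c = \eta(1/2)$ corresponds to choosing $\alpha = 1/2$, I would invoke Lemma~\ref{lem:subadd-minimal-cost} to obtain the communication cost bound of $2(m + v^{size})$.

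There is no substantive obstacle to overcome: the work has already been done in establishing the two lemmas. The only small remark worth making is that the bound $\alpha \le 1/2$ in Lemma~\ref{lem:subadd-minimal-cost} is tight for the approach — the final step of that lemma's proof used $\frac{1}{2}(\frac{1}{\alpha} - 1) \ge \alpha$, which holds precisely when $\alpha \le 1/2$ — so choosing $\alpha = 1/2$ yields the strongest approximation guarantee $c = \eta(1/2)$ the protocol can certify for subadditive valuations via this analysis. The proof itself should be a single short paragraph that simply chains the two lemmas together.
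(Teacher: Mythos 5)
Your proposal matches the paper exactly: the paper states that Theorem~\ref{thm:n=2-subadd-upper} is immediately implied by combining Lemma~\ref{lem:minimal-correct} (correctness) with Lemma~\ref{lem:subadd-minimal-cost} (communication cost) instantiated at $\alpha = 1/2$. Your additional remark about why $\alpha \le 1/2$ is the limit of the analysis is accurate and consistent with the lemma's proof.
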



Theorem~\ref{thm:n=2-subadd-upper} immediately implies the following result.

\begin{theorem}\label{thm:n=2-subadd-upper-complexity}
For two players with subadditive valuations, a property $P$, and any constant $c \leq \eta^P(1/2)$, there exists a deterministic protocol with communication cost $2(m + v^{size})$ which solves \fd. Formally,
\[
D_{subadd}(2,m,P, c) \leq 2(m + v^{size})
\]
for any $c \leq \eta^P(1/2)$.
\end{theorem}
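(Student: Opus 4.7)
The plan is to derive this corollary directly from Theorem~\ref{thm:n=2-subadd-upper} by a monotonicity argument: running the protocol tuned to the threshold $\eta^P(1/2)$ gives us enough information to resolve the problem for any weaker constant $c \leq \eta^P(1/2)$. The communication cost bound then transfers verbatim since we use precisely the same protocol execution.

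Concretely, I would run Protocol~\ref{pro:minimal-upper} with parameter $c' = \eta^P(1/2)$. By Theorem~\ref{thm:n=2-subadd-upper}, this uses at most $2(m + \valsize)$ bits and returns either a $c'$-$P$ allocation or a $c^*$-$P$ allocation, where $c^*$ is the maximum $c''$ such that a $c''$-$P$ allocation exists. In the first case, since $c \leq c' = \eta^P(1/2)$, the returned allocation is also $c$-$P$ (because $c$-happiness is monotone in $c$: any player who is $c'$-happy with her bundle is $c''$-happy for every $c'' \leq c'$), so we output it. In the second case, we compare $c^*$ and $c$: if $c^* \geq c$, the returned allocation is $c$-$P$ and we output it; if $c^* < c$, then by definition of $c^*$ no allocation is $c''$-$P$ for any $c'' > c^*$, and in particular no $c$-$P$ allocation exists, so we declare so.

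The comparison $c^* \geq c$ in the final step is done locally by player 2 (who computes $c^*$ in step 4 of Protocol~\ref{pro:minimal-upper}) and the constant $c$ is a public input, so this adds no communication. The protocol thus correctly solves \fd for the parameter $c$ with the same cost bound $2(m + \valsize)$, yielding $D_{subadd}(2,m,P,c) \leq 2(m+\valsize)$ as required.

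There is no genuine obstacle here beyond observing the monotonicity of $c$-happiness and the meaning of $c^*$; the heavy lifting was already done in Theorem~\ref{thm:n=2-subadd-upper} and Lemma~\ref{lem:subadd-minimal-cost}. The only thing to be slightly careful about is that the protocol itself is specified for a single constant, so we must explicitly note that the execution corresponding to $c' = \eta^P(1/2)$ certifies the answer for all smaller $c$ simultaneously, rather than requiring a separate invocation per value of $c$.
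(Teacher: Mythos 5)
Your proposal is correct and matches the paper's own proof: both run Protocol~\ref{pro:minimal-upper} with the threshold $\eta^P(1/2)$, use Theorem~\ref{thm:n=2-subadd-upper} for correctness and cost, and resolve the case $c \le \eta^P(1/2)$ by comparing $c$ with $c^*$ at no extra communication. The extra remarks on monotonicity of $c$-happiness are a fine (if slightly more explicit) elaboration of the same argument.
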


\begin{proof}
Run Protocol~\ref{pro:minimal-upper} to either find a $\eta^P(1/2)$-$P$ allocation, or to find a $c'$-$P$ allocation where $c'$ is the best possible. If a $\eta^P(1/2)$-$P$ allocation exists, then a $c$-$P$ allocation exists, since $c \leq \eta^P(1/2)$. If a $c^*$-$P$ allocation is returned where $c^* < \eta^P(1/2)$, then by definition of $c^*$, a $c$-$P$ allocation exists if and only if $c^* \geq c$. \end{proof}

Theorem~\ref{thm:EF-n=2-subadd-upper} is a direct consequence of Theorem~\ref{thm:n=2-subadd-upper-complexity} since $\eta^{EF}(\alpha) = \alpha$, and Theorem~\ref{thm:prop-n=2-subadd-upper} requires only a short proof.

\begin{theorem}\label{thm:EF-n=2-subadd-upper}
For two players with subadditive valuations, $P =$ EF, and any constant $c \leq 1/2$, there exists a deterministic protocol with communication cost $2(m + v^{size})$ which solves \fd. Formally,
\[
D_{subadd}(2,m,\text{EF}, c) \leq 2(m + v^{size})
\]
for any $c \leq 1/2$.
\end{theorem}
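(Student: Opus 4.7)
The plan is to derive Theorem~\ref{thm:EF-n=2-subadd-upper} as an immediate corollary of Theorem~\ref{thm:n=2-subadd-upper-complexity}, which has already been established. That theorem says that for two players with subadditive valuations, any property $P$, and any $c \leq \eta^P(1/2)$, Protocol~\ref{pro:minimal-upper} solves Fair Division with communication cost $2(m+v^{size})$. So all that remains is to identify $\eta^{EF}(1/2)$.

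First I would unpack the definition: $\eta^P(\alpha)$ is the largest $c \leq 1$ such that the condition $v_i(A_i) \geq \alpha \, v_i(A_{\overline{i}})$ for both $i$ suffices to certify that $A$ is $c$-$P$. When $P = \text{EF}$, this condition is literally the statement that $A$ is $\alpha$-EF (for two players, envy-freeness reduces to the comparison between the two bundles). Hence $\eta^{EF}(\alpha) = \alpha$ for every $\alpha \in (0,1]$, and in particular $\eta^{EF}(1/2) = 1/2$.

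With this observation in hand, any $c \leq 1/2$ satisfies $c \leq \eta^{EF}(1/2)$, so Theorem~\ref{thm:n=2-subadd-upper-complexity} applied with $P = \text{EF}$ immediately yields the desired communication bound $D_{subadd}(2,m,\text{EF},c) \leq 2(m + v^{size})$.

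There is no real obstacle here: the only step worth stating carefully is the identification $\eta^{EF}(\alpha) = \alpha$, and this is purely a matter of matching the definition of $\eta^P$ against the definition of $c$-EF for the two-player case. The proof therefore amounts to one sentence invoking Theorem~\ref{thm:n=2-subadd-upper-complexity} after noting $\eta^{EF}(1/2) = 1/2$.
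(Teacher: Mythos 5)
Your proposal is correct and is exactly the paper's argument: the paper states that Theorem~\ref{thm:EF-n=2-subadd-upper} is a direct consequence of Theorem~\ref{thm:n=2-subadd-upper-complexity} since $\eta^{EF}(\alpha) = \alpha$, which is precisely your identification $\eta^{EF}(1/2) = 1/2$.
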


\begin{theorem}\label{thm:prop-n=2-subadd-upper}
For two players with subadditive valuations, $P =$ Prop, and any constant $c \leq 2/3$, there exists a deterministic protocol with communication cost $2(m+v^{size})$ which solves \fd. Formally,
\[
D_{subadd}(2,m,\text{Prop}, c) \leq 2(m + v^{size})
\]
for any $c \le 2/3$.
\end{theorem}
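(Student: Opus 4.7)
The plan is to derive this theorem as an immediate consequence of Theorem~\ref{thm:n=2-subadd-upper-complexity}, which already gives $D_{subadd}(2,m,P,c) \le 2(m+v^{size})$ whenever $c \le \eta^P(1/2)$. So the entire task reduces to showing $\eta^{\text{Prop}}(1/2) \ge 2/3$, i.e., that for subadditive valuations, any allocation in which each player values her own bundle at least half as much as the other's is automatically $(2/3)$-Prop.

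To establish this, I would fix an allocation $A = (A_1, A_2)$ satisfying $v_i(A_i) \ge \tfrac{1}{2} v_i(A_{\overline{i}})$ for both $i$, and then use subadditivity to bound $v_i(M)$. Specifically, since $M = A_1 \cup A_2$, subadditivity gives
\[
v_i(M) \le v_i(A_i) + v_i(A_{\overline{i}}) \le v_i(A_i) + 2 v_i(A_i) = 3 v_i(A_i),
\]
so $v_i(A_i) \ge \tfrac{1}{3} v_i(M) = \tfrac{2/3}{2} v_i(M)$, which is exactly the $(2/3)$-Prop condition. Hence $\eta^{\text{Prop}}(1/2) \ge 2/3$, and invoking Theorem~\ref{thm:n=2-subadd-upper-complexity} with any $c \le 2/3 \le \eta^{\text{Prop}}(1/2)$ yields the claimed communication bound.

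There is no real obstacle here: the heavy lifting was done in proving Lemma~\ref{lem:subadd-minimal-cost} and Theorem~\ref{thm:n=2-subadd-upper-complexity}. The only setting-specific work is the two-line subadditivity calculation bounding $v_i(M)$ by $3 v_i(A_i)$, which translates a ``half as much'' guarantee into a ``one-third of the total'' guarantee. One sanity check worth noting in the write-up is that the constant $2/3$ is exactly what one would expect: when $\alpha = 1/2$, the bound $c \le \tfrac{2\alpha}{\alpha+1}$ derived from $v_i(M) \le (1 + 1/\alpha) v_i(A_i)$ evaluates to $2/3$, matching the threshold that will be shown to be tight in Section~\ref{sec:n=2-subadd-lower}.
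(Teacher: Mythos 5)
Your proposal is correct and matches the paper's proof: the paper likewise reduces the theorem to showing $\eta^{Prop}(1/2) \ge 2/3$ via the same subadditivity bound $v_i(M) \le v_i(A_i) + v_i(A_{\overline{i}}) \le \frac{\alpha+1}{\alpha} v_i(A_i)$, specialized at $\alpha = 1/2$. The only cosmetic difference is that the paper carries the general $\alpha$ through to conclude $\eta^{Prop}(\alpha) \ge \frac{2\alpha}{\alpha+1}$, which you note in your sanity check.
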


\begin{proof}
By Theorem~\ref{thm:n=2-subadd-upper}, we need only show that $\eta^{Prop}(1/2) \geq 2/3$. Suppose $v_1(A_1) \geq \alpha v_1(A_2)$. Then $v_1(A_2) \leq \mfrac{1}{\alpha} v_1(A_1)$, and by subadditivity of $v_1$, we have
\begin{align*}
v_1(M) =&\ v_1(A_1 \cup A_2)\\
\leq&\ v_1(A_1) + v_1(A_2)\\
\leq&\ v_1(A_1) + \frac{1}{\alpha}v_1(A_1)\\
=&\ \frac{\alpha+1}{\alpha} v_1(A_1)
\end{align*}
Therefore $v_1(A_1) \geq \mfrac{2\alpha}{\alpha+1} \left(\mfrac{1}{2}v_1(M)\right)$, so $\eta^{Prop}(\alpha) \geq \mfrac{2\alpha}{\alpha+1}$. Therefore $\eta^{Prop}(1/2) \geq 2/3$.
\end{proof}

Since $\eta^{Prop}(1/2)\geq 2/3$, any $\frac{1}{2}$-EF $A$ allocation is also $\frac{2}{3}$-Prop. However, a $c'$-EF allocation where $c'$ is the maximum possible EF approximation ratio does not necessarily achieve the maximum possible approximation ratio for Prop. Consider the case where $M = \{g_1, g_2\}$ and the players valuations are given by
\[ v_1(S) =
   \begin{cases}
      9 & \textnormal{if}\ \ S = M\\   
      7 & \textnormal{if}\ \ S = \{g_1\}\\
      2 & \textnormal{if}\ \ S = \{g_2\}\ 	
   \end{cases}
\hspace{.6 in}
v_2(S) =
   \begin{cases}
      4 & \textnormal{if}\ \ S = M\\   
      4 & \textnormal{if}\ \ S = \{g_1\}\\
      1 & \textnormal{if}\ \ S = \{g_2\} 	
   \end{cases}
\]
and $v_i(\emptyset) = 0$ for both $i$. There is no $\frac{1}{2}$-EF allocation or $\frac{2}{3}$-Prop allocation in this instance. The allocation achieving the maximum EF approximation ratio is $A = (\{g_2\}, \{g_1\})$ which is $\frac{2}{7}$-EF. On the other hand, the allocation achieving the maximum Prop approximation ratio is $\overline{A}$, which is $\frac{1}{2}$-Prop.

\subsection{Lower bounds}\label{sec:n=2-subadd-lower}

In this section, we show that $\frac{1}{2}$-EF and $\frac{2}{3}$-Prop are the best we can do deterministically for two players with subadditive valuations. We first prove that $c$-EF is hard for any $c > 1/2$, and then show that the same construction also proves hardness for $c$-Prop when $c > 2/3$.

We will use Lemma~\ref{lem:EF-equality-reduction}, which gives a standardized way to prove deterministic lower bounds for EF for two players. Recall that a list of allocations $\T = (T_1, T_2...)$ is defined where each $T_j = (T_{j1}, T_{j2})$ is an allocation giving each player $k$ items. Also, for every such allocation $A$, exactly one of $A$ and $\overline{A}$ appears in $\T$. All that is needed to complete the reduction is to show how to construct valuations $v_1, v_2$ such that the following conditions are satisfied:

\firstReductionCond*
\secondReductionCond*
\thirdReductionCond*

\begin{theorem}\label{thm:EF-n=2-subadd-lower}
For two players with subadditive valuations and any $c> 1/2$, any deterministic protocol which determines whether a $c$-EF allocation exists requires an exponential amount of communication. Formally,
\[
D_{subadd}(2,2k,\text{EF},c) \geq \frac{1}{2} \binom{2k}{k}
\]
for any $c > 1/2$.
\end{theorem}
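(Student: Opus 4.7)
The plan is to apply Lemma~\ref{lem:EF-equality-reduction}, so all that remains is to construct, for each pair of bit strings $(x_1,x_2)$ of length $\tfrac{1}{2}\binom{2k}{k}$, subadditive valuations $v_1, v_2$ satisfying Conditions~\ref{reduction-cond1}--\ref{reduction-cond3}. As in Theorem~\ref{thm:EF-n=2-submod-lower}, I will label each size-$k$ bundle as ``happy'' or ``unhappy'' for player $i$ using $y_i$: a bundle $S$ of size $k$ is happy for $i$ when $S=T_{ji}$ and $y_{ij}=1$, or $S=T_{j\overline{i}}$ and $y_{ij}=0$; otherwise it is unhappy. (Exactly one of $S$ and $M\setminus S$ is happy for $i$.)

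The construction I propose is: fix a parameter $\beta$ with $(k-1)/k\le\beta<c$ (this interval is nonempty whenever $k>1/(1-c)$, which suffices for an exponential bound in $k$), and set
\[
v_i(S)=\begin{cases}0&S=\emptyset\\|S|/k&1\le|S|<k\\1&|S|=k\text{ and $S$ is happy for $i$}\\\beta&|S|=k\text{ and $S$ is unhappy for $i$}\\1&|S|>k.\end{cases}
\]
The point is that the linear piece $|S|/k$ sums up to exactly $1$ across any size-$k$ partition, and the cliff at $|S|=k$ is small (the happy value is $1$, the unhappy value is at least $(k-1)/k$), so subadditivity will be tight but not violated.

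Conditions~\ref{reduction-cond2} and~\ref{reduction-cond3} are immediate: when $y_{ij}=1$ (resp.\ $y_{ij}=0$), the complement of the happy bundle $T_{ji}$ (resp.\ $T_{j\overline{i}}$) is unhappy, so the unhappy-to-happy ratio equals $\beta<c$. Condition~\ref{reduction-cond1} is also immediate, since for $|S|<k$ we have $v_i(S)\le(k-1)/k<c=c\cdot v_i(M\setminus S)$. Monotonicity follows from $\beta\ge(k-1)/k$ together with the fact that all values are at most $1$.

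The main obstacle is subadditivity, for which I would do a case analysis on $s=|A\cup B|$, $a=|A|$, $b=|B|$, $t=|A\cap B|=a+b-s$. If $s<k$, both sides are computed from the linear piece and $v(A\cup B)=s/k\le(a+b)/k=v(A)+v(B)$. If $s=k$, then either $a=k$ (so $B\subseteq A$ and the inequality is trivial), or $a,b<k$, in which case $v(A)+v(B)=(a+b)/k=(k+t)/k\ge 1\ge v(A\cup B)$. The delicate case is $s>k$, where $v(A\cup B)=1$: if $a,b<k$ the linear piece again gives $v(A)+v(B)=(k+t')/k\ge 1$ for some $t'\ge 0$; if some part (say $A$) has size exactly $k$ and is unhappy, then $v(A)=\beta$ and $B$ contains at least one element outside $A$ (since $s>k$), contributing $v(B)\ge 1/k$, giving $v(A)+v(B)\ge\beta+1/k\ge 1$ precisely because we chose $\beta\ge(k-1)/k$. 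All remaining sub-cases (parts of size $>k$, or two unhappy size-$k$ parts) are easier. Having verified subadditivity and the three conditions, Lemma~\ref{lem:EF-equality-reduction} immediately yields $D_{subadd}(2,2k,\mathrm{EF},c)\ge\tfrac12\binom{2k}{k}$.
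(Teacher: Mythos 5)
Your overall strategy---invoke Lemma~\ref{lem:EF-equality-reduction} and construct subadditive valuations meeting Conditions~\ref{reduction-cond1}--\ref{reduction-cond3}---matches the paper's, but your specific construction has a fatal arithmetic error. You require $(k-1)/k \le \beta < c$ and assert this interval is nonempty ``whenever $k > 1/(1-c)$''; the inequality goes the other way. The condition $(k-1)/k < c$ is equivalent to $k < 1/(1-c)$, so for any fixed constant $c < 1$, once $k \ge 1/(1-c)$ no admissible $\beta$ exists. Both constraints on $\beta$ are genuinely forced: subadditivity applied to an unhappy size-$k$ bundle $A$ and a singleton $\{g\}$ with $g \notin A$ gives $1 = v_i(A\cup\{g\}) \le \beta + 1/k$, hence $\beta \ge (k-1)/k$, while Conditions~\ref{reduction-cond2} and~\ref{reduction-cond3} force $\beta < c$ since the unhappy-to-happy ratio is exactly $\beta$. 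The same broken inequality also sinks your check of Condition~\ref{reduction-cond1}: a bundle $S$ of size $k-1$ has value $(k-1)/k$, which is not less than $c\cdot v_i(\mss) = c$ once $k \ge 1/(1-c)$. So your construction only produces instances with constantly many items and yields no exponential lower bound.

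The root cause is the linear ramp $|S|/k$: it makes singletons worth only $1/k$, and subadditivity then squeezes the unhappy size-$k$ value to within $1/k$ of the happy value, so the ratio tends to $1$ rather than staying below $c$. The paper avoids this by using a flat profile: every nonempty bundle of size less than $k$ gets value $1$, happy size-$k$ bundles get $2$, unhappy ones get $1$, bundles of size strictly between $k$ and $2k$ get $2$, and $M$ gets $3$. The unhappy-to-happy ratio is then $1/2$ (which is exactly why the threshold is $c > 1/2$), and subadditivity holds because any two nonempty bundles already sum to at least $2$, with a short separate argument for $S \cup T = M$. If you replace your ramp with this flat structure, the rest of your argument (the happy/unhappy labeling and the appeal to Lemma~\ref{lem:EF-equality-reduction}) goes through.
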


\begin{proof}
Given bit strings of length $\mfrac{1}{2}\mbinom{2k}{k}$ for some integer $k$, define $M, N, (y_1, y_2)$, and $\T$ as in Lemma~\ref{lem:EF-equality-reduction}. We need only to construct subadditive valuations $v_1, v_2$ such that Conditions~\ref{reduction-cond1}, \ref{reduction-cond2}, and \ref{reduction-cond3} are met. We define each $v_i$ by
\[ 
v_i(S) =
   \begin{cases} 
      0 & \textnormal{if}\ \ |S| = 0\\        
      1 & \textnormal{if}\ \ 0 < |S| < k \\ 
      2 & \textnormal{if}\ \ k < |S| < 2k\\        
      3 & \textnormal{if}\ \ |S| = 2k\\           
      2 & \textnormal{if}\ \ \exists j\ \ S = T_{ji} \ \ \textnormal{where}\ \ y_{ij} = 1\\
      2 & \textnormal{if}\ \ \exists j\ \ S = T_{j\overline{i}} \ \ \textnormal{where}\ \ y_{ij} = 0\\   
      1 & \textnormal{if}\ \ \exists j\ \ S = T_{ji} \ \ \textnormal{where}\ \ y_{ij} = 0\\
      1 & \textnormal{if}\ \ \exists j\ \ S = T_{j\overline{i}} \ \ \textnormal{where}\ \ y_{ij} = 1\\     
   \end{cases}
\]
When $|S| = k$, $S$ falls under exactly one of the last four cases in the definition of $v_i$.

If $|S| < k$, we have $|\mss| > k$, so $v_i(S) \leq 1$ and $v_i(\mss) \geq 2$. Thus for any $c > 1/2$, $v_i(S) < c\cdot v_i(\mss)$, so Condition~\ref{reduction-cond1} is met. Suppose $y_{ij} = 1$ for some $i,j$: then $v_i(T_{j\overi}) = 1$ and $v_i(T_{ji}) = 2$, so again $v_i(S) < c\cdot v_i(\mss)$ for any $c > 1/2$. Suppose $y_{ij} = 0$ for some $i,j$: then similarly, $v_i(T_{ji}) = 1 < c\cdot 2 = c\cdot v_i(T_{j\overi})$ for any $c > 1/2$. Thus Condition~\ref{reduction-cond3} is satisfied as well.

It remains to show that $v_i$ is subadditive for both $i$. Specifically, we need to show that for any $S$ and $T$, $v_i(S) + v_i(T) \geq v_i(S \cup T)$. If either $S = \emptyset$ or $T = \emptyset$, this trivially holds, so suppose $|S| > 0$ and $|T| > 0$. We proceed by case analysis. 

Case 1: $|S\cup T| < 2k$. Then $v_i(S \cup T) \leq 2$.  Since $|S| > 0$ and $|T| > 0$, we have
\[
v_i(S) + v_i(T) \geq 1 + 1 \geq 2 \geq v_i(S \cup T)
\]

Case 2: $|S\cup T| = 2k$. Then $v_i(S\cup T) = 3$. Since $v_i(S) \geq 1$ and $v_i(T) \geq 1$, it remains to show that at least one of $v_i(S) \geq 2$ and $v_i(T) \geq 2$ is true. Since $S\cup T = M$ in this case, we have $M \backslash S \subseteq T$. Observe that under these valuations, for any allocation $A$ where $v_i(A_i) \leq 1$, we have $v_i(A_{\overi}) \geq 2$. Thus if $v_i(S) \leq 1$, then $v_i(M\backslash S) \geq 2$, so $v_i(T) \geq 2$. Since $v_i$ only takes on integer values in this proof, if $v_i(S) > 1$, we have $v_i(S) \geq 2$. Thus at least one of $v_i(S) \geq 2$ and $v_i(T) \geq 2$ is true, so the claim is satisfied in this case. Thus $v_i$ is subadditive for both $i$.
\end{proof}

To prove hardness for proportionality, we reduce from envy-freeness.

\begin{theorem}\label{thm:prop-n=2-subadd-lower}
For two players with subadditive valuations and any $c> 2/3$, any deterministic protocol which determines whether a $c$-Prop allocation exists requires an exponential amount of communication. Formally,
\[
D_{subadd}(2,2k,\text{Prop},c) \geq \frac{1}{2}\binom{2k}{k}
\]
for any $c > 2/3$.
\end{theorem}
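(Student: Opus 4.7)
The plan is to reuse the exact same construction as in the proof of Theorem~\ref{thm:EF-n=2-subadd-lower} and verify that, for these particular subadditive valuations, the threshold for approximate proportionality sits at $c = 2/3$ just as the threshold for approximate envy-freeness sat at $c = 1/2$. In other words, I will reduce directly from \equ (rather than via the general Lemma~\ref{lem:EF-equality-reduction}, which is phrased in EF terms) using the same $M$, $N$, $\T$, $y_1 = x_1$, $y_2 = \overline{x_2}$, and the same valuations $v_1, v_2$.

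The key observation is that $v_i(M) = 3$, so $c$-Prop demands $v_i(A_i) \geq 3c/2$. For $c > 2/3$ this forces $v_i(A_i) > 1$, and since the constructed valuations take only integer values, it in fact forces $v_i(A_i) \geq 2$. Thus bundles of value $0$ or $1$ make player $i$ unhappy, while bundles of value $2$ or $3$ make her happy. A short case analysis on $|A_i|$ shows: if some $|A_i| < k$ then that player gets value at most $1$; if some $|A_i| > k$ then $|A_{\overi}| < k$ and the other player fails; so any candidate $c$-Prop allocation must have $|A_1| = |A_2| = k$, i.e.\ must be $T_j$ or $\overline{T_j}$ for some $j$. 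Reading off the definition of $v_i$, we see that $T_j$ is $c$-Prop iff $y_{1j} = y_{2j} = 1$ and $\overline{T_j}$ is $c$-Prop iff $y_{1j} = y_{2j} = 0$, so a $c$-Prop allocation exists iff there is some $j$ with $y_{1j} = y_{2j}$, iff $(x_1, x_2)$ is a no-instance of \equ.

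Subadditivity of $v_1$ and $v_2$ has already been established inside the proof of Theorem~\ref{thm:EF-n=2-subadd-lower}, so nothing new is needed there. Combining the reduction with Lemma~\ref{lem:equality-lower} applied to bit strings of length $\tfrac{1}{2}\binom{2k}{k}$ yields the desired lower bound $D_{subadd}(2,2k,\text{Prop},c) \geq \tfrac{1}{2}\binom{2k}{k}$.

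I do not expect a serious obstacle here: the construction is already built, subadditivity is inherited, and the entire argument rests on the arithmetic coincidence that ``value $\geq 2$ out of $v_i(M) = 3$'' captures $c$-Prop for $c > 2/3$ in exactly the same way that ``value $2$ vs value $1$'' captured $c$-EF for $c > 1/2$. The only thing to be careful about is the boundary cases $|A_i| \in \{0, 2k\}$ and $|A_i| \in (k, 2k)$, which must be dispatched up front so that the reduction cleanly restricts attention to allocations of the form $T_j$ or $\overline{T_j}$.
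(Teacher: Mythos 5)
Your proposal is correct and follows essentially the same approach as the paper: identical construction, identical key observation that $v_i(M)=3$ and integrality force $v_i(A_i)\geq 2$ for $c$-Prop with $c>2/3$. The only cosmetic difference is that the paper packages the final step as ``under these valuations, $A$ is $c$-Prop iff $A$ is $c'$-EF for $c'>1/2$'' and then invokes Theorem~\ref{thm:EF-n=2-subadd-lower} as a black box, whereas you re-verify the \equ correspondence directly; both arguments are interchangeable.
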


\begin{proof}
We reduce from \equ. Given an input $(x_1, x_2)$, we define $v_i$ as in the proof of Theorem~\ref{thm:EF-n=2-subadd-lower}. By Theorem~\ref{thm:EF-n=2-subadd-lower}, for any $c' > 1/2$, at least $\mfrac{1}{2}\mbinom{2k}{k}$ communication is required to determine whether a $c'$-EF allocation exists under these valuations. We will show that under these valuations, for any $c>2/3$ and any $c' > 1/2$, an allocation $A$ is $c$-Prop if and only if is $c'$-EF: thus the lower bound of Theorem~\ref{thm:EF-n=2-subadd-lower} will apply to $c$-Prop for $c > 1/2$ as well.\footnote{It is actually sufficient to show that for any $c > 2/3$, there exists such a $c' > 1/2$, but we prove that this holds for any $c' > 1/2$.}

Suppose an allocation $A$ is $c'$-EF for some $c' > 1/2$: then $v_i(A_i) \geq c' v_i(A_{\overi}) > \mfrac{1}{2} v_i(A_{\overi})$. Under these valuations, for any allocation $A$ where $v_i(A_i) \leq 1$, we have $v_i(A_{\overi}) \geq 2$. Thus $v_i(A_i)$ must be strictly greater than 1. Since these valuations only take on integer values, this implies that $v_i(A_i) \geq 2$ for both $i$. Therefore
\[
v_i(A_i) \geq 2 \geq \frac{3}{2} = \frac{1}{2} v_i(M) \geq \frac{c}{2} v_i(M)
\] 
for every $c > 2/3$, so $A$ is $c$-Prop for every $c > 2/3$.

Now suppose that $A$ is $c$-Prop for some $c > 2/3$: then $v_i(A_i) \geq \mfrac{c}{2}v_i(M) = \mfrac{3c}{2} > 1$ for both $i$. Thus we again have $v_i(A_i) \geq 2$ for both $i$, since these valuations only take on integer values. This also implies that $|A_i | > 0$ for both $i$, which means that $|A_i | < 2k$ for both $i$. Therefore $v_i(A_{\overi}) \leq 2$ for both $i$, so we have
\[
v_i(A_i) \geq 2 \geq v_i(A_{\overi}) \geq c' v_i(A_{\overi})
\]
for any $c' > 1/2$. Therefore $A$ is $c'$-EF for every $c' > 1/2$.

\end{proof}

Theorems~\ref{thm:EF-n=2-subadd-lower} and \ref{thm:prop-n=2-subadd-lower} resolve the deterministic subadditive case. We now move on to general valuations, and give the last few results we need to complete Table~\ref{tbl:results}.

\section{General valuations}\label{sec:general}

This section covers the remaining settings for envy-freeness and proportionality. In Section~\ref{sec:prop-n=2-rand-lower}, we show that $c$-Prop is hard for general valuations for any $c > 0$, in both the randomized and deterministic settings. Section~\ref{sec:EF-n=2-general-lower} gives a similar lower bound for $c$-EF for any $c > 0$, but only for deterministic protocols. In Section~\ref{sec:n=2-rand-upper}, we show that there actually exists an efficient randomized protocol for $c$-EF for any $c \in [0,1]$. We also show that this protocol works for proportionality in the subadditive case, again for any $c \in [0,1]$. These results conclude our study of envy-freeness and proportionality. 

\subsection{Proportionality randomized lower bound}\label{sec:prop-n=2-rand-lower}

Recall that \disj on bit strings of length $\ell$ has randomized communication complexity $\Omega(\ell)$. (Lemma~\ref{lem:disjoint-lower}).

\begin{theorem}\label{thm:prop-n=2-rand-lower}
For two players with general valuations and any $c>0$, any randomized protocol which determines whether a $c$-Prop allocation exists requires an exponential amount of communication. Specifically
\[
R(2,2k,\text{Prop},c) \in \Omega \left( \binom{2k}{k} \right)
\]
for any $c > 0$.
\end{theorem}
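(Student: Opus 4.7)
The plan is to reduce from \disj. Given bit strings $x_1, x_2 \in \{0,1\}^{\binom{2k}{k}}$, I would construct a two-player \fd instance on $m = 2k$ items such that a $c$-Prop allocation exists if and only if $(x_1, x_2)$ is a no-instance of \disj; combined with Lemma~\ref{lem:disjoint-lower} this forces $R(2, 2k, \text{Prop}, c) \in \Omega(\binom{2k}{k})$.

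The construction follows the template of Section~\ref{sec:lower-bound-approach} with one tweak: because \disj asks whether some index $j$ has $x_{1j} = x_{2j} = 1$ (rather than whether $y_{1j} = y_{2j}$), I would let $\T = (T_1, \ldots, T_{\binom{2k}{k}})$ enumerate \emph{every} allocation of $M$ that gives each player exactly $k$ items, so in contrast to Lemma~\ref{lem:EF-equality-reduction}, both $A$ and $\overline{A}$ appear in $\T$. Every size-$k$ bundle $S$ is then player $i$'s share in a unique $T_j$; let $j(S,i)$ denote this index. Normalizing $v_i(M) = 2/c$ so that $(c,\text{Prop})$-happiness reduces to $v_i(S) \geq 1$, I would define
\[
v_i(S) = \begin{cases} 0 & \text{if } |S| < k, \\ x_{i,\, j(S,i)} & \text{if } |S| = k, \\ 2/c & \text{if } |S| > k. \end{cases}
\]
Since $0 \leq 1 \leq 2/c$ for $c \in (0,1]$, this $v_i$ is monotone and normalized, so it is a valid general valuation (no sub- or super-additivity is imposed or required).

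Next I would verify the reduction. Any allocation $A$ with $|A_i| < k$ for some $i$ has $v_i(A_i) = 0 < 1$ and thus fails $c$-Prop, so every $c$-Prop allocation must lie in $\T$. For $A = T_j$, player $i$ is $(c,\text{Prop})$-happy with $T_{ji}$ iff $v_i(T_{ji}) = 1$ iff $x_{i,j} = 1$. Therefore a $c$-Prop allocation exists iff there is an index $j$ with $x_{1,j} = x_{2,j} = 1$, which is exactly the no-instance condition for \disj. A randomized \fd protocol with communication $C$ would then decide \disj on length-$\binom{2k}{k}$ strings with the same communication $C$, so Lemma~\ref{lem:disjoint-lower} forces $C \in \Omega(\binom{2k}{k})$.

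There is no serious obstacle here; the construction is direct and the monotonicity check is immediate. The only point that requires care is making sure the encoding $S \mapsto x_{i,\, j(S,i)}$ is well-defined: enumerating \emph{all} size-$k$ allocations (rather than half of them, as in Lemma~\ref{lem:EF-equality-reduction}) ensures that each size-$k$ bundle is player $i$'s share in exactly one $T_j$, so $j(S,i)$ is unambiguous and each player independently encodes her own bit string into her own valuation.
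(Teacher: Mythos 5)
Your proposal is correct and follows essentially the same route as the paper: a reduction from \disj using the full list $\T$ of all $\binom{2k}{k}$ balanced allocations, with valuations that are $0$ below size $k$, encode the player's bit at size $k$, and are maximal above size $k$ (the paper sets the large value to $1$ with $v_i(M)=1$ rather than normalizing to $2/c$, but this is an immaterial rescaling). The key observations — that any $c$-Prop allocation must give each player exactly $k$ items, and that happiness at a balanced allocation $T_j$ is equivalent to $x_{ij}=1$ — match the paper's argument exactly.
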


\begin{proof}
We reduce from \disj. Given bit strings $x_1$ and $x_2$ of length $\mbinom{2k}{k}$, we construct an instance of \fd as follows. Let $N = [2]$ be the set of players, and let $M = [2k]$ be the set of items. Let $\T = (T_1, T_2...T_{|\T|})$ be an arbitrary ordering of all of the allocations which give each player $k$ items: for any allocation $A = (A_1, A_2)$ where $|A_1| = |A_2| = k$, there exists $j$ where $A_i = T_{ji}$ for both $i$. Both $A$ and $\overline{A}$ appear in $\T$. Note that $|\T| = \mbinom{2k}{k}$. Each player $i$'s valuation is defined by
\[ 
v_i(S) =
   \begin{cases}    
      0 & \textnormal{if}\ \ |S| < k\\     
      1 & \textnormal{if}\ \ |S| > k\\  
      1 & \textnormal{if}\ \ \exists j\ \ S = T_{ji} \ \ \textnormal{where}\ \ x_{ij} = 1\\
      0 & \textnormal{if}\ \ \exists j\ \ S = T_{ji} \ \ \textnormal{where}\ \ x_{ij} = 0
   \end{cases}
\]
Exactly one of the last two cases occur when $|S| = k$, and any such $j$ is unique.

Suppose that $(x_1, x_2)$ is a no-instance of \disj: then there exists $j$ where $x_{1j} = x_{2j} = 1$. Consider the allocation $T_j = (T_{j1}, T_{j2})$. Then for both $i$, $v_i(T_{ji}) = 1 = v_i(M) \geq \mfrac{c}{2} \cdot v_i(M)$, so the allocation $T_j$ satisfies $c$-Prop.

Suppose that $(x_1, x_2)$ is a yes-instance of \disj: then for every $j$, there exists $i$ where $x_{ij} = 0$. Suppose that a $c$-Prop allocation $A = (A_1, A_2)$ exists: then $v_i(A_i) \geq \mfrac{c}{2} \cdot v_i(M) > 0$ for both $i$. Suppose a player $i$ receives strictly more than $k$ items in $A_i$: then the other player receives strictly fewer than $k$ items, and has value zero, which is impossible. Thus $|A_1| = |A_2| = k$. Since $\T$ contains all of the allocations which give each player $k$ items, there must exist $j$ where $A_i = T_{ji}$ for both $i$. But that implies that $x_{1j} = x_{2j} = 1$, which is a contradiction. Therefore no allocation is $c$-Prop.
\end{proof}

This lower bound is actually much more general than just $c$-Prop. It holds for any imaginable fairness property (not just $c$-EF or $c$-Prop) where (1) player $i$ is always unhappy if $v_i(A_i) = 0$, even if $v_i(A_{\overi})$ is also 0, and (2) player $i$ is always happy if $v_i(A_i) = v_i(M)$. Both $c$-EF and $c$-Prop satisfy the first condition. The second condition is satisfied by $c$-Prop for any $c$, but $c$-EF violates this for every $c$: player $i$ is always happy if $v_i(A_i) = v_i(A_{\overi}) = 0$. We will see in Section~\ref{sec:n=2-rand-upper} that this leads to an efficient randomized protocol for $c$-EF, for any $c \in [0,1]$.

\subsection{Envy-freeness deterministic lower bound}\label{sec:EF-n=2-general-lower}

In this section we prove that for general valuations, $c$-EF is hard in the deterministic setting for any $c > 0$. We will use Lemma~\ref{lem:EF-equality-reduction}; recall that we need only show how to construct valuations that satisfy the following conditions:

\firstReductionCond*
\secondReductionCond*
\thirdReductionCond*

\begin{theorem}\label{thm:EF-n=2-general-lower}
For two players with general valuations and any $c>0$, any deterministic protocol which determines whether a $c$-EF allocation exists requires an exponential amount of communication. Specifically,
\[
D(2,2k,\text{EF},c) \geq \frac{1}{2} \binom{2k}{k}
\]
for any $c > 0$.
\end{theorem}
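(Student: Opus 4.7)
The plan is to invoke Lemma~\ref{lem:EF-equality-reduction}, so I only need to exhibit valuations $v_1, v_2$ on $M = [2k]$ satisfying Conditions~\ref{reduction-cond1}, \ref{reduction-cond2}, and \ref{reduction-cond3}. Because general valuations require only monotonicity, normalization, and $v_i(M) > 0$, I can afford the coarsest possible $\{0,1\}$-valued construction. The key observation is that in $\T$ each unordered pair $\{S, M\setminus S\}$ of size-$k$ bundles occurs exactly once, so every size-$k$ bundle $S$ equals $T_{ji'}$ for a \emph{unique} pair $(j, i')$. This lets me define a ``likes'' function on size-$k$ bundles unambiguously.

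Concretely, for each player $i$, I would define
\[
v_i(S) \;=\;
\begin{cases}
0 & \text{if } |S| < k,\\
1 & \text{if } |S| > k,\\
1 & \text{if } |S| = k,\ S = T_{ji},\ y_{ij} = 1,\\
1 & \text{if } |S| = k,\ S = T_{j\overline{i}},\ y_{ij} = 0,\\
0 & \text{otherwise (the two remaining size-$k$ cases).}
\end{cases}
\]
First I would check that $v_i$ is well-defined and a legal general valuation: uniqueness of the pair $(j,i')$ representing each size-$k$ bundle makes the case split unambiguous; $v_i(\emptyset) = 0$ and $v_i(M) = 1 > 0$ are immediate; and monotonicity follows because enlarging $S$ can only move it from the $|S| < k$ zero-layer, through the mixed $|S| = k$ layer (values in $\{0,1\}$), into the $|S| > k$ all-ones layer, and a size-$k$ bundle with value $1$ can only be extended to a bundle with $|S| > k$, which also has value $1$.

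Next I would verify the three conditions. For Condition~\ref{reduction-cond1}, whenever $|S| < k$ we have $|\mss| > k$, so $v_i(S) = 0 < c \cdot 1 = c \cdot v_i(\mss)$ for every $c > 0$. For Condition~\ref{reduction-cond2}, fix $i,j$ with $y_{ij} = 1$: then $T_{ji}$ falls in the third case giving $v_i(T_{ji}) = 1$, while $T_{j\overline{i}}$ is represented by the pair $(j, \overline{i})$ and fails both $y_{ij} = 0$ and the $S = T_{ji}$ clause, so $v_i(T_{j\overline{i}}) = 0 < c$. Condition~\ref{reduction-cond3} is symmetric: when $y_{ij} = 0$, the roles swap and $v_i(T_{ji}) = 0 < c \cdot v_i(T_{j\overline{i}}) = c$.

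The only mildly delicate step, and the one I would double-check, is the bookkeeping on size-$k$ bundles: I rely on the fact that $\T$ contains each unordered pair $\{S, M\setminus S\}$ exactly once, so the representation $S = T_{ji'}$ is unique and the case split in the definition of $v_i$ cannot conflict. Once that is in place, Lemma~\ref{lem:EF-equality-reduction} delivers $D(2, 2k, \text{EF}, c) \geq \tfrac{1}{2}\binom{2k}{k}$ for every $c > 0$, completing the proof. Unlike the submodular and subadditive lower bounds, there is no additional structural constraint on the valuation to work around, which is why the construction here is the simplest of the three.
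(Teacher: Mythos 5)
Your proposal is correct and matches the paper's proof essentially verbatim: the same $\{0,1\}$-valued valuation on size-$k$ bundles keyed to $y_{ij}$, followed by the same verification of Conditions~\ref{reduction-cond1}--\ref{reduction-cond3} and an appeal to Lemma~\ref{lem:EF-equality-reduction}. The extra checks you include (well-definedness of the case split and monotonicity of $v_i$) are correct and are simply left implicit in the paper.
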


\begin{proof}
We use Lemma~\ref{lem:EF-equality-reduction}. Given bit strings of length $\mfrac{1}{2} \mbinom{2k}{k}$ for some integer $k$, define $M, N, (y_1, y_2)$, and $\T$ as in Lemma~\ref{lem:EF-equality-reduction}. We need only to construct valuations $v_1, v_2$ such that Conditions~\ref{reduction-cond1}, \ref{reduction-cond2}, and \ref{reduction-cond3} are met. We define each $v_i$ by\[ 
v_i(S) =
   \begin{cases}    
      0 & \textnormal{if}\ \ |S| < k\\     
      1 & \textnormal{if}\ \ |S| > k\\  
      1 & \textnormal{if}\ \ \exists j\ \ S = T_{ji} \ \ \textnormal{where}\ \ y_{ij} = 1\\
      1 & \textnormal{if}\ \ \exists j\ \ S = T_{j\overline{i}} \ \ \textnormal{where}\ \ y_{ij} = 0\\      
      0 & \textnormal{if}\ \ \exists j\ \ S = T_{ji} \ \ \textnormal{where}\ \ y_{ij} = 0\\
      0 & \textnormal{if}\ \ \exists j\ \ S = T_{j\overline{i}} \ \ \textnormal{where}\ \ y_{ij} = 1\\      
   \end{cases}
\]
Recall that for every allocation $A$ which gives each player $k$ items, $\T$ (as defined by Lemma~\ref{lem:EF-equality-reduction}) contains exactly one of $A$ and $\overline{A}$. Thus if $|S| = k$, $S$ falls under exactly one of the last four cases in the definition of $v_i$.

If $|S| < k$, we have $|\mss| > k$, so $v_i(S) = 0 < c = c\cdot v_i(\mss)$. This satisfies Condition~\ref{reduction-cond1}. Suppose $y_{ij} = 1$ for some $i,j$: then $v_i(T_{j\overi}) = 0 < c = c\cdot v_i(T_{ji})$, so Condition~\ref{reduction-cond2} is satisfied. Suppose $y_{ij} = 0$ for some $i,j$: then similarly, $v_i(T_{ji}) = 0 < c = c\cdot v_i(T_{j\overi})$. Thus Condition~\ref{reduction-cond3} is satisfied as well.
\end{proof}

\subsection{A randomized upper bound}\label{sec:n=2-rand-upper}

Although $c$-EF is hard for general valuations in the deterministic setting, it admits an efficient randomized protocol for any $c \leq 1$. Fundamentally, this is because the randomized communication complexity of \equ is constant, while its deterministic complexity is the length of the string. Our deterministic lower bound in Section~\ref{sec:EF-n=2-general-lower} was based on a reduction from \equ: in this section, we reduce \emph{to} \equ.

Our protocol will actually be much more general than just $c$-EF. For example, it will also work for $c$-Prop for subadditive valuations, for any $c \in [0,1]$. More generally, it solves \fd with two players when $(c,P)$ satisfies two conditions:

\condHappyWithOne*

\begin{condition}\label{rand-upper-cond2}
Whether player $i$ is happy does not depend on any valuation other than $v_i$.
\end{condition}

All of the fairness properties we consider satisfy Condition~\ref{rand-upper-cond2}. The $c$-EF property satisfies Condition~\ref{cond:happy-with-one} for any $c \leq 1$. As mentioned before, $c$-Prop satisfies this for any $c \leq 1$ for subadditive valuations.

Despite being hard in the deterministic setting, \equ admits an efficient randomized protocol (Lemma~\ref{lem:equality-upper}), as described in Section~\ref{sec:lower-bound-approach}. This protocol (let us call it $\Gamma_{EQ}$) enables the \fd randomized protocol that we present in this section.

The standard \equ problem is a decision problem, but \fd is a search problem: we must output a satisfactory allocation if one exists. The search version of \textsc{Equality} is to determine whether two bit strings are equal, and if they are not, to return an index where they differ.

\begin{lemma}[\cite{Nisan94:Communication}]\label{lem:equality-search}
There exists a randomized protocol which solves the search version of \textsc{Equality} for two players and has communication cost $O(\log \ell)$, where $\ell$ is the length of the bit strings.
\end{lemma}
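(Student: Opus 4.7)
My plan is to reduce the search version of \textsc{Equality} to $O(\log \ell)$ invocations of a fast equality tester, organized as a binary search. First I would run the decision protocol $\Gamma_{EQ}$ of Lemma~\ref{lem:equality-upper} on the pair $(x_1, x_2)$; if it reports equality, output ``equal'' and halt. Otherwise I would maintain an interval $I \subseteq [\ell]$, initially $I = [\ell]$, that is known (with high probability) to contain some index where the strings differ. At each step I split $I$ into two nearly equal halves $I_L$ and $I_R$, invoke the equality tester on the restrictions $x_1|_{I_L}$ versus $x_2|_{I_L}$, and recurse into $I_L$ if that test reports inequality and into $I_R$ otherwise (at least one half must still contain a differing index, so this is well defined). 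After $\lceil \log_2 \ell \rceil$ rounds, $I$ collapses to a single index $j$, which I output.

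The recursion depth is $\lceil \log_2 \ell \rceil$ and each step performs a single equality test, so by Lemma~\ref{lem:equality-upper} a naive count already yields $O(\log \ell)$ bits of communication overall. The main obstacle is controlling error accumulation: each call has constant failure probability, so the probability that every one of the $\Theta(\log \ell)$ tests succeeds degrades to an inverse-polynomial quantity, far short of the constant-error guarantee that is required.

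I would resolve this in the public-coin model by driving all of the sub-tests from a single shared random seed rather than re-randomizing each one independently. Concretely, a polynomial-evaluation fingerprint $\sum_{j} x_{ij}\, r^j$ computed in a field $\mathbb{F}_p$ with $p = \mathrm{poly}(\ell)$ and $r$ drawn uniformly from the public randomness lets either party compare any contiguous sub-interval of the strings by exchanging a single partial sum; because all $\log \ell$ comparisons are evaluations of the same polynomial identity, a union bound over them yields constant total error while the communication stays $O(\log \ell)$ bits. Newman's theorem then converts this public-coin protocol to a private-coin one with only an additive $O(\log \ell)$ overhead, giving the claimed bound.
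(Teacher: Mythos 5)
The binary-search skeleton you describe is exactly the paper's, and your diagnosis of the error-accumulation problem is right, but your proposed fix does not deliver the claimed bound: it has a communication gap of a $\log \ell$ factor. A polynomial fingerprint $\sum_j x_{ij} r^j$ over $\mathbb{F}_p$ with $p = \mathrm{poly}(\ell)$ is a single field element, i.e., $\Theta(\log p) = \Theta(\log \ell)$ bits, and one such element must be exchanged at each of the $\Theta(\log \ell)$ levels of the search (knowing the fingerprint of an interval does not determine the fingerprints of its halves). So your protocol uses $\Theta(\log^2 \ell)$ bits, not $O(\log \ell)$. The tension is inherent to this style of fix: to union-bound over the comparisons you need per-comparison error $O(1/\log \ell)$ or better, and any equality test with that soundness must communicate $\Omega(\log\log \ell)$ bits, so correlating the randomness cannot get you below $\Theta(\log \ell \cdot \log\log \ell)$ --- which is also what the naive Chernoff amplification of the $O(1)$-bit tester gives, as the paper notes. (Your union bound is also stated slightly loosely: the intervals queried are adaptive, so one should union-bound over all $O(\ell)$ nodes of the search tree rather than over the $\log \ell$ comparisons actually made; this is harmless since $p = \mathrm{poly}(\ell)$, but worth saying.)

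To actually reach $O(\log \ell)$ one needs the noisy binary search of \cite{Feige94:Computing}, which the paper invokes: each comparison remains an $O(1)$-bit, constant-error equality test, and robustness comes not from boosting each test but from allowing the search to backtrack, with a random-walk/biased-drift argument showing that $O(\log \ell)$ constant-error steps suffice to land at a differing index with constant success probability. As a practical matter, your $O(\log^2 \ell)$ protocol would still suffice for the paper's application in Theorem~\ref{thm:equality-upper} (it gives $O(m^2)$ rather than $O(m)$ randomized communication, still polynomial), but it does not prove Lemma~\ref{lem:equality-search} as stated.
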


The protocol uses a binary search approach. The players first use $\Gamma_{EQ}$ to check if their strings are equal. If so, the protocol terminates. If not, the players split their strings into a left half and a right half. They again use $\Gamma_{EQ}$ to check if their left halves are equal: if they are not, the players recurse on the left half, otherwise they recurse on the right half. This process continues until players isolate a single bit which differs in their bit strings\footnote{The protocol described in~\cite{Nisan94:Communication} is actually slightly stronger: they find the most significant bit where the two strings differ. This is because they have a slightly different goal in that paper, for which finding any bit that differs is not sufficient.}.

Since $\Gamma_{EQ}$ is a randomized protocol, it may return an incorrect answer with probability up to 1/3 (say) each time it is run. If we use $\Gamma_{EQ}$ many times, as required by the above binary search argument, the probability $\Gamma_{EQ}$ returns the correct answer every time may be less than 2/3, which is unacceptable. This makes the protocol a sort of ``noisy binary" search. This can be done with total communication $O(\log \ell \log \log \ell)$ using a standard Chernoff bound argument, but~\cite{Feige94:Computing} shows how this can be done with total communication just $O(\log \ell)$. We refer to protocol from Lemma~\ref{lem:equality-search} as $\Gamma_{EQS}$.


We now present our randomized protocol (Protocol~\ref{pro:equality-upper}). Let $\T = (T_1, T_2...)$ be a list of every possible allocation (not just those with bundles of a fixed size) in an arbitrary order. Condition~\ref{rand-upper-cond2} is necessary for Protocol~\ref{pro:equality-upper} to be well-defined (step 1 in particular), but will not appear in the proof of Theorem~\ref{thm:equality-upper}.

The protocol uses a similar construction to the previous lower bounds in that players have exponentially long bit strings, with each index representing a possible allocation, and where $y_{ij} = 1$ if player $i$ is happy with $T_j$. Similarly to the \equ lower bounds, an index where $y_{1j} = y_{2j}$ implies the existence of a $c$-$P$ allocation: if $y_{1j} = y_{2j} = 1$, both players are happy with that allocation, and if $y_{1j} = y_{2j} = 0$, both players are happy with the reverse allocation by Condition~\ref{cond:happy-with-one}. This is made formal by the following theorem:


\begin{protocol}
\begin{center}
Private inputs: $v_1, v_2$\\
Public inputs: $P,c, \T$
\end{center}
\begin{enumerate}
\item Each player $i$ constructs a bit string $y_i$ as follows: for all $j$ where player $i$ is happy with $T_j$, player $i$ sets $y_{ij} = 1$. For all $j$ where player $i$ is unhappy with $T_j$, player i sets $y_{ij} = 0$.

\item Player 1 sets $x_1 = y_1$ and player 2 sets $x_2 = \overline{y_2}$.

\item The players run $\Gamma_{EQS}$ on $(x_1, x_2)$, which either returns an index $j$ where $x_{1j} \neq x_{2j}$, or determines that the two strings are equal.

\item If the two bit strings are equal, the players declare that no $c$-$P$ allocation exists.

\item If an index $j$ is returned where $x_{1j} = 1$ and $x_{2j} = 0$, the players declare that $T_j$ is a $c$-$P$ allocation.

\item If an index $j$ is returned where $x_{1j} = 0$ and $x_{2j} = 1$, the players declare that $\overline{T}_j$ is a $c$-$P$ allocation.
\end{enumerate}
\caption{Randomized protocol for two players to either find an $P$ allocation or determines that none exists, assuming $P$ satisfies Conditions~\ref{cond:happy-with-one} and \ref{rand-upper-cond2}.}
\label{pro:equality-upper}
\end{protocol}


\begin{theorem}\label{thm:equality-upper}
If $(c,P)$ satisfies Conditions~\ref{cond:happy-with-one} and \ref{rand-upper-cond2}, then Procotol~\ref{pro:equality-upper} either finds a $c$-$P$ allocation or shows that none exists, and uses communication $O(m)$.
\end{theorem}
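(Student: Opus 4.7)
The plan is to verify correctness and the communication bound, with the heart of the argument being a case analysis on the output of $\Gamma_{EQS}$. First I would fix the inputs: since $\T$ enumerates every ordered partition of $M$, we have $|\T| = 2^m$, so $y_1, y_2$ (and hence $x_1, x_2$) are bit strings of length $2^m$. By Condition~\ref{rand-upper-cond2}, player $i$ can construct $y_i$ privately from $v_i$ alone, so step~1 is well-defined without any communication.

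Next I would establish the key correspondence: for each $j$, $x_{1j} \neq x_{2j}$ if and only if $y_{1j} = y_{2j}$ (since $x_2 = \overline{y_2}$). So, an index $j$ returned by $\Gamma_{EQS}$ always identifies an allocation on which both players agree. There are two subcases. If $x_{1j} = 1$ and $x_{2j} = 0$, then $y_{1j} = y_{2j} = 1$, so both players are happy with $T_j$, and $T_j$ is $c$-$P$. If $x_{1j} = 0$ and $x_{2j} = 1$, then $y_{1j} = y_{2j} = 0$, so neither player is happy with $T_j$; by Condition~\ref{cond:happy-with-one} each player must then be happy with $\overline{T_j}$, and so $\overline{T_j}$ is $c$-$P$. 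In either case, the declared allocation is genuinely $c$-$P$.

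Now I would handle the equality case. If $\Gamma_{EQS}$ reports $x_1 = x_2$, then $y_{1j} \neq y_{2j}$ for every $j$, i.e., for every allocation in $\T$ at least one player is unhappy. Since $\T$ lists all allocations of $M$ between the two players, no allocation at all is $c$-$P$, justifying the declaration in step~4. Conversely, if a $c$-$P$ allocation exists, it appears as some $T_j$ with $y_{1j} = y_{2j} = 1$, so $x_{1j} \neq x_{2j}$ and $\Gamma_{EQS}$ cannot report equality. Combining the two directions yields completeness, and correctness of each output follows from the case analysis above; the overall error probability is inherited from $\Gamma_{EQS}$, which is at most $1/3$.

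The only potentially tricky step is bounding the communication, and it is immediate once we notice the string length. Steps~1, 2, 4, 5, and 6 are local; the only bits transmitted are those used by $\Gamma_{EQS}$ on strings of length $2^m$. By Lemma~\ref{lem:equality-search}, this costs $O(\log 2^m) = O(m)$ bits, giving the claimed $O(m)$ total communication cost.
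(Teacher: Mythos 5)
Your proposal is correct and follows essentially the same argument as the paper's proof: the same reduction of correctness to the case analysis on the output of $\Gamma_{EQS}$ (using Condition~\ref{cond:happy-with-one} to handle the index where both players are unhappy with $T_j$), and the same $O(\log 2^m) = O(m)$ communication bound. The only difference is that you spell out the completeness direction explicitly, which the paper leaves implicit; this is a harmless elaboration, not a different approach.
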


\begin{proof} 
Suppose Protocol~\ref{pro:equality-upper} declares that no $c$-$P$ allocation exists: then $x_{1j} = x_{2j}$ for all $j$. This implies that $y_{1j} \neq y_{2j}$ for all $j$. Therefore whenever player 1 is happy with $T_j$, player 2 is unhappy with $T_j$, so no $c$-$P$ allocation exists.

Suppose Protocol~\ref{pro:equality-upper} returns an index $j$ where $x_{1j} \neq x_{2j}$. If $x_{1j} = 1$ and $x_{2j} = 0$, then $y_{1j} = y_{2j} = 1$. Thus both players are happy with $T_j$, so $T_j$ is $c$-$P$. If  $x_{1j} = 0$ and $x_{2j} = 1$, then $y_{1j} = y_{2j} = 0$, so neither player is happy with $T_j$. Then by Condition~\ref{cond:happy-with-one}, both players are happy with $\overline{T_j}$, so $\overline{T}_j$ is $c$-$P$. Therefore Protocol~\ref{pro:equality-upper} correctly finds a $c$-$P$ allocation or determines that none exist.

Since the total number of allocations is $O(2^m)$ when $n=2$, $x_1$ and $x_2$ have length $O(2^m)$. Thus $\Gamma_{EQS}$ has communication cost $O\big(\log (2^m)) = O(m)$. Since all other steps require no communication, Protocol~\ref{pro:equality-upper} uses communication $O(m)$.
\end{proof}


Theorem~\ref{thm:equality-upper} immediately implies the following two theorems:

\begin{theorem}\label{thm:EF-n=2-rand-upper}
For any $c \in [0,1]$, Protocol~\ref{pro:equality-upper} finds an $c$-EF allocation or shows that none exists, and has communication cost $O(m)$. Formally,
\[
R(2, m, EF, c) \in O(m)
\]
\end{theorem}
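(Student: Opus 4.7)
The plan is to derive Theorem~\ref{thm:EF-n=2-rand-upper} as an immediate corollary of the already-established Theorem~\ref{thm:equality-upper}. That meta-theorem says Protocol~\ref{pro:equality-upper} solves \fd with $O(m)$ communication whenever the pair $(c,P)$ satisfies Condition~\ref{cond:happy-with-one} (for every allocation $A$, each player is happy with at least one of $A$ and $\overline{A}$) and Condition~\ref{rand-upper-cond2} (whether player $i$ is happy depends only on $v_i$). So the entire task reduces to checking that $(c,\text{EF})$ satisfies both conditions for every $c\in[0,1]$.

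For Condition~\ref{rand-upper-cond2}, I just observe that the definition of $c$-EF happiness for player $i$, namely $v_i(A_i)\ge c\cdot v_i(A_{\overi})$, references only $v_i$ evaluated on the two bundles of $A$; no other player's valuation appears. So this condition holds trivially (and for any $c$).

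For Condition~\ref{cond:happy-with-one}, I reuse the short argument already noted in Section~\ref{sec:EF-n=2-submod-upper}. Given any allocation $A=(A_1,A_2)$, either $v_i(A_i)\ge v_i(A_{\overi})$, in which case player $i$ is $1$-EF (hence $c$-EF for every $c\le 1$) happy with $A$; or $v_i(A_i)< v_i(A_{\overi})$, in which case player $i$ is $1$-EF happy with $\overline{A}$. In either case player $i$ is $c$-happy with at least one of $A$ and $\overline{A}$, which is exactly Condition~\ref{cond:happy-with-one}.

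With both conditions verified, Theorem~\ref{thm:equality-upper} applies verbatim, giving correctness of Protocol~\ref{pro:equality-upper} for $c$-EF with any $c\in[0,1]$ and communication cost $O(m)$. This yields $R(2,m,\text{EF},c)\in O(m)$. There is no real obstacle: the only nontrivial content sits in Theorem~\ref{thm:equality-upper} itself (which in turn relies on the randomized protocol $\Gamma_{EQS}$ for search-equality over bit strings of length $O(2^m)$), and the present theorem is essentially a type-checking step confirming that the envy-freeness predicate fits into that framework.
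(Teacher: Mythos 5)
Your proposal is correct and matches the paper's approach exactly: the paper also derives this theorem as an immediate corollary of Theorem~\ref{thm:equality-upper} by noting that $c$-EF satisfies Condition~\ref{cond:happy-with-one} for any $c \leq 1$ (via the same ``player $i$ prefers whichever of $A_i$, $A_{\overi}$ she values more'' argument) and Condition~\ref{rand-upper-cond2} trivially. Your write-up just makes explicit the verification that the paper leaves implicit.
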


\begin{theorem}\label{thm:prop-n=2-rand-upper}
For subadditive valuations and any $c \in [0,1]$, Protocol~\ref{pro:equality-upper} finds an $c$-Prop allocation or shows that none exists, and has communication cost $O(m)$. Formally,
\[
R_{subadd}(2, m, Prop, c) \in O(m)
\]
\end{theorem}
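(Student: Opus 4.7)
The plan is to obtain Theorem~\ref{thm:prop-n=2-rand-upper} as a direct corollary of Theorem~\ref{thm:equality-upper}. That theorem already guarantees that Protocol~\ref{pro:equality-upper} correctly solves \fd with communication $O(m)$ whenever the pair $(c,P)$ satisfies Conditions~\ref{cond:happy-with-one} and \ref{rand-upper-cond2}. So the entire proof reduces to verifying these two conditions for $P = \text{Prop}$, $c \in [0,1]$, and subadditive valuations.

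First I would check Condition~\ref{rand-upper-cond2}. Player $i$ is $(c,\text{Prop})$-happy with allocation $A$ precisely when $v_i(A_i) \geq \tfrac{c}{n} v_i(M)$, a condition that mentions only $v_i$ and no other valuation. So this condition is immediate.

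Next I would verify Condition~\ref{cond:happy-with-one}, reusing exactly the calculation already carried out in Section~\ref{sec:prop-n=2-submod-upper} (which only used subadditivity, not submodularity). For any allocation $A = (A_1, A_2)$ and any $c \in [0,1]$, subadditivity gives
\[
\max\bigl(v_i(A_1), v_i(A_2)\bigr) \geq \tfrac{1}{2}\bigl(v_i(A_1) + v_i(A_2)\bigr) \geq \tfrac{1}{2} v_i(A_1 \cup A_2) = \tfrac{1}{2} v_i(M) \geq \tfrac{c}{2} v_i(M),
\]
so player $i$ is $(c,\text{Prop})$-happy with whichever of $A$ and $\overline{A}$ assigns her the bundle $\argmax(v_i(A_1), v_i(A_2))$. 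Hence Condition~\ref{cond:happy-with-one} holds.

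With both conditions verified, Theorem~\ref{thm:equality-upper} applies directly: Protocol~\ref{pro:equality-upper} either finds a $c$-Prop allocation or certifies that none exists, using $O(m)$ communication, which establishes $R_{subadd}(2,m,\text{Prop},c) \in O(m)$. There is no real obstacle here; the only subtlety worth noting is that Condition~\ref{cond:happy-with-one} genuinely fails for proportionality with arbitrary general valuations (e.g.\ a player with $v_i(A_1) = v_i(A_2) = 0$ but $v_i(M) > 0$), which is why the subadditivity hypothesis cannot be dropped and why the analogous statement for general valuations does not follow — indeed it is ruled out by Theorem~\ref{thm:prop-n=2-rand-lower}.
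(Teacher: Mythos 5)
Your proposal is correct and matches the paper's argument exactly: the paper also derives this theorem as an immediate corollary of Theorem~\ref{thm:equality-upper}, noting that Condition~\ref{rand-upper-cond2} holds trivially and that Condition~\ref{cond:happy-with-one} holds for $c$-Prop under subadditivity via the same $\max\bigl(v_i(A_1),v_i(A_2)\bigr) \geq \tfrac{1}{2}v_i(M)$ calculation from Section~\ref{sec:prop-n=2-submod-upper}. Nothing is missing.
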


Since $R_{submod}(n,m,P,c) \leq R_{subadd}(n,m,P,c) \leq R(n,m,P,c)$, this settles the randomized communication complexities for all settings with two players. The reader can verify that Table~\ref{tbl:results} is now complete.

\section{Maximin share}\label{sec:maximin}

Finally, we consider a different fairness property: maximin share. A player's maximin share (MMS) is the maximum value as she could guarantee herself if she gets to divide the items into $n$ bundles, but chooses last. An allocation $A$ is $c$-MMS for $c \in [0,1]$ if each player receives at least a $c$-fraction of her MMS. We use ``MMS" to refer to both each player's maxmin share and the fairness property itself. Formally,
\begin{definition}
An allocation $A$ is $c$-MMS if for every player $i$,
\[
v_i(A_i) \geq \max\limits_{A' = (A'_1...A'_n)}\  \min_{j \in [n]} v_i(A'_j)
\]
where $A'$ ranges over all possible allocations.
\end{definition}

In this section, we prove exponential lower bounds for MMS in two settings: for general valuations and any $c > 0$, and for submodular valuations when $c = 1$. Both lower bounds hold even for two players, and for randomized protocols. Both lower bounds will rely on reductions from \disj.


\subsection{Lower bound for general valuations and any $c > 0$}


In this section we show that for general valuations, $c$-MMS is hard for any $c > 0$, even for randomized protocols and even if there are only two players. We will reduce from 1-Prop, which we know to be hard in this setting (randomized, $n=2$, general valuations) from Theorem~\ref{thm:prop-n=2-rand-lower}. We say that an allocation $A$ is over a set of items $M$ to mean that $A_1 \cup A_2 = M$. Also, we say that an allocation $A$ is $c$-Prop for valuations $v_1, v_2$ if $v_i(A_i) \geq \frac{1}{2} v_i(M)$ for both $i$. Since we will be reducing between two different \fd instances, we will be dealing with allocations over different sets of items and different sets of valuations.

\begin{theorem}\label{thm:MMS-general}
For two players with general valuations and any $c>0$, any randomized protocol which determines whether a $c$-MMS allocation exists requires an exponential amount of communication. Specifically,
\[
R(2,2k + 4,\text{MMS},c) \in \Omega\left( \binom{2k}{k} \right)
\]
for any $c > 0$.
\end{theorem}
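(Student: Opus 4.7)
The plan is to reduce from the randomized hardness of $1$-Prop for two players with general valuations established in Theorem~\ref{thm:prop-n=2-rand-lower}, which gives an $\Omega\!\left(\binom{2k}{k}\right)$ lower bound. Given a $1$-Prop instance $(v'_1, v'_2)$ on $M' = [2k]$, each player $i$ locally constructs a new valuation $v_i$ on the enlarged item set $M = M' \cup \{g_a, g_b, g_c, g_d\}$, with no additional communication beyond what the MMS protocol itself uses. The four extra items act as a gadget whose purpose is to \emph{calibrate} the maximin share, so that the $c$-MMS problem in the constructed instance becomes equivalent to the original $1$-Prop problem.

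Concretely, I would set $v_i(S)$ to combine the $1$-Prop-happiness condition $v'_i(S \cap M') = 1$ with contributions from the four extra items, grouped into two pairs $\{g_a, g_b\}$ and $\{g_c, g_d\}$. The weights on the extra items (together with a pair-split bonus) are chosen so that player $i$ has a canonical ``fallback'' partition, depending only on $v'_i$, that places all of $M'$ on one side (where $v'_i(M') = 1$ holds automatically in the Theorem~\ref{thm:prop-n=2-rand-lower} construction) and distributes the extra items across the two sides in a way that achieves the target MMS value $V$ on both halves. This uniformizes MMS$_i = V$ across all possible $v'_i$, which is the crucial step to make the $c$-MMS constraint meaningful for every $c > 0$, since otherwise MMS$_i$ could be $0$ for some player (making $c$-MMS trivially satisfiable for them and breaking the reduction).

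The final step is to show the equivalence: a $c$-MMS allocation exists in the constructed instance if and only if a $1$-Prop allocation exists in the original. In the forward direction, given a $1$-Prop allocation $(A'_1, A'_2)$, I would extend it by splitting each of the two pairs of extra items across the two sides, producing a partition on which both players attain $v_i(A_i) = V$. In the reverse direction, the constraint $v_i(A_i) \geq cV > 0$ combined with the near-binary structure of $v_i$ (its range is designed to be essentially $\{0, V\}$ up to small additive contributions) forces the ``main'' clause of $v_i$ to fire, which by construction requires $v'_i(A_i \cap M') = 1$ for both players and yields a $1$-Prop allocation. The main obstacle is engineering the extra-item gadget so that MMS$_i = V$ holds uniformly in $v'_i$ while the joint $c$-MMS condition is not trivially satisfied by the fallback partition alone; this is exactly why four items (two pairs) are needed, and not fewer. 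Once this is in place, any randomized protocol solving $c$-MMS with $o\!\left(\binom{2k}{k}\right)$ communication would, via the local reduction, solve $1$-Prop with the same communication, contradicting Theorem~\ref{thm:prop-n=2-rand-lower}.
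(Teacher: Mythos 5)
Your high-level plan is the same as the paper's: reduce from the randomized hardness of $1$-Prop for two players with general valuations (Theorem~\ref{thm:prop-n=2-rand-lower}), append four extra items forming two pairs per player so that each player's maximin share is pinned to a known value, and argue that a $c$-MMS allocation must then ``fire the main clause'' and recover a $1$-Prop allocation. However, the step you defer as ``the main obstacle'' --- engineering the gadget so that the fallback partitions calibrate MMS$_i$ uniformly \emph{without} letting the joint $c$-MMS condition be satisfied by the gadget alone --- is precisely the content of the proof, and your proposal does not resolve it. As written, nothing prevents an allocation in which player 1 takes one of her high-value pairs and player 2 takes one of hers, giving both players value $V \geq cV$ from the extra items while telling us nothing about the original instance; this would make the reverse direction false.

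The paper's resolution has two interlocking pieces you would need to supply. First, the two players' high-value pairs are chosen to pairwise intersect: player 1 gets full value $\alpha_1$ on any superset of $\{g_1,g_2\}$ or $\{g_3,g_4\}$, while player 2 gets $\alpha_2$ on any superset of $\{g_1,g_3\}$ or $\{g_2,g_4\}$, so at most one player can use the ``free'' route; moreover, the triggers for the two players' fallback clauses ($\{g_1,g_4\}$ with $g_2,g_3$ excluded for player 1, and $\{g_2,g_3\}$ with $g_1,g_4$ excluded for player 2) are arranged so that if one player takes a free pair, the other cannot trigger either of her clauses and is stuck at value $0$, forcing both players into the non-free clause. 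Second, the non-free clause is defined as $\min\bigl(v_i(S\setminus X),\, c\alpha_i\bigr)$ with $\alpha_i = \frac{1}{2c}v_i(M)$: the cap at $c\alpha_i$ keeps the MMS at exactly $\alpha_i$ (so the gadget calibration survives), and the $c$-MMS threshold $c\alpha_i = \frac{1}{2}v_i(M)$ in that clause is \emph{exactly} the $1$-Prop condition, which is what makes the equivalence exact for every $c>0$ and for arbitrary (not just binary) source valuations. Your sketch relies instead on the near-binary structure of the specific instances from Theorem~\ref{thm:prop-n=2-rand-lower}, which could be made to work, but you would still need to exhibit an explicit interleaving of the two pairs with the properties above before the reduction is a proof.
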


\begin{proof}
Consider an arbitrary instance of \fd for two players with general valuations $v_1, v_2$, any $c > 0$, and some set of items $M$. We want to know whether there exists an allocation $A$ over $M$ which is 1-Prop for $v_1, v_2$. Let $\alpha_i = \mfrac{1}{2c}v_i(M)$: then $A$ is 1-Prop if and only if $v_i(A_i) \geq c\alpha_i$ for both $i$.

We will create a second instance of \fd as follows. Add four items $g_1, g_2, g_3, g_4$, let $X = \{g_1, g_2, g_3, g_4\}$, and define $M' = M \cup X$. Let $Y_1 = \{g_1, g_2\}, Y_2 =\{g_3, g_4\}, Z_1 = \{g_1, g_3\}$, and $Z_2 = \{g_2, g_4\}$. The set of players is the same. Define the following valuations $v_1'$ and $v_2'$ over $M'$:
\[ v_1'(S) =
   \begin{cases} 
      \alpha_1 & \textnormal{if}\ \ Y_1 \subseteq S\ \ \textnormal{or}\ \ Y_2\subseteq S\\
      \min(v_1(S \backslash X), c\alpha_1) & \textnormal{if}\ \ \{g_1, g_4\}\subseteq S
      		\ \ \textnormal{and}\ \ g_2, g_3 \not\in S\\
      0 & \textnormal{otherwise}
   \end{cases}
\]
\[ v_2'(S) =
   \begin{cases} 
      \alpha_2 & \textnormal{if}\ \ Z_1 \subseteq S\ \ \textnormal{or}\ \ Z_2 \subseteq S\\
      \min(v_2(S \backslash X), c\alpha_2) & \textnormal{if}\ \ \{g_2, g_3\}\subseteq S
      		\ \ \textnormal{and}\ \ g_1,g_4 \not\in S\\
      0 & \textnormal{otherwise}
   \end{cases}
\]
We first claim that each player $i$'s MMS is exactly $\alpha_i$. Since $c\leq 1$, we have $v_i'(A_i') \leq \alpha_i$ for all $i$ and for every allocation $A'$ over $M'$. Thus each player's MMS is at most $\alpha_i$. In the partition $(Y_1, Y_2 \cup M)$, player 1 has value $\alpha_1$ for both bundles, so player 1's MMS is at least $\alpha_1$. Similarly, player 2 has value $\alpha_2$ for both bundles in the partition $(Z_1, Z_2\cup M)$. Thus each player $i$'s MMS is exactly $\alpha_i$. 

Therefore an allocation $A'$ over $M'$ is $c$-MMS for $v_1', v_2'$ if and only if $v'_i(A'_i) \geq c\alpha_i$ for both $i$. It remains to show that there exists such an allocation $A'$ over $M'$ if and only if an there exists a 1-Prop allocation for $v_1, v_2$ over $M$.

Suppose $A$ is 1-Prop allocation over $M$ for $v_1, v_2$: then $v_i(A_i) \geq c\alpha_i$ for both $i$. Let $A' = (A_1 \cup X, A_2)$: then $v'_i(A'_i) \geq v_i(A_i) \geq c\alpha_i$, so $A'$ is $c$-MMS for $v_1', v_2'$ over $M'$.

Now suppose $A'$ is a $c$-MMS allocation for $v_1', v_2'$ over $M'$. Since $c>0$, we have $v'_i(A'_i) \geq c\alpha_i > 0$ for all $i$. For all $j$ and $j'$, we have $Y_j \cap Z_{j'} \neq \emptyset$. Thus if player 1 receives $Y_1$ or $Y_2$: then player 2 player 2 cannot receive $Z_1$ or $Z_2$. Furthermore, player 2 cannot receives $\{g_2, g_3\}$, so $v_2'(A'_2) = 0$, which is a contradiction. Therefore player 1 cannot receive either $Y_1$ or $Y_2$. Similarly, if player 2 receives $Z_1$ and $Z_2$, player 1 will have value 0. Thus player 2 does not receive $Z_1$ or $Z_2$.

Therefore $v_1(A_1) = 0$ unless $\{g_1, g_4\} \subseteq A_1$, and $v_2(A_2) = 0$ unless $\{g_2, g_3\} \subseteq A_2$. Therefore $\{g_1, g_4\} \subseteq A_1$ and $\{g_2, g_3\} \subseteq A_2$. Thus $v'_i(A'_i) = \min(v_i(A'_i \backslash X), c\alpha_i)$ for both $i$. Since $v'_i(A'_i) \geq c\alpha_i$ for all $i$, we have $v_i(A'_i \backslash X) \geq c\alpha_i$ for both $i$.

Define an allocation $A$ where $A_i = A'_i \backslash X$. It is clear that $A$ is an allocation over $M$. Then $v_i(A_i) \geq c\alpha_i$ for both $i$, so $A$ is a 1-Prop allocation for $v_1, v_2$ over $M$.

Therefore there exists a $c$-MMS allocation for $v_1', v_2'$ over $M'$ if and only if there exists 1-Prop allocation over for $v_1, v_2$ over $M$. This completes the reduction, and shows that for any $c > 0$ and any number of items $m$,
\[
R(2,m+4,\text{MMS},c) \geq R(2,m,\text{Prop},1)
\]
Therefore by Theorem~\ref{thm:prop-n=2-rand-lower}, we have $R(2,2k + 4,\text{MMS},c) \in \Omega\left( \mbinom{2k}{k} \right)$.
\end{proof}

\subsection{Lower bound for submodular valuations and $c=1$}

We now show that even for two players with submodular valuations, 1-MMS is hard. This does not hold for $c$-MMS for any $c$: in fact, a $\frac{1}{3}$-MMS is guaranteed to exist for submodular valuations~\cite{Ghodsi17:Fair}.

\begin{theorem}
For two players with submodular valuations, any randomized protocol which determines whether a $1$-MMS allocation exists requires an exponential amount of communication. Specifically
\[
R(2,2k,\text{MMS},1) \in \Omega \left( \binom{2k}{k} \right)
\]
\end{theorem}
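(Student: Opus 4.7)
I would reduce from \disj on bit strings of length $N = \binom{2k}{k} - 4 \in \Omega\!\left(\binom{2k}{k}\right)$. Given $x_1, x_2 \in \{0,1\}^N$, I build a two-player \fd instance on $M = [2k]$ with submodular valuations $v_1, v_2$ such that a $1$-MMS allocation exists iff there exists $j$ with $x_{1,j} = x_{2,j} = 1$. Combined with Lemma~\ref{lem:disjoint-lower}, this yields the claimed $\Omega\!\left(\binom{2k}{k}\right)$ randomized lower bound. The construction is structurally similar to that of Theorem~\ref{thm:EF-n=2-submod-lower} but requires an extra ``fallback'' gadget so that each player's maximin share equals $v_i(M)$ \emph{regardless} of $x_i$.

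\textbf{Construction.} Pick size-$k$ bundles $L_1, L_2 \subseteq M$ so that the four bundles $L_1, R_1 = M\setminus L_1, L_2, R_2 = M\setminus L_2$ are all distinct (possible for $k \ge 2$). Let $\T' = (T'_1, \dots, T'_N)$ list every ordered balanced partition of $M$ \emph{except} the four allocations $(L_1, R_1), (R_1, L_1), (L_2, R_2), (R_2, L_2)$; so none of $L_1, R_1, L_2, R_2$ appears in any element of $\T'$, and $|\T'| = N$. Declare a size-$k$ bundle $S$ ``good for player~1'' iff $S \in \{L_1, R_1\}$ or $S = T'_{j,1}$ for some $j$ with $x_{1,j} = 1$; similarly for player~2 with $\{L_2, R_2\}$ and $T'_{j,2}$. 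Define
\[
v_i(S) = \begin{cases}
2|S| & \text{if } |S| < k,\\
2k & \text{if } |S| > k,\\
2k & \text{if } |S| = k \text{ and } S \text{ is good for } i,\\
2k-1 & \text{if } |S| = k \text{ and } S \text{ is not good for } i.
\end{cases}
\]

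\textbf{Key verifications.} (i) \emph{Submodularity.} The marginal of adding an item to $S$ is $2$ when $|S| < k-1$, is $1$ or $2$ when $|S| = k - 1$ (depending on whether the resulting size-$k$ bundle is good), is $0$ or $1$ when $|S| = k$, and is $0$ when $|S| > k$; the scaling $2|S|$ (rather than $|S|$) for small sets is chosen precisely so that marginals at smaller sets weakly dominate those at larger sets across every size transition. A direct case-analysis handles all inclusions $S \subseteq T$. (ii) \emph{MMS.} The fallback partition $(L_i, R_i)$ gives $v_i(L_i) = v_i(R_i) = 2k = v_i(M)$, so MMS$_i = 2k$ regardless of $x_i$. (iii) \emph{Reduction.} An allocation is $1$-MMS iff both players receive a bundle of value $2k$. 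Unbalanced allocations are immediately ruled out, as the smaller side has value $< 2k$. Among balanced allocations, any partition involving a fallback bundle forces the other player to receive a bundle outside their fallback set and outside $\T'$, hence valued at $2k - 1$; the only remaining balanced allocations are $(T'_{j,1}, T'_{j,2})$, which are $1$-MMS exactly when $x_{1,j} = x_{2,j} = 1$.

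\textbf{Main obstacle.} The technical heart is (i): the valuation has a genuine ``jump'' at $|S| = k$, and the marginal at $|S| = k-1$ can be as large as $2$, so I must verify that no superset ever yields a larger marginal. The reason I scale to $2|S|$ for small sets (rather than copying the $3|S|$ structure from Theorem~\ref{thm:EF-n=2-submod-lower}) is to make submodularity survive the jump while keeping MMS exactly at $v_i(M) = 2k$, so that $1$-MMS is nontrivial and does not simply reduce to already-easy cases such as $c$-EF. Once submodularity is established, the rest of the reduction follows from a clean case-analysis over the fallback bundles and $\T'$.
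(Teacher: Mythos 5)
Your proof is correct and takes essentially the same route as the paper's: a reduction from \disj using threshold valuations over balanced partitions, with one reserved fallback partition per player that is excluded from the allocation list and made unattractive to the other player, which pins each player's maximin share at $v_i(M)$ and forces any $1$-MMS allocation to be some $T'_j$ with $x_{1j}=x_{2j}=1$. The only cosmetic differences are the scaling constant ($2|S|$ versus the paper's $3|S|$, both of which satisfy the size-monotone-marginal check for submodularity) and the abstract rather than explicit choice of fallback bundles.
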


\begin{proof}
We reduce from \disj. Given bit strings $x_1$ and $x_2$ of length $\mbinom{2k}{k}$, we construct an instance of \fd as follows. Let $N = [2]$ be the set of players, and let $M = [2k]$ be the set of items. We define $Y_1 = \{1...k\}, Z_1 = \{k+1...2k\}, Y_2 = \{2...k+1\}$, and $Z_2 = \{1\} \cup \{k+2...2k\}$. 

Let $\T = (T_1, T_2...T_{|\T|})$ be an arbitrary ordering of all of the allocations which give each player $k$ items: for any allocation $A = (A_1, A_2)$ where $|A_1| = |A_2| = k$, there exists $j$ where $A_i = T_{ji}$ for both $i$. Note that $|\T| = \mbinom{2k}{k}$. One exception: none of $(Y_1, Z_1), (Z_1, Y_1), (Y_2, Z_2)$, or $(Z_2, Y_2)$ appear in $\T$.

Player $i$'s valuation is given by:
\[ v_i(S) =
   \begin{cases}  
      3|S| & \textnormal{if}\ \ |S| < k\\
      3k & \textnormal{if}\ \ |S| > k\\
      3k & \textnormal{if}\ \ S = Y_i \ \ \textnormal{or}\ \ S = Z_i\\   
      3k - 1 & \textnormal{if}\ \ S = Y_{\overline{i}}
      		 \ \ \textnormal{or}\ \ S = Z_{\overline{i}}\\   
      3k & \textnormal{if}\ \ \exists j\ \ S = T_{ji} \ \ \textnormal{where}\ \ x_{ij} = 1\\
      3k-1 & \textnormal{if}\ \ \exists j\ \ S = T_{ji} \ \ \textnormal{where}\ \ x_{ij} = 0
   \end{cases}
\]
These valuations are submodular by the same argument as in the proof of Theorem~\ref{thm:EF-n=2-submod-lower}. Observe that when $|S| = k$, exactly one of the last four cases occur.

Since $v_i(S) \leq 3k$ for all $S$, player $i$'s MMS is at most $3k$. For both $i$, $(Y_i, Z_i)$ is a valid allocation. Furthermore, player $i$ has value $3k$ for both bundles in that allocation. Thus each player $i$'s MMS is at least $3k$, so both players' MMS are exactly $3k$.

Suppose that $(x_1, x_2)$ is a no-instance of \disj: then there exists $j$ where $x_{1j} = x_{2j} = 1$. Consider the allocation $T_j = (T_{j1}, T_{j2})$. Then for both $i$, $v_i(T_{ji}) = 3k$, so the allocation $T_j$ satisfies $1$-MMS.

Suppose that $(x_1, x_2)$ is a yes-instance of \disj: then for every $j$, there exists $i$ where $x_{ij} = 0$. Suppose a 1-MMS allocation $A$ exists. We first claim that for both $i$,  $A\neq (Y_i, Z_i)$ and $A\neq (Z_i, Y_i)$ for both $i$. This is because player $\overline{i}$ will have value $3k-1$, which is less than her MMS. Suppose there is a player $i$ where $|A_i | < k$: then $v_i(A_i) < 3k$, which is impossible. Thus $|A_1 | = |A_2 | = k$.

Therefore there exists $j$ where $A = T_j$. But since $(x_1, x_2)$ is a yes-instance of \disj, there exists $i$ where $x_{ij} = 0$, so $v_i(T_{ji}) = 3k - 1$. This is a contradiction, so no 1-MMS allocation exists.
\end{proof}

\section{Conclusion}\label{sec:conclusion}

In this paper, we proposed a simple model for the communication complexity of fair division, and solved it completely, for every combination of five parameters: number of players, valuation class, fairness property $P$, constant $c$, and deterministic vs. randomized complexity. 

More broadly, communication complexity is an example of topic that has been well-studied in algorithmic game theory but not in fair division, despite having a natural fair division analog. We wonder if there are other such topics.

\section*{Acknowledgements}

This research was supported in part by NSF grant CCF-1524062, a Google Faculty Research Award, and a Guggenheim Fellowship.

\bibliographystyle{plain}
\bibliography{refs}

\end{document}